\newtheorem{theorem}{Theorem}[section]
\newtheorem{lemma}[theorem]{Lemma}
\newtheorem{corollary}[theorem]{Corollary}
\newtheorem{proposition}[theorem]{Proposition}
\newtheorem{definition}{Definition}[section]
\newtheorem{problem}{Problem}[section]
\newtheorem{conjecture}{Conjecture}
\newcommand{\ketbra}[2]{\ensuremath{\ket{#1}\!\bra{#2}}}
\DeclarePairedDelimiter\rbra{\lparen}{\rparen}
\DeclarePairedDelimiter\sbra{\lbrack}{\rbrack}
\DeclarePairedDelimiter\cbra{\{}{\}}
\DeclarePairedDelimiter\abs{\lvert}{\rvert}
\DeclarePairedDelimiter\Abs{\lVert}{\rVert}
\DeclarePairedDelimiter\ket{\lvert}{\rangle}
\DeclarePairedDelimiter\bra{\langle}{\rvert}
\newcommand{\set}[2] {\left\{\, #1 \colon #2 \,\right\}}
\newcommand{\tr} {\operatorname{tr}}
\newcommand{\poly} {\operatorname{poly}}
\newcommand{\diag} {\operatorname{diag}}
\DeclarePairedDelimiter\parens{\lparen}{\rparen}
\DeclarePairedDelimiter\norm{\lVert}{\rVert}
\DeclarePairedDelimiter\braces{\lbrace}{\rbrace}
\newcommand{\calA}{\mathcal{A}}
\newcommand{\calE}{\mathcal{E}}
\newcommand{\calH}{\mathcal{H}}
\newcommand{\calP}{\mathcal{P}}
\newcommand{\calZ}{\mathcal{Z}}
\newcommand{\footremember}[2]{%
    \footnote{#2}
    \newcounter{#1}
    \setcounter{#1}{\value{footnote}}%
}
\title{A List of Complexity Bounds for Property Testing \\ by Quantum Sample-to-Query Lifting}
\author{
    Kean Chen \footremember{1}{Kean Chen is with the Department of Computer and Information Science, University of Pennsylvania, Philadelphia, United States (e-mail: \href{mailto:keanchen.gan@gmail.com}{\nolinkurl{keanchen.gan@gmail.com}}).}
    \and Qisheng Wang \footremember{2}{Qisheng Wang is with the School of Informatics, University of Edinburgh, Edinburgh, United Kingdom (e-mail: \href{mailto:QishengWang1994@gmail.com}{\nolinkurl{QishengWang1994@gmail.com}}).}
    \and Zhicheng Zhang \footremember{3}{Zhicheng Zhang is with the Centre for Quantum Software and Information, University of Technology Sydney, Sydney, Australia (e-mail: \href{mailto:iszczhang@gmail.com}{\nolinkurl{iszczhang@gmail.com}}).}
}
\date{}
\begin{document}

\maketitle

\begin{abstract}
    Quantum sample-to-query lifting, a relation between quantum sample complexity and quantum query complexity presented in \hyperlink{cite.WZ25b}{Wang and Zhang (\textit{SIAM J.\ Comput.}\ 2025)}, was significantly strengthened by \hyperlink{cite.TWZ25}{Tang, Wright, and Zhandry (2025)} to the case of state-preparation oracles. 
    In this paper, we compile a list of quantum lower and upper bounds for property testing that are obtained by quantum sample-to-query lifting. 
    The problems of interest include testing properties of probability distributions and quantum states, such as entropy and closeness. 
    This collection contains new results, as well as new proofs of known bounds. 
    In total, we present 49 complexity bounds, where 41 are new and 18 are (near-)optimal. 
\end{abstract}

\newpage
\tableofcontents
\newpage

\section{Introduction}

Quantum sample-to-query lifting \cite{WZ25b,WZ25} reveals a novel connection between quantum query complexity and quantum sample complexity.
This discovery turns out to be useful for establishing new complexity bounds in quantum property testing \cite{MdW16}. 
The key observation is that multiple samples of a quantum state $\rho$ (sample complexity) can simulate the quantum unitary transformation $e^{-i\rho t}$ (query complexity) by density matrix exponentiation \cite{LMR14,KLL+17,GKP+25}.\footnote{The idea of simulating unitary transformations by samples of quantum states appeared even earlier in quantum communication complexity \cite{HL11}.
The case of pure states was used in quantum cryptography \cite{ARU14,JLS18,Qia24,GZ25} and the case of mixed states was used in quantum algorithms \cite{GP22,WZ24}.}

Recently, in \cite{TWZ25}, the quantum sample-to-query lifting has been thoroughly strengthened to apply to the very general case where the quantum oracle prepares a purification of the mixed quantum state. 
This input model is called \textit{purified quantum query access} \cite{GL20}, originally considered in quantum computational complexity \cite{Wat02}, and now is the standard quantum query input model for the property testing of quantum states. 

\begin{definition}[Purified quantum query access] \label{def:purified-access}
    A purified quantum query access oracle for an unknown mixed quantum state $\rho$ is described by a unitary operator $U$ acting on two subsystems $\mathsf{A}$ and $\mathsf{B}$ such that
    \[
    \rho = \tr_{\mathsf{B}}\rbra*{U_{\mathsf{AB}} \cdot \ketbra{0}{0}_{\mathsf{AB}} \cdot U_{\mathsf{AB}}^\dag}.
    \]
    In particular, a query to the oracle $U$ means a query to (controlled-)$U$ or (controlled-)$U^\dag$. 
\end{definition}

A promise problem for quantum state testing is described by a pair of disjoint sets of quantum states $\mathcal{P} = \rbra{\mathcal{P}^{\textup{yes}}, \mathcal{P}^{\textup{no}}}$.
The task is to determine with probability at least $2/3$ whether an unknown quantum state $\rho \in \mathcal{P}^{\textup{yes}}$ or $\rho \in \mathcal{P}^{\textup{no}}$, given the promise that it is in either case. 
The sample complexity of $\mathcal{P}$, denoted as $\mathsf{S}\rbra{\mathcal{P}}$, is the minimum number of samples of $\rho$ for solving $\mathcal{P}$. 
The query complexity of $\mathcal{P}$, denoted as $\mathsf{Q}\rbra{\mathcal{P}}$, is the minimum number of queries to the purified quantum query access oracle for solving $\mathcal{P}$. 

Using the above notations, \cite[Theorem 1.5]{TWZ25} can be rephrased as the following theorem. 

\begin{theorem}[Quantum sample-to-query lifting for purified quantum query access \cite{TWZ25}] \label{thm:main}
    Let $\mathcal{P}$ be a promise problem for quantum state testing. 
    Then, 
    \[
        \mathsf{Q}\rbra{\mathcal{P}} = \Omega\rbra*{\sqrt{\mathsf{S}\rbra{\mathcal{P}}}}. 
    \]
\end{theorem}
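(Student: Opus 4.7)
The plan is to establish the contrapositive, $\mathsf{S}(\mathcal{P}) = O(\mathsf{Q}(\mathcal{P})^{2})$, by turning any $Q$-query tester into an $O(Q^{2})$-sample tester. Since $\mathcal{P}$ depends only on $\rho$, any valid query algorithm must produce the same output distribution on every purification $U$ of $\rho$; this lets me fix a canonical purification determined by $\rho$ alone (for instance, the symmetric-subspace purification used in \cite{TWZ25}) and then simulate queries to that canonical oracle using samples.

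The simulation is driven by density matrix exponentiation \cite{LMR14,KLL+17,GKP+25}: $T$ copies of $\rho$ implement $e^{-i \rho t}$ to precision $\epsilon$ whenever $T = \Omega(t^{2}/\epsilon)$. Combining this with block-encoding and quantum signal processing yields a block-encoding of $\rho$, which, after embedding into the symmetric subspace, can be turned into a coherent implementation of a single (controlled) query to the chosen purifying isometry, up to operator-norm error $\epsilon$. Setting $\epsilon = \Theta(1/Q)$, each of the $Q$ substituted queries costs $O(Q)$ samples, and a standard hybrid argument across the $Q$ substitutions keeps the total trace-distance error on the final algorithmic state bounded by a constant. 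Summing gives $O(Q^{2})$ samples overall, hence $\mathsf{Q}(\mathcal{P}) = \Omega(\sqrt{\mathsf{S}(\mathcal{P})})$.

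The hard part will be the \emph{faithfulness} of the per-query simulation. The algorithm may apply (controlled-)$U$ and (controlled-)$U^{\dag}$ to arbitrary superpositions entangled with its workspace, so reproducing only the reduced state $\rho$ is not enough: the substitute must act as the purifying isometry on every pure-state input to the query register, and the ancillas consumed by density matrix exponentiation must remain coherent across successive queries. This is precisely where \cite{TWZ25} strengthens the earlier lifting results of \cite{WZ25b,WZ25}; controlled queries in particular force one to preserve coherence between the ``do nothing'' branch and the ``apply $U$'' branch, which I would implement via qubitization of the density matrix exponentiation block-encoding, together with careful book-keeping of the ancilla structure so that the hybrid argument really does yield trace-distance error $O(1)$ between the true and simulated final states.
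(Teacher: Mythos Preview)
The paper does not itself prove \cref{thm:main}; it quotes the result from \cite{TWZ25} and only describes that proof indirectly (noting the use of the Schur transform and the resulting $\widetilde{O}(n^4)$ gate count). So the comparison is between your sketch and what \cite{TWZ25} is reported to do.

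Your high-level plan (prove the contrapositive by replacing each of the $Q$ queries with an $O(Q)$-sample gadget and control the accumulated error by a hybrid argument) is correct, and you correctly isolate the crux: the gadget must coherently simulate the purifying \emph{isometry} (and its controlled and inverse versions) on arbitrary entangled inputs, not merely reproduce the reduced state $\rho$. The gap is in the mechanism you propose for that crux. Density matrix exponentiation together with QSP turns samples into a block-encoding of $\rho$; that is exactly the content of the earlier lifting in \cite{WZ25b,WZ25} and yields \cref{corollary:lifting-block-encoding}, not \cref{thm:main}. The step you then need---going from a block-encoding of $\rho$ to a faithful simulation of a purification oracle $U$ with $U\ket{0}_{\mathsf{AB}}=\ket{\psi_\rho}$---is precisely the new contribution of \cite{TWZ25}, and your phrase ``after embedding into the symmetric subspace'' does not supply it. A naive route (apply a block-encoded $\sqrt{\rho}$ to half of a maximally entangled state and amplitude-amplify) costs an extra $\sqrt{d}$, destroying the dimension-free bound; the paper even flags that the theorem is ``somewhat surprising'' because universal purification is impossible. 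As described here, \cite{TWZ25} does not post-process $e^{-i\rho t}$ at all: it applies the Schur transform directly to the bank of sample copies $\rho^{\otimes n}$ and uses the resulting representation-theoretic structure to synthesize the purifying oracle. Your sketch conflates the two techniques and leaves the genuinely hard step as a placeholder.
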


Prior to \cref{thm:main}, similar results about the sample complexity in terms of mixed states and their purifications have appeared in the quantum property testing literature \cite[Theorem 5.2]{SW22} and \cite{CWZ24}. 
The relation between quantum sample complexity and quantum query complexity revealed in \cref{thm:main} is somewhat surprising, since universal controllization \cite{AFCB14,GST24} and purification \cite{LDCL25} are impossible and unitary inversion requires exponential resources \cite{FK18,QDS+19a,QDS+19b,CSM23,GST24,CML+24,OYM24,YKS+24,SHH25,CYZ25}.

As a corollary, \cref{thm:main} reproduces (and improves) the quantum sample-to-query lifting theorem in \cite[Theorem 1.1]{WZ25b} for the diamond quantum query complexity $\mathsf{Q}_{\diamond}^{\alpha}\rbra{\mathcal{P}}$, which states that $\mathsf{Q}_{\diamond}^{\alpha}\rbra{\mathcal{P}} \geq \widetilde{\Omega}\rbra{\sqrt{\mathsf{S}\rbra{\mathcal{P}}}}$ for any constant $\alpha > 1$.\footnote{$\widetilde{O}\rbra{\cdot}$ and $\widetilde{\Omega}\rbra{\cdot}$ suppress the polylogarithmic factors.}
Here, $\mathsf{Q}_{\diamond}^{\alpha}\rbra{\mathcal{P}}$ is the minimum number of queries to a unitary oracle $U$ with upper left corner block $\rho/\alpha$, where $\rho$ is the quantum state to be tested and $\alpha \geq 1$ is a parameter. 
In the language of block-encoding \cite{GSLW19}, the unitary oracle $U$ is a block-encoding of $\rho/\alpha$.
\cref{thm:main} gives the following stronger statement, which (i) allows $\alpha$ to be $1$, and (ii) removes the polylogarithmic factors. 

\begin{corollary}[Improved quantum sample-to-query lifting for block-encoded access] \label{corollary:lifting-block-encoding}
    Let $\mathcal{P}$ be a promise problem for quantum state testing. Then, for any $\alpha \geq 1$, 
    \[
        \mathsf{Q}_{\diamond}^{\alpha}\rbra{\mathcal{P}} = \Omega\rbra*{\sqrt{\mathsf{S}\rbra{\mathcal{P}}}}. 
    \]
\end{corollary}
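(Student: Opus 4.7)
The plan is to derive the corollary from \cref{thm:main} by reducing the block-encoded query model to the purified quantum query model, i.e., by showing that $\mathsf{Q}(\mathcal{P}) \leq O\rbra{\mathsf{Q}_\diamond^\alpha(\mathcal{P})}$ for every $\alpha \geq 1$. Combined with the bound $\mathsf{Q}(\mathcal{P}) = \Omega\rbra*{\sqrt{\mathsf{S}(\mathcal{P})}}$ supplied by \cref{thm:main}, this immediately yields the claim. Concretely, I would construct, from an arbitrary purified quantum query oracle $U$ for $\rho$ on registers $\mathsf{A}\mathsf{B}$, a unitary $U'$ whose upper-left corner block is exactly $\rho/\alpha$, using only $O(1)$ queries to (controlled-)$U$ and (controlled-)$U^\dag$. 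Any algorithm deciding $\mathcal{P}$ with $T$ queries to such a block-encoding oracle can then be executed on $U'$ to decide $\mathcal{P}$ with $O(T)$ queries to $U$, which is all that is needed.

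The construction proceeds in two standard stages. The first stage builds a block-encoding $W$ of $\rho$ itself (normalization $\alpha = 1$) from $U$ via the usual purification-to-block-encoding gadget: introduce a fresh register $\mathsf{A}'$ isomorphic to $\mathsf{A}$, and set
\[
    W = U^\dag_{\mathsf{A}'\mathsf{B}} \cdot \textup{SWAP}_{\mathsf{A} \leftrightarrow \mathsf{A}'} \cdot U_{\mathsf{A}'\mathsf{B}},
\]
so that a short bookkeeping computation using the Schmidt form of $U\ket{0}_{\mathsf{A}'\mathsf{B}}$ yields $\bra{0}_{\mathsf{A}'\mathsf{B}} W \ket{0}_{\mathsf{A}'\mathsf{B}} = \rho$ as an operator on $\mathsf{A}$. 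The second stage rescales this block by $\alpha$ using a single-qubit ancilla $\mathsf{C}$ and the rotation $R\ket{0} = \tfrac{1}{\sqrt{\alpha}}\ket{0} + \sqrt{1 - \tfrac{1}{\alpha}}\ket{1}$, setting
\[
    U' = \rbra*{R^\dag \otimes I} \cdot \rbra*{\ketbra{0}{0}_{\mathsf{C}} \otimes W + \ketbra{1}{1}_{\mathsf{C}} \otimes V} \cdot \rbra*{R \otimes I},
\]
where $V$ is any unitary on $\mathsf{A}\mathsf{A}'\mathsf{B}$ whose matrix element $\bra{0}_{\mathsf{A}'\mathsf{B}} V \ket{0}_{\mathsf{A}'\mathsf{B}}$ vanishes identically on $\mathsf{A}$ (e.g.\ a Pauli $X$ flipping any single ancilla qubit). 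A direct expansion of the projection on $\ket{0}_{\mathsf{C}\mathsf{A}'\mathsf{B}}$ then recovers $\rho/\alpha$ on $\mathsf{A}$. This stage uses no additional queries to $U$ and is valid for every $\alpha \geq 1$, including $\alpha = 1$, where $R$ is the identity and the stage is vacuous.

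I do not anticipate a serious obstacle: both gadgets are standard block-encoding manipulations, and because the simulation is exact (no density matrix exponentiation is invoked), no polylogarithmic factors are incurred. This accounts for the two improvements over \cite[Theorem 1.1]{WZ25b} advertised above, namely the inclusion of $\alpha = 1$ and the removal of the log factors. The only aspects demanding care are choosing the rotation amplitudes as $\tfrac{1}{\sqrt{\alpha}}$ rather than $\tfrac{1}{\alpha}$ and selecting $V$ so that its contribution to the projected block vanishes identically; these are routine checks that I would verify at the end.
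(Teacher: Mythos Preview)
Your proposal is correct and takes essentially the same approach as the paper. The paper proves the case $\alpha=1$ by citing \cite[Lemma 7]{LC19} and \cite[Lemma 25]{GSLW19} for the purification-to-block-encoding gadget (your first stage), and then handles $\alpha>1$ by invoking the monotonicity $\mathsf{Q}_\diamond^{\alpha}(\mathcal{P})\geq \mathsf{Q}_\diamond^{1}(\mathcal{P})$ from \cite[Theorem B.1]{WZ25b}, which under the hood is exactly your second-stage rescaling; you simply unroll these citations into explicit constructions and treat all $\alpha\geq 1$ uniformly.
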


\begin{proof}
    Let $\mathcal{O}$ be a unitary operator that prepares a purification of a quantum state $\rho$. 
    Then, we can implement a unitary operator $U$ with upper left corner block $\rho$ using $O\rbra{1}$ queries to $\mathcal{O}$ by \cite[Lemma 7]{LC19} and \cite[Lemma 25]{GSLW19}. 
    Therefore, \cref{thm:main} immediately yields the proof for the case of $\alpha = 1$. 
    The case of $\alpha > 1$ can be obtained by the fact \cite[Theorem B.1]{WZ25b} that $\mathsf{Q}_\diamond^{\alpha}\rbra{\mathcal{P}} \geq \mathsf{Q}_\diamond^{1}\rbra{\mathcal{P}}$ for any $\alpha > 1$. 
\end{proof}

For the special case where the quantum state $\rho$ is pure, we can obtain a quantum sample-to-query lifting in terms of the reflection operator. 
Let $\mathsf{Q}_\diamond^{\mathsf{refl}}\rbra{\mathcal{P}}$ be the diamond quantum query complexity of a pure-state quantum state testing problem $\mathcal{P}$, which is the minimum number of queries to the reflection operator $R_{\psi} = I - 2\ketbra{\psi}{\psi}$ for the input pure state $\ket{\psi}$ to be tested. 

\begin{corollary} [Improved quantum sample-to-query lifting for reflection access] \label{corollary:lifting-reflection}
    Let $\mathcal{P}$ be a promise problem for pure-state quantum state testing. 
    Then, 
    \[
    \mathsf{Q}_\diamond^{\mathsf{refl}}\rbra{\mathcal{P}} = \Omega\rbra*{\sqrt{\mathsf{S}\rbra{\mathcal{P}}}}. 
    \]
\end{corollary}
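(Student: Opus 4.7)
The plan is to reduce to \cref{thm:main} by showing that one query to the reflection oracle $R_\psi = I - 2\ketbra{\psi}{\psi}$ can be simulated using $O\rbra{1}$ queries to a purified quantum query access oracle for $\ket{\psi}$. Once this is established, any reflection-oracle algorithm solving $\mathcal{P}$ with $q = \mathsf{Q}_\diamond^{\mathsf{refl}}\rbra{\mathcal{P}}$ queries becomes a purified-query algorithm solving $\mathcal{P}$ with $O\rbra{q}$ queries, yielding $\mathsf{Q}\rbra{\mathcal{P}} = O\rbra{\mathsf{Q}_\diamond^{\mathsf{refl}}\rbra{\mathcal{P}}}$, and \cref{thm:main} then gives the desired bound $\mathsf{Q}_\diamond^{\mathsf{refl}}\rbra{\mathcal{P}} = \Omega\rbra{\sqrt{\mathsf{S}\rbra{\mathcal{P}}}}$.

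For the simulation, I would first invoke the same block-encoding construction used in the proof of \cref{corollary:lifting-block-encoding}: a purified query access oracle $U$ satisfying $U\ket{0}_{\mathsf{AB}} = \ket{\psi}_{\mathsf{A}}\ket{\eta}_{\mathsf{B}}$ yields, with $O\rbra{1}$ queries to $U$ and $U^\dag$, an exact $1$-block-encoding $W$ of the rank-one projector $P = \ketbra{\psi}{\psi}$, with some ancilla register that I denote by $\mathsf{C}$. Then I would set $\tilde W := W \rbra{I_{\mathsf{A}} \otimes \rbra{I_{\mathsf{C}} - 2\ketbra{0}{0}_{\mathsf{C}}}} W^\dag$. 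A direct expansion and the projector identity $P^2 = P$ give
\[
    \rbra*{I_{\mathsf{A}} \otimes \bra{0}_{\mathsf{C}}} \tilde W \rbra*{I_{\mathsf{A}} \otimes \ket{0}_{\mathsf{C}}} = I_{\mathsf{A}} - 2P^2 = I_{\mathsf{A}} - 2P = R_\psi,
\]
so $\tilde W$ is itself a $1$-block-encoding of $R_\psi$ built from $O\rbra{1}$ queries to $U, U^\dag$.

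The final point is that this block-encoding is actually \emph{exact} on the ancilla: $\tilde W \rbra{\ket{\phi}_{\mathsf{A}} \ket{0}_{\mathsf{C}}} = \rbra{R_\psi \ket{\phi}}_{\mathsf{A}} \ket{0}_{\mathsf{C}}$ for every input $\ket{\phi}_{\mathsf{A}}$, with no residual amplitude on ancilla states orthogonal to $\ket{0}_{\mathsf{C}}$. This follows from unitarity: since both $\tilde W$ and $R_\psi$ are unitary and the block-encoding has normalization $1$, the $\ket{0}_{\mathsf{C}}$-component of $\tilde W \rbra{\ket{\phi}\ket{0}}$ already has unit norm, forcing the orthogonal ancilla component to vanish. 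Hence $\tilde W$ cleanly returns the ancilla to $\ket{0}_{\mathsf{C}}$, so simulated reflection queries compose seamlessly inside any reflection-oracle algorithm. The one subtlety I would watch for is that this exactness hinges on the identity $P^2 = P$, which is specific to pure-state reflections and has no direct mixed-state analogue.
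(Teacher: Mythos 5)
Your proposal is correct and takes essentially the same route as the paper: reduce queries to $R_\psi$ to the purified-access model with constant overhead via a block-encoding of $\ketbra{\psi}{\psi}$, and then invoke the lifting theorem. The only difference is that the paper obtains this simulation step by citing \cref{corollary:lifting-block-encoding} together with the equivalence $\mathsf{Q}_\diamond^{\mathsf{refl}}\rbra{\mathcal{P}} = \Theta\rbra{\mathsf{Q}_{\diamond}^{1}\rbra{\mathcal{P}}}$ from \cite[Lemma 5.6]{CWZ24}, whereas you prove that step inline (and exactly, using $P^2 = P$ and the unitarity argument), which is a valid substitute.
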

\begin{proof}
    This is immediate by \cref{corollary:lifting-block-encoding} and noting that $\mathsf{Q}_\diamond^{\mathsf{refl}}\rbra{\mathcal{P}} = \Theta\rbra{\mathsf{Q}_{\diamond}^{1}\rbra{\mathcal{P}}}$ for any pure-state quantum state testing problem $\mathcal{P}$ (cf.\ \cite[Lemma 5.6]{CWZ24}). 
\end{proof}

\cref{thm:main} can be restated below as a samplizer of quantum query algorithms with purified quantum query access. 

\begin{corollary}[Quantum samplizer, \cref{thm:main} restated] \label{thm:samplizer}
    Let $\mathcal{P}$ be a promise problem for quantum state testing. 
    Then, 
    \[
        \mathsf{S}\rbra{\mathcal{P}} = O\rbra*{\rbra*{\mathsf{Q}\rbra{\mathcal{P}}}^2}. 
    \]
\end{corollary}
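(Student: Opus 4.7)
The plan is to derive \cref{thm:samplizer} as a direct algebraic rearrangement of \cref{thm:main}. The two statements encode exactly the same asymptotic relation between $\mathsf{S}(\mathcal{P})$ and $\mathsf{Q}(\mathcal{P})$; they just read it from opposite sides of the inequality.

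Concretely, I would first unpack the $\Omega(\cdot)$ in \cref{thm:main} to expose a universal constant $c>0$, independent of $\mathcal{P}$, such that $\mathsf{Q}(\mathcal{P}) \geq c\sqrt{\mathsf{S}(\mathcal{P})}$ for every promise problem $\mathcal{P}$ for quantum state testing. Since both complexity measures are nonnegative integers, squaring both sides preserves the inequality, yielding $(\mathsf{Q}(\mathcal{P}))^2 \geq c^2\, \mathsf{S}(\mathcal{P})$, and hence $\mathsf{S}(\mathcal{P}) \leq c^{-2} (\mathsf{Q}(\mathcal{P}))^2 = O((\mathsf{Q}(\mathcal{P}))^2)$, which is precisely the statement of \cref{thm:samplizer}.

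There is no substantive obstacle, because the corollary carries no extra mathematical content beyond \cref{thm:main}: it merely recasts the lifting theorem as an upper bound on sample complexity in terms of query complexity, which is the form most naturally interpreted as a \emph{samplizer} for query algorithms with purified quantum query access. The underlying constructive samplization — simulating $Q$ queries to a purified oracle using $O(Q^2)$ copies of $\rho$ via density matrix exponentiation — is what actually powers the proof of \cref{thm:main} in \cite{TWZ25}; the present statement only requires the algebraic manipulation above.
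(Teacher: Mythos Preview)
Your proposal is correct and matches the paper's treatment: the corollary is explicitly labeled ``\cref{thm:main} restated'' and is given no separate proof, because it is precisely the algebraic rearrangement of $\mathsf{Q}(\mathcal{P}) = \Omega(\sqrt{\mathsf{S}(\mathcal{P})})$ that you describe.
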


\cref{thm:samplizer} can be used to derive quantum sample upper bounds from quantum query upper bounds. 
\cref{thm:samplizer} generalizes the samplizer given in \cite{WZ25}. 
In comparison, the samplizer in \cref{thm:samplizer} applies to the purified quantum query access model, whereas the samplizer in \cite{WZ25} applies to a weaker input model where the unitary oracle is a block-encoding of $\rho$. 
The time complexity (i.e., the number of elementary quantum gates) of the quantum algorithm obtained by \cref{thm:samplizer} is $\widetilde{O}\rbra{n^4}$ due to the use of the Schur transform \cite{Har05,BCH06,BCH07,KS18,Kro19,Ngu23,GBO23,BFG+25} in the approach of \cite{TWZ25}, where $n$ is the sample complexity and the time complexity is obtained by the current best Schur transform \cite{BFG+25}. 
In comparison, the time complexity of the quantum algorithm obtained by the samplizer in \cite{WZ25} is $\widetilde{O}\rbra{n}$. 

\vspace{4pt}

It turns out that \cref{thm:main} and its variants (\cref{corollary:lifting-block-encoding,corollary:lifting-reflection,thm:samplizer}) are able to derive new quantum complexity lower and upper bounds for a number of problems ranging from the classical world to the quantum world. 
In this paper, we compile a list of quantum lower and upper bounds that can be proved by quantum sample-to-query lifting. 

\subsection{Spotlight}

In summary, this paper includes 48 quantum complexity bounds obtained by quantum sample-to-query lifting, where 40 are new and 19 are (near-)optimal. 
The statistics is given in \cref{tab:count}. 

\begin{table}[!htp]
\centering
\caption{Statistics of quantum complexity bounds by quantum sample-to-query lifting.}\label{tab:count}
\adjustbox{max width=\textwidth}{
\begin{tabular}{ccccc}
\toprule
Collection  & \#Bounds & \#New & \#Optimal & \#New\&Optimal   \\
\midrule
Testing Classical Properties (\cref{tab:classical}) & 12 & 8 & 10 & 7 \\ \midrule
Testing Quantum Properties (\cref{tab:quantum}) & 19 & 18 & 2 & 1 \\ \midrule
Other Quantum Tasks (\cref{tab:other}) & 8 & 5 & 5 & 2 \\ \midrule
Sample Upper Bounds (\cref{tab:sample}) & 10 & 10 & 1 & 1 \\ \midrule \midrule
Total & 49 & 41 & 18 & 11 \\
\bottomrule
\end{tabular}
}
\end{table}

\subsubsection{Quantum query lower bounds}

We present a list of quantum query lower bounds obtained by quantum sample-to-query lifting. 
In particular, some lower bounds are tight. 
This shows that the quantum sample-to-query lifting method brings new methodologies into proving quantum query lower bounds, in addition to the quantum polynomial method \cite{BBC+01}, the quantum adversary method \cite{Amb02}, and the quantum compressed oracle method \cite{Zha19}. 

\paragraph{Testing classical properties.}

In \cref{tab:classical}, $12$ quantum query lower bounds for testing classical properties are listed, including hypothesis testing, uniformity testing, closeness testing, and entropy estimation of probability distributions, where $8$ are new and $10$ are near-optimal. 

In particular, we present 
\begin{enumerate}
    \item A new proof (see \cref{thm:Bel19-reproduced}) of the quantum query lower bound for hypothesis testing, which was originally proved in \cite{Bel19} by the quantum adversary method. 
    \item Improved quantum query lower bounds for the estimation of total variation distance, Shannon entropy, and support size (see \cref{thm:tvd-estimation,thm:lb-shannon,thm:support-size}), where the previous lower bounds are due to \cite{BKT20} by the quantum polynomial method. 
    \item The \textit{first} almost matching quantum query lower bound for R\'enyi entropy estimation (see \cref{thm:lt1-renyi}), showing that the quantum algorithm given in \cite[Theorem 1]{WZL24} is near-optimal in the dimension $d$ for $0 < \alpha < 1$.
    \item The \textit{first} non-trivial quantum query lower bound for $2$-wise uniformity testing (see \cref{thm:k-wise}), showing that the quantum algorithm given in \cite[Theorem 5]{LWL24} is optimal in the precision parameter $\varepsilon$. 
\end{enumerate}

\begin{table}[t]
\centering
\caption{Quantum query lower bounds for testing classical properties.}\label{tab:classical}
\adjustbox{max width=\textwidth}{
\begin{tabular}{ccccc}
\toprule
Problem  & Previous & This Paper & New?       & Optimal?     \\
\midrule
Hypothesis Testing       & \begin{tabular}{c}
     $\Omega\rbra{1/d_{\textup{H}}\rbra{P, Q}}$  \\
     \cite{Bel19} 
\end{tabular}         & \begin{tabular}{c}
     $\Omega\rbra{1/d_{\textup{H}}\rbra{P, Q}}$  \\
     \cref{thm:Bel19-reproduced} 
\end{tabular}           & Reproved        & \textbf{Optimal}      \\ \midrule
$\ell_1$ Closeness Testing       & \begin{tabular}{c}
     $\Omega\rbra{d^{1/3}+\frac{1}{\varepsilon}}$  \\
     \cite{CFMdW10,LWL24} 
\end{tabular}         & \begin{tabular}{c}
     $\Omega\rbra{\frac{d^{1/3}}{\varepsilon^{2/3}}+\frac{d^{1/4}}{\varepsilon}}$  \\
     \cref{thm:ell1-improved} 
\end{tabular}           & \textbf{New}        & \begin{tabular}{c} \textbf{Optimal} \\ in $d, \varepsilon$ \end{tabular}     \\ \midrule
Uniformity Testing       & \begin{tabular}{c}
     $\Omega\rbra{d^{1/3}}$  \\
     \cite{CFMdW10} 
\end{tabular}         & \begin{tabular}{c}
     $\Omega\rbra{d^{1/3}+\frac{d^{1/4}}{\varepsilon}}$  \\
     \cref{thm:lb-uniformity} 
\end{tabular}           & \textbf{New}        & \begin{tabular}{c} \textbf{Optimal} \\ in $d, \varepsilon$ \end{tabular}      \\ \midrule
$\ell_2$ Closeness Testing       & \begin{tabular}{c}
     $\Omega\rbra{\frac{1}{\varepsilon}}$  \\
     \cite{LWL24} 
\end{tabular}         & \begin{tabular}{c}
     $\Omega\rbra{\frac{1}{\varepsilon}}$  \\
     \cref{thm:ell2-rep} 
\end{tabular}           & Reproved        & \textbf{Optimal}      \\ \midrule
\begin{tabular}{c} Total Variation \\ Distance Estimation \end{tabular}        & \begin{tabular}{c}
     $\widetilde{\Omega}\rbra{\sqrt{d}}$  \\
     \cite{BKT20} 
\end{tabular}         & \begin{tabular}{c}
     $\Omega\rbra{\frac{\sqrt{d}}{\varepsilon\sqrt{\log\rbra{d/\varepsilon}}}}$  \\
     \cref{thm:tvd-estimation} 
\end{tabular}           & \textbf{New}        & \begin{tabular}{c} Near-\textbf{Optimal} \\ in $d$ \end{tabular}     \\ \midrule
\begin{tabular}{c} Shannon Entropy \\ Estimation \end{tabular}       & \begin{tabular}{c}
     $\widetilde{\Omega}\rbra{\sqrt{d}}$  \\
     \cite{BKT20} 
\end{tabular}         & \begin{tabular}{c}
     $\Omega\rbra{\frac{\sqrt{d}}{\sqrt{\varepsilon \log\rbra{d}}}+\frac{\log\rbra{d}}{\varepsilon}}$  \\
     \cref{thm:lb-shannon} 
\end{tabular}           & \textbf{New}        & \begin{tabular}{c} Near-\textbf{Optimal} \\ in $d, \varepsilon$ \end{tabular}     \\ \midrule
\begin{tabular}{c} $\alpha$-R\'enyi Entropy \\ Estimation ($0 < \alpha < 1$) \end{tabular}       & \begin{tabular}{c}
     $\Omega\rbra{\frac{d^{\frac{1}{3}}}{\varepsilon^{\frac{1}{6}}} + \frac{d^{\frac{1}{2\alpha}-\frac{1}{2}}}{\varepsilon^{\frac{1}{2\alpha}}}}$  \\
     \cite{LW19,WZL24} 
\end{tabular}         & \begin{tabular}{c}
     $\Omega\rbra{d^{\frac{1}{2\alpha}-o\rbra{1}}+\frac{d^{\frac{1}{2\alpha}-\frac{1}{2}}}{\varepsilon^{\frac{1}{2\alpha}}}}$  \\
     \cref{thm:lt1-renyi} 
\end{tabular}           & \textbf{New}        & \begin{tabular}{c} Near-\textbf{Optimal} \\ in $d$ \end{tabular}     \\ 
\begin{tabular}{c} $\alpha$-R\'enyi Entropy \\ Estimation ($\alpha > 1$) \end{tabular}       & \begin{tabular}{c}
     $\Omega\rbra{\frac{d^{\frac{1}{3}}}{\varepsilon^{\frac{1}{6}}}+\frac{d^{\frac{1}{2}-\frac{1}{2\alpha}}}{\varepsilon}}$  \\
     \cite{LW19} 
\end{tabular}         & \begin{tabular}{c}
     $\Omega\rbra{d^{\frac{1}{2}-o\rbra{1}}+\frac{d^{\frac{1}{2}-\frac{1}{2\alpha}}}{\varepsilon}}$  \\
     \cref{thm:renyi-gt-1} 
\end{tabular}           & \textbf{New}        & Not Yet     \\
\begin{tabular}{c} $\alpha$-R\'enyi Entropy \\ Estimation (Integer $\alpha$) \end{tabular}       & \begin{tabular}{c}
     $\Omega\rbra{\frac{d^{\frac{1}{2}-\frac{1}{2\alpha}}}{\varepsilon}}$  \\
     \cite{LW19} 
\end{tabular}         & \begin{tabular}{c}
     $\Omega\rbra{\frac{d^{\frac{1}{2}-\frac{1}{2\alpha}}}{\varepsilon}}$  \\
     \cref{thm:integer-renyi} 
\end{tabular}           & Reproved        & Not Yet  \\ \midrule
\begin{tabular}{c} $\alpha$-Tsallis Entropy \\ Estimation (Integer $\alpha$) \end{tabular}       & \begin{tabular}{c}
     $\Omega\rbra{\frac{1}{\sqrt{\alpha}\varepsilon}}$  \\
     \cite{Wan25} 
\end{tabular}         & \begin{tabular}{c}
     $\Omega\rbra{\frac{1}{\sqrt{\alpha}\varepsilon}}$  \\
     \cref{thm:integer-tsallis} 
\end{tabular}           & Reproved        & \begin{tabular}{c} \textbf{Optimal} in $\alpha$, \\ Near-\textbf{Optimal} \\ in $\varepsilon$ \end{tabular} \\ \midrule
\begin{tabular}{c} Support Size \\ Estimation \end{tabular}       & \begin{tabular}{c}
     $\widetilde{\Omega}\rbra{\sqrt{d}}$  \\
     \cite{BKT20} 
\end{tabular}         & \begin{tabular}{c}
     $\Omega\rbra{\frac{\sqrt{d}}{\sqrt{\log\rbra{d}}}\log\rbra{\frac{1}{\varepsilon}}}$  \\
     \cref{thm:support-size} 
\end{tabular}           & \textbf{New}        & \begin{tabular}{c} Near-\textbf{Optimal} \\ in $d$ \end{tabular}\\ \midrule
\begin{tabular}{c} $2$-Wise \\ Uniformity Testing \end{tabular}       & N/A         & \begin{tabular}{c}
     $\Omega\rbra{\frac{\sqrt{n}}{\varepsilon}}$  \\
     \cref{thm:k-wise} 
\end{tabular}           & \textbf{New}        & \textbf{Optimal} in $\varepsilon$  \\ 
\bottomrule
\end{tabular}
}
\end{table}

\paragraph{Testing quantum properties.}
In \cref{tab:quantum}, $19$ quantum query lower bounds for testing quantum properties are listed, including mixedness testing, rank testing, entropy estimation, and closeness estimation of quantum states, where $18$ are new and $2$ are near-optimal. 

In particular, we present improved quantum query lower bounds for a series of property testing problems, including mixedness testing (\cref{thm-10150001}), von Neumann entropy estimation (\cref{thm:von-neumann-entropy}), trace distance estimation (\cref{thm:lb-td}), and fidelity estimation (\cref{thm:lb-fidelity}), which are based on the quantum sample lower bound for mixedness testing and significantly improve the previous known lower bounds implied by or due to \cite{CFMdW10,BKT20,Wan24,UNWT25}. 

\begin{table}[!htp]
\centering
\caption{Quantum query lower bounds for testing quantum properties.} \label{tab:quantum}
\adjustbox{max width=\textwidth}{
\begin{tabular}{ccccc}
\toprule
Problem  & Previous & This Paper & New?       & Optimal?     \\
\midrule
Mixedness Testing       & \begin{tabular}{c}
     $\Omega\rbra{d^{1/3}}$  \\
     \cite{CFMdW10} 
\end{tabular}         & \begin{tabular}{c}
     $\Omega(\frac{\sqrt{d}}{\varepsilon})$  \\
     \cref{thm-10150001} 
\end{tabular}           & \textbf{New}        & Not Yet      \\ 
State Certification       & N/A         & \begin{tabular}{c}
     $\Omega(\frac{\sqrt{d}}{\varepsilon})$  \\
     \cref{corollary:state-cert} 
\end{tabular}           & \textbf{New}        & Not Yet      \\ 
\begin{tabular}{c} Mixedness Testing \\ for Block-Encoding \end{tabular}      & \begin{tabular}{c}
     $\widetilde{\Omega}(\frac{\sqrt{d}}{\varepsilon})$  \\
     \cite{WZ25b}
\end{tabular}         & \begin{tabular}{c}
     $\Omega(\frac{\sqrt{d}}{\varepsilon})$  \\
     \cref{corollary:mixedness-be} 
\end{tabular}           & \textbf{New}        & Not Yet      \\ \midrule
Rank Testing       & N/A         & \begin{tabular}{c}
     $\Omega(\frac{\sqrt{r}}{\sqrt{\varepsilon}})$  \\
     \cref{thm:rank-testing} 
\end{tabular}           & \textbf{New}        & Not Yet      \\ 
\begin{tabular}{c} Rank Testing \\ for Block-Encoding \end{tabular}       & \begin{tabular}{c}
     $\widetilde{\Omega}(\frac{\sqrt{r}}{\sqrt{\varepsilon}})$  \\
     \cite{WZ25b}
\end{tabular}         & \begin{tabular}{c}
     $\Omega(\frac{\sqrt{r}}{\sqrt{\varepsilon}})$  \\
     \cref{corollary:rank-be} 
\end{tabular}           & \textbf{New}        & Not Yet      \\ \midrule
Uniformity Distinguishing       & N/A         & \begin{tabular}{c}
     $\Omega^*(\frac{r}{\sqrt{\Delta}})$  \\
     \cref{thm:uniformity-distinguishing} 
\end{tabular}           & \textbf{New}        & Not Yet      \\ 
\begin{tabular}{c} Uniformity Distinguishing \\ for Block-Encoding \end{tabular}       & \begin{tabular}{c}
     $\widetilde{\Omega}(\sqrt{r})$  \\
     \cite{WZ25b}
\end{tabular}         & \begin{tabular}{c}
     ${\Omega}(\sqrt{r})$  \\
     \cref{corollary:uniform-dis-be} 
\end{tabular}           & \textbf{New}        & Not Yet      \\ \midrule
\begin{tabular}{c}
     Maximal Entanglement \\
     Testing
\end{tabular}       & N/A         & \begin{tabular}{c}
     $\Omega(\frac{\sqrt{d}}{\varepsilon})$  \\
     \cref{thm:maximal-entanglement} 
\end{tabular}           & \textbf{New}        & Not Yet      \\ \midrule
\begin{tabular}{c}
     Uniform Schmidt \\
     Testing
\end{tabular}       & N/A         & \begin{tabular}{c}
     $\Omega(\frac{\sqrt{r}}{\varepsilon})$  \\
     \cref{thm:schmidt-rank} 
\end{tabular}           & \textbf{New}        & Not Yet      \\ \midrule
\begin{tabular}{c}
     Uniform Schmidt \\
     Coefficient Testing
\end{tabular}       & N/A         & \begin{tabular}{c}
     $\Omega^*(\frac{r}{\sqrt{\Delta}})$  \\
     \cref{thm:uniform-schmidt} 
\end{tabular}           & \textbf{New}        & Not Yet      \\ \midrule
\begin{tabular}{c}
     Von Neumann \\
     Entropy Estimation
\end{tabular}       & \begin{tabular}{c}
     $\widetilde{\Omega}\rbra{\sqrt{d}}$  \\
     \cite{BKT20} 
\end{tabular}         & \begin{tabular}{c}
     $\Omega(\frac{\sqrt{d}}{\sqrt{\varepsilon}}+\frac{\log\rbra{d}}{\varepsilon})$  \\
     \cref{thm:von-neumann-entropy} 
\end{tabular}           & \textbf{New}        & Not Yet      \\ \midrule
\begin{tabular}{c} $\alpha$-R\'enyi Entropy \\ Estimation ($0 < \alpha < 1$) \end{tabular}       & \begin{tabular}{c}
     $\Omega\rbra{\frac{d^{\frac{1}{3}}}{\varepsilon^{\frac{1}{6}}} + \frac{d^{\frac{1}{2\alpha}-\frac{1}{2}}}{\varepsilon^{\frac{1}{2\alpha}}}}$  \\
     \cite{LW19,WZL24} 
\end{tabular}         & \begin{tabular}{c}
     $\Omega\rbra{d^{\frac{1}{2\alpha}-o\rbra{1}}+\frac{d^{\frac{1}{2\alpha}-\frac{1}{2}}}{\varepsilon^{\frac{1}{2\alpha}}}+\frac{\sqrt{d}}{\sqrt{\varepsilon}}}$  \\
     \cref{thm:qRenyi} 
\end{tabular}           & \textbf{New}        & Not Yet    \\
\begin{tabular}{c} $\alpha$-R\'enyi Entropy \\ Estimation ($\alpha > 1$) \end{tabular}       & \begin{tabular}{c}
     $\Omega\rbra{\frac{d^{\frac{1}{3}}}{\varepsilon^{\frac{1}{6}}}+\frac{d^{\frac{1}{2}-\frac{1}{2\alpha}}}{\varepsilon}}$  \\
     \cite{LW19} 
\end{tabular}         & \begin{tabular}{c}
     $\Omega\rbra{\frac{d^{\frac{1}{2}-\frac{1}{2\alpha}}}{\varepsilon}+\frac{\sqrt{d}}{\sqrt{\varepsilon}}}$  \\
     \cref{thm:qRenyi} 
\end{tabular}           & \textbf{New}        & Not Yet     \\
\begin{tabular}{c} $\alpha$-R\'enyi Entropy \\ Estimation (Integer $\alpha$) \end{tabular}       & \begin{tabular}{c}
     $\Omega\rbra{\frac{d^{\frac{1}{3}}}{\varepsilon^{\frac{1}{6}}}+\frac{d^{\frac{1}{2}-\frac{1}{2\alpha}}}{\varepsilon}}$  \\
     \cite{LW19} 
\end{tabular}         & \begin{tabular}{c}
     $\Omega(\frac{d^{1-\frac{1}{\alpha}}}{\varepsilon^{\frac{1}{\alpha}}}+\frac{d^{\frac{1}{2}-\frac{1}{2\alpha}}}{\varepsilon})$  \\
     \cref{thm:qRenyi} 
\end{tabular}           & \textbf{New}        & Not Yet     \\ \midrule
\begin{tabular}{c}
     $\alpha$-Tsallis Entropy \\
     Estimation ($1 < \alpha < \frac{3}{2}$)
\end{tabular}       & \begin{tabular}{c}
     $\Omega\rbra{\frac{1}{\sqrt{\varepsilon}}}$  \\
     \cite{LW25} 
\end{tabular}         & \begin{tabular}{c}
     $\Omega\rbra{\frac{1}{\varepsilon^{\frac{1}{2\rbra{\alpha-1}}}}}$  \\
     \cref{thm:qTsallis} 
\end{tabular}           & \textbf{New}        & Not Yet      \\ 
\begin{tabular}{c}
     $\alpha$-Tsallis Entropy \\
     Estimation ($\alpha \geq \frac{3}{2}$)
\end{tabular}       & \begin{tabular}{c}
     $\Omega\rbra{\frac{1}{\sqrt{\varepsilon}}}$  \\
     \cite{LW25} 
\end{tabular}         & \begin{tabular}{c}
     $\Omega\rbra{\frac{1}{\varepsilon}}$  \\
     \cref{thm:qTsallis} 
\end{tabular}           & \textbf{New}        & \begin{tabular}{c} \textbf{Optimal} \\ for integer $\alpha \geq 2$ \end{tabular}     \\ \midrule
\begin{tabular}{c}
     Trace Distance \\
     Estimation
\end{tabular}       & \begin{tabular}{c}
     $\widetilde{\Omega}\rbra{\sqrt{r}+\frac{1}{\varepsilon}}$  \\
     \cite{BKT20,Wan24} 
\end{tabular}         & \begin{tabular}{c}
     $\Omega(\frac{\sqrt{r}}{\varepsilon})$  \\
     \cref{thm:lb-td} 
\end{tabular}           & \textbf{New}        & Not Yet      \\ \midrule
Fidelity Estimation       & \begin{tabular}{c}
     $\Omega\rbra{r^{1/3}+\frac{1}{\varepsilon}}$  \\
     \cite{UNWT25} 
\end{tabular}         & \begin{tabular}{c}
     $\Omega(\frac{\sqrt{r}}{\sqrt{\varepsilon}}+\frac{1}{\varepsilon})$  \\
     \cref{thm:lb-fidelity} 
\end{tabular}           & \textbf{New}        & Not Yet      \\ \midrule
\begin{tabular}{c} Well-Conditioned \\ Fidelity Estimation \end{tabular}      & \begin{tabular}{c}
     $\Omega\rbra{\frac{1}{\varepsilon}}$  \\
     \cite{LWWZ25} 
\end{tabular}         & \begin{tabular}{c}
     $\Omega\rbra{\frac{1}{\varepsilon}}$  \\
     \cref{thm:lb-fidelity-kappa} 
\end{tabular}           & Reproved        & Near-\textbf{Optimal}      \\ 
\bottomrule
\end{tabular}
}
\end{table}

\paragraph{Other quantum tasks.}
In \cref{tab:other}, $8$ quantum query lower bounds for other quantum tasks are listed, including amplitude estimation, Hamiltonian simulation, quantum Gibbs sampling, and the entanglement entropy problem, where $5$ are new and $5$ are near-optimal. 

In particular, we present
\begin{enumerate}
    \item New proofs of a series of basic problems such as amplitude estimation (\cref{thm:lb_amp_est}), Hamiltonian simulation (\cref{thm:hamiltonian}), and quantum Gibbs sampling (\cref{thm:gibbs}), whereas the previous matching lower bounds are based on the quantum polynomial method \cite{NW99,BACS07} or unitary discrimination \cite{CKBG23,Weg25}. 
    \item The \textit{first} almost matching quantum query lower bound for the entanglement entropy problem originally considered in \cite{SY23} (see \cref{thm:ent-entropy-prob}), significantly improving the previous lower bounds by \cite{SY23,WZ25b,Weg25,CWZ24} and showing that the quantum algorithm given in \cite{Weg25} is near-optimal in $\Delta$. 
\end{enumerate}

\begin{table}[!htp]
\centering
\caption{Quantum query lower bounds for other quantum tasks.} \label{tab:other}
\adjustbox{max width=\textwidth}{
\begin{tabular}{ccccc}
\toprule
Problem  & Previous & This Paper & New?       & Optimal?     \\
\midrule
Amplitude Estimation      & \begin{tabular}{c}
     $\Omega\rbra{\frac{1}{\varepsilon}}$  \\
     \cite{NW99} 
\end{tabular}         & \begin{tabular}{c}
     $\Omega\rbra{\frac{1}{\varepsilon}}$  \\
     \cref{thm:lb_amp_est} 
\end{tabular}           & Reproved        & \textbf{Optimal}      \\ \midrule
Hamiltonian Simulation      & \begin{tabular}{c}
     $\Omega\rbra{t}$  \\
     \cite{BACS07} 
\end{tabular}         & \begin{tabular}{c}
     $\Omega\rbra{t}$  \\
     \cref{thm:hamiltonian} 
\end{tabular}           & Reproved        & \textbf{Optimal}      \\ \midrule
Gibbs Sampling      & \begin{tabular}{c}
     $\Omega\rbra{\beta}$  \\
     \cite{CKBG23,Weg25} 
\end{tabular}         & \begin{tabular}{c}
     $\Omega\rbra{\beta}$  \\
     \cref{thm:gibbs} 
\end{tabular}           & Reproved        & \textbf{Optimal}      \\ 
\begin{tabular}{c} Gibbs Sampling \\ (Stronger Oracle) \end{tabular}    & \begin{tabular}{c}
     $\widetilde{\Omega}\rbra{\beta}$  \\
     \cite{WZ25b} 
\end{tabular}         & \begin{tabular}{c}
     $\Omega\rbra{\beta}$  \\
     \cref{thm:stronger-gibbs} 
\end{tabular}           & \textbf{New}        & \textbf{Optimal}      \\ \midrule
\begin{tabular}{c} Entanglement Entropy \\ Problem ($0 < \alpha < 1$) \end{tabular}     & \begin{tabular}{c}
     $\widetilde{\Omega}\rbra{\frac{\sqrt{d}}{\sqrt{\Delta}}+\frac{d^{\frac{1}{2\alpha}-\frac{1}{2}}}{\Delta^{\frac{1}{2\alpha}}}}$  \\
     \cite{CWZ24} 
\end{tabular}         & \begin{tabular}{c}
     $\Omega\rbra{d^{\frac{1}{2\alpha}-o\rbra{1}}+\frac{d^{\frac{1}{2\alpha}-\frac{1}{2}}}{\Delta^{\frac{1}{2\alpha}}}+\frac{\sqrt{d}}{\sqrt{\Delta}}}$  \\
     \cref{thm:ent-entropy-prob} 
\end{tabular}           & \textbf{New}        & Not Yet      \\ 
\begin{tabular}{c} Entanglement Entropy \\ Problem ($\alpha = 1$) \end{tabular}     & \begin{tabular}{c}
     $\widetilde{\Omega}\rbra{\frac{\sqrt{d}}{\sqrt{\Delta}}}$  \\
     \cite{CWZ24} 
\end{tabular}         & \begin{tabular}{c}
     $\Omega\rbra{\frac{\sqrt{d}}{\sqrt{\Delta}}+\frac{\log\rbra{d}}{\Delta \sqrt{\log\rbra{\frac{1}{\Delta}}\log\log\rbra{\frac{1}{\Delta}}}}}$  \\
     \cref{thm:ent-entropy-prob} 
\end{tabular}           & \textbf{New}        & Not Yet      \\
\begin{tabular}{c} Entanglement Entropy \\ Problem ($\alpha > 1$) \end{tabular}     & \begin{tabular}{c}
     $\widetilde{\Omega}\rbra{\frac{\sqrt{d}}{\sqrt{\Delta}}}$  \\
     \cite{CWZ24} 
\end{tabular}         & \begin{tabular}{c}
     $\Omega\rbra{\frac{\sqrt{d}}{\sqrt{\Delta}}+\frac{d^{\frac{1}{2}-\frac{1}{2\alpha}}}{\Delta \sqrt{\log\rbra{\frac{1}{\Delta}}\log\log\rbra{\frac{1}{\Delta}}}}}$  \\
     \cref{thm:ent-entropy-prob} 
\end{tabular}           & \textbf{New}        & Not Yet      \\ 
\begin{tabular}{c} Entanglement Entropy \\ Problem (Integer $\alpha$) \end{tabular}     & \begin{tabular}{c}
     $\widetilde{\Omega}\rbra{\frac{\sqrt{d}}{\sqrt{\Delta}}}$  \\
     \cite{CWZ24} 
\end{tabular}         & \begin{tabular}{c}
     $\Omega\rbra{\frac{\frac{d^{1-\frac{1}{\alpha}}}{\Delta^{\frac{1}{\alpha}}}+\frac{d^{\frac{1}{2}-\frac{1}{2\alpha}}}{\Delta}}{\sqrt{\log\rbra{\frac{1}{\Delta}}\log\log\rbra{\frac{1}{\Delta}}}}}$  \\
     \cref{thm:ent-entropy-prob} 
\end{tabular}           & \textbf{New}        & \begin{tabular}{c} Near-\textbf{Optimal} \\ in $\Delta$ \end{tabular}      \\ 
\bottomrule
\end{tabular}
}
\end{table}

\subsubsection{Quantum sample upper bounds}

In \cref{tab:sample}, $10$ quantum sample upper bounds are listed, including entropy estimation and closeness estimation of quantum states, where all upper bounds are new and $1$ upper bound is known to be near-optimal. 

In particular, we present
\begin{enumerate}
    \item The state-of-the-art quantum algorithm for estimating the von Neumann entropy (\cref{thm:von-upper}), trace distance (\cref{thm:qs-td}) and fidelity (\cref{thm:qs-fidelity-rank}) of quantum states of rank $r$ with sample complexity $\widetilde{O}\rbra{r^2}$, which we conjecture are optimal in $r$ (see \cref{conj:s-vN}). 
    \item A quantum algorithm for well-conditioned fidelity estimation with sample complexity $\widetilde{O}\rbra{\frac{1}{\varepsilon^2}}$ (see \cref{thm:kappa-fidelity}), matching the lower bound given in \cite[Lemma 40]{LWWZ25}. 
\end{enumerate}

\begin{table}[t]
\centering
\caption{Quantum sample upper bounds.} \label{tab:sample}
\adjustbox{max width=\textwidth}{
\begin{tabular}{ccccc}
\toprule
Problem  & Previous & This Paper & New?       & Optimal?     \\
\midrule
\begin{tabular}{c}
     Von Neumann \\
     Entropy Estimation
\end{tabular}       & \begin{tabular}{c}
     $O\rbra{\frac{d^2}{\varepsilon}+\frac{\log^2\rbra{d}}{\varepsilon^2}}$  \\
     \cite{BMW16} 
\end{tabular}         & \begin{tabular}{c}
     $\widetilde{O}\rbra{\frac{r^2}{\varepsilon^4}}$  \\
     \cref{thm:von-upper} 
\end{tabular}           & \textbf{New}        & Not Yet      \\ \midrule
\begin{tabular}{c}
     $\alpha$-R\'enyi Entropy \\
     Estimation ($0 < \alpha < 1$)
\end{tabular}       & \begin{tabular}{c}
     $O\rbra{\frac{d^{\frac{2}{\alpha}}}{\varepsilon^{\frac{2}{\alpha}}}}$  \\
     \cite{AISW20} 
\end{tabular}         & \begin{tabular}{c}
     $\widetilde{O}\rbra{\frac{r^{\frac{2}{\alpha}}}{\varepsilon^{\frac{2}{\alpha}+2}}}$  \\
     \cref{thm:renyi-upper} 
\end{tabular}           & \textbf{New}        & Not Yet      \\ 
\begin{tabular}{c}
     $\alpha$-R\'enyi Entropy \\
     Estimation ($\alpha > 1$)
\end{tabular}       & \begin{tabular}{c}
     $O\rbra{\frac{d^2}{\varepsilon^2}}$  \\
     \cite{AISW20} 
\end{tabular}         & \begin{tabular}{c}
     $\widetilde{O}\rbra{\frac{r^2}{\varepsilon^{2+\frac{2}{\alpha}}}}$  \\
     \cref{thm:renyi-upper} 
\end{tabular}           & \textbf{New}        & Not Yet      \\ \midrule
\begin{tabular}{c}
     Trace Distance \\
     Estimation
\end{tabular}       & \begin{tabular}{c}
     $O\rbra{\frac{r^2}{\varepsilon^5}\log^2\rbra{\frac{r}{\varepsilon}}\log^2\rbra{\frac{1}{\varepsilon}}}$  \\
     \cite{WZ24} 
\end{tabular}         & \begin{tabular}{c}
     $O\rbra{\frac{r^2}{\varepsilon^4}\log^2\rbra{\frac{1}{\varepsilon}}}$  \\
     \cref{thm:qs-td} 
\end{tabular}           & \textbf{New}        & Not Yet      \\ \midrule
Fidelity Estimation       & \begin{tabular}{c}
     $\widetilde{O}\rbra{\frac{r^{5.5}}{\varepsilon^{12}}}$  \\
     \cite{GP22} 
\end{tabular}         & \begin{tabular}{c}
     $O\rbra{\frac{r^2}{\varepsilon^4}\log^2\rbra{\frac{1}{\varepsilon}}}$  \\
     \cite{UNWT25} \\
     \cref{thm:qs-fidelity-rank}
\end{tabular}           & \textbf{New}        & Not Yet      \\
\begin{tabular}{c}
     Well-Conditioned \\
     Fidelity Estimation
\end{tabular}       & \begin{tabular}{c}
     $\widetilde{O}\rbra{\frac{\kappa^9}{\varepsilon^3}}$  \\
     \cite{LWWZ25} 
\end{tabular}         & \begin{tabular}{c}
     $O\rbra{\frac{\kappa^2}{\varepsilon^2}\log^2\rbra{\frac{1}{\varepsilon}}}$  \\
     \cite{UNWT25} \\
     \cref{thm:kappa-fidelity}
\end{tabular}           & \textbf{New}        & \begin{tabular}{c} Near-\textbf{Optimal} \\ in $\varepsilon$ \end{tabular}     \\ \midrule
\begin{tabular}{c}
     Quantum $\ell_\alpha$ \\
     Distance Estimation
\end{tabular}       & \begin{tabular}{c}
     $\widetilde{O}\rbra{\frac{1}{\varepsilon^{3\alpha+2+\frac{2}{\alpha-1}}}}$  \\
     \cite{LW25b} 
\end{tabular}         & \begin{tabular}{c}
     $O\rbra{\frac{1}{\varepsilon^{2\alpha+2+\frac{2}{\alpha-1}}}}$  \\
     \cref{thm:lalpha} 
\end{tabular}           & \textbf{New}        & Not Yet      \\ \midrule
\begin{tabular}{c}
     $\alpha$-Tsallis Relative \\
     Entropy Estimation \\
     ($0 < \alpha < \frac{1}{2}$)
\end{tabular}       & \begin{tabular}{c}
     $O\rbra{\frac{r^{2+3\alpha}}{\varepsilon^{\frac{2}{\alpha} + \frac{3}{1-\alpha}}}\log^{2}\rbra{\frac{r}{\varepsilon}}}$  \\
     \cite{BGW25} 
\end{tabular}         & \begin{tabular}{c}
     $O\rbra{\frac{r^{2+2\alpha}}{\varepsilon^{\frac{2}{\alpha} + \frac{2}{1-\alpha}}}}$  \\
     \cref{thm:tsallis-relative-upper} 
\end{tabular}           & \textbf{New}        & Not Yet      \\ 
\begin{tabular}{c}
     $\alpha$-Tsallis Relative \\
     Entropy Estimation \\
     ($\frac{1}{2} < \alpha < 1$)
\end{tabular}       & \begin{tabular}{c}
     $O\rbra{\frac{r^{5-3\alpha}}{\varepsilon^{\frac{2}{1-\alpha}+ \frac{3}{\alpha}}}\log^{2}\rbra{\frac{r}{\varepsilon}}}$  \\
     \cite{BGW25} 
\end{tabular}         & \begin{tabular}{c}
     $O\rbra{\frac{r^{4-2\alpha}}{\epsilon^{\frac{2}{1-\alpha}+ \frac{2}{\alpha}}}}$  \\
     \cref{thm:tsallis-relative-upper} 
\end{tabular}           & \textbf{New}        & Not Yet      \\ 
\begin{tabular}{c}
     Squared Hellinger \\
     Distance Estimation
\end{tabular}       & \begin{tabular}{c}
     $O\rbra{\frac{r^{3.5}}{\varepsilon^{10}}\log^4\rbra{\frac{r}{\varepsilon}}}$  \\
     \cite{BGW25} 
\end{tabular}         & \begin{tabular}{c}
     $O\rbra{\frac{r^3}{\varepsilon^8}\log^2\rbra{\frac{r}{\varepsilon}}}$  \\
     \cref{thm:tsallis-relative-upper} 
\end{tabular}           & \textbf{New}        & Not Yet      \\
\bottomrule
\end{tabular}
}
\end{table}

\subsection{Organization of this paper}

In the remainder of this paper, quantum query lower bounds obtained by quantum sample-to-query lifting are listed in \cref{sec:lower} and quantum sample upper bounds obtained by quantum sample-to-query lifting are listed in \cref{sec:upper}. 
Open questions and conjectures are discussed in \cref{sec:discussion}. 

\section{List of Quantum Query Lower Bounds} \label{sec:lower}

\subsection{For classical properties}

The study of quantum testing of classical properties was initiated in \cite{BHH11}. 
For testing the properties of a $d$-dimensional classical distribution $P = \rbra{p_1, p_2, \dots, p_d}$, we assume a quantum query oracle $U$ acting on two subsystems $\mathsf{A}$ and $\mathsf{B}$ such that
\[
U_{\mathsf{AB}} \ket{0}_{\mathsf{A}} \ket{0}_{\mathsf{B}} = \sum_{i=1}^d \sqrt{p_i} \ket{i}_{\mathsf{A}} \ket{\psi_i}_{\mathsf{B}},
\]
where $\cbra{\ket{\psi_i}}_{i=1}^d$ is an orthonormal basis.\footnote{There have been several different quantum input models for testing classical properties. 
Here, we consider the purified quantum query access model, which is the weakest one (see \cite{Bel19,GL20} for the discussions and comparison with other models).} 
In other words, $U$ is a purified quantum query access oracle for the diagonal quantum state
\[
\rho = \sum_{i=1}^d p_i \ketbra{i}{i} = \diag\rbra{p_1, p_2, \dots, p_d}.
\]
For any promise problem $\mathcal{Q} = \cbra{\mathcal{Q}^{\textup{yes}}, \mathcal{Q}^{\textup{no}}}$ for classical distributions, one can define a promise problem $\mathcal{P} = \cbra{\mathcal{P}^{\textup{yes}}, \mathcal{P}^{\textup{no}}}$ for quantum state testing such that
\[
\mathcal{P}^X = \set{ \rho = \diag\rbra{p_1, p_2, \dots, p_d} }{ P = \rbra{p_1, p_2, \dots, p_d} \in \mathcal{Q}^X } \textup{ for } X \in \cbra{\textup{yes}, \textup{no}}. 
\]
It is clear that $\mathsf{S}\rbra{\mathcal{P}} = \mathsf{S}\rbra{\mathcal{Q}}$ and $\mathsf{Q}\rbra{\mathcal{P}} = \mathsf{Q}\rbra{\mathcal{Q}}$. 
By \cref{thm:main}, we immediately have 
\[
    \mathsf{Q}\rbra{\mathcal{Q}} = \mathsf{Q}\rbra{\mathcal{P}} = \Omega\rbra*{\sqrt{\mathsf{S}\rbra{\mathcal{P}}}} = \Omega\rbra*{\sqrt{\mathsf{S}\rbra{\mathcal{Q}}}},
\]
which allows us to establish a quantum query lower bound for $\mathcal{Q}$ from its classical sample lower bound. 

In the remaining part of this section, we list the quantum query lower bounds for testing properties of discrete distributions, where we will use $d$ to denote the dimension of the distributions. 

\subsubsection{Hypothesis testing}

\begin{problem} [Hypothesis testing] \label{prob:hypothesis-testing}
    Given an unknown distribution $R$ that is either $P$ or $Q$, determine whether $R$ is $P$ or $Q$.
\end{problem} 

The quantum query complexity of \cref{prob:hypothesis-testing} has been shown to be $\Theta\rbra{1/d_{\textup{H}}\rbra{P, Q}}$ in \cite{Bel19}. 
The original proof of the lower bound $\Omega\rbra{1/d_{\textup{H}}\rbra{P, Q}}$ given in \cite{Bel19} uses the adversary bound in \cite{BBH+18} (which is a generalization of the primal version of the general adversary bound \cite{HLS07}). 

By quantum sample-to-query lifting, we can simply reproduce this result.

\begin{theorem}[{\cite[Theorem 4]{Bel19}} reproduced] \label{thm:Bel19-reproduced}
    The quantum query complexity of hypothesis testing for two distributions $P$ and $Q$ is $\Omega\rbra{1/d_{\textup{H}}\rbra{P, Q}}$. 
\end{theorem}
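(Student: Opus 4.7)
The plan is to derive the bound as an immediate consequence of \cref{thm:main}: I encode \cref{prob:hypothesis-testing} as the promise problem $\mathcal{P}$ with $\mathcal{P}^{\textup{yes}} = \cbra{\rho_P}$ and $\mathcal{P}^{\textup{no}} = \cbra{\rho_Q}$, where $\rho_P = \diag\rbra{P}$ and $\rho_Q = \diag\rbra{Q}$. Since $\mathsf{Q}\rbra{\mathcal{Q}} = \mathsf{Q}\rbra{\mathcal{P}}$ and \cref{thm:main} yields $\mathsf{Q}\rbra{\mathcal{P}} = \Omega\rbra{\sqrt{\mathsf{S}\rbra{\mathcal{P}}}}$, the entire task reduces to establishing the quantum sample lower bound $\mathsf{S}\rbra{\mathcal{P}} = \Omega\rbra{1/d_{\textup{H}}^2\rbra{P, Q}}$.

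For the sample bound I would argue directly from Helstrom's theorem combined with the Fuchs--van de Graaf inequality. Any measurement distinguishing $\rho_P^{\otimes n}$ from $\rho_Q^{\otimes n}$ with constant success probability requires $\frac{1}{2}\norm{\rho_P^{\otimes n} - \rho_Q^{\otimes n}}_1 = \Omega\rbra{1}$, which via Fuchs--van de Graaf and the multiplicativity of fidelity forces $F\rbra{\rho_P, \rho_Q}^{2n} \leq 1 - \Omega\rbra{1}$. Because $\rho_P$ and $\rho_Q$ commute, the quantum fidelity collapses to the classical Bhattacharyya coefficient, $F\rbra{\rho_P, \rho_Q} = \sum_{i=1}^d \sqrt{p_i q_i} = 1 - d_{\textup{H}}^2\rbra{P, Q}$. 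Solving $\rbra{1 - d_{\textup{H}}^2\rbra{P, Q}}^{2n} \leq 1 - \Omega\rbra{1}$ yields $n = \Omega\rbra{1/d_{\textup{H}}^2\rbra{P, Q}}$, as desired. Alternatively, one can invoke the quantum Chernoff bound, which gives the same error exponent for distinguishing $\rho^{\otimes n}$ from $\sigma^{\otimes n}$.

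Combining this sample lower bound with \cref{thm:main} immediately gives $\mathsf{Q}\rbra{\mathcal{Q}} = \Omega\rbra{\sqrt{1/d_{\textup{H}}^2\rbra{P, Q}}} = \Omega\rbra{1/d_{\textup{H}}\rbra{P, Q}}$, matching \cite{Bel19}. There is no serious obstacle: the only conceptual step is identifying the fidelity of the diagonal states with the classical Bhattacharyya coefficient, which is immediate from commutativity. The appeal of this route is that it bypasses the adversary-bound machinery of \cite{BBH+18,HLS07} used in Belovs' original argument, replacing it entirely by the lifting theorem plus the elementary Helstrom/Fuchs--van de Graaf calculation.
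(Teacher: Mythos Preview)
Your proposal is correct and follows the same route as the paper: apply \cref{thm:main} to the sample lower bound $\mathsf{S}\rbra{\mathcal{P}} = \Omega\rbra{1/d_{\textup{H}}^2\rbra{P,Q}}$. The only difference is that the paper cites this sample bound from the classical literature \cite{BYKS01,BY02}, whereas you derive it directly via Helstrom and Fuchs--van de Graaf; your derivation is in fact the content of \cref{lmm:lower-discrimination} (noting that $1 - \mathrm{F}\rbra{\rho_P,\rho_Q} = d_{\textup{H}}^2\rbra{P,Q}$ for commuting states), so the two proofs are essentially identical.
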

\begin{proof}
    To apply \cref{thm:main}, we only have to note that the sample lower bound for hypothesis testing is $\Omega\rbra{1/d_{\textup{H}}^2\rbra{P, Q}}$ \cite{BYKS01} (see also \cite[Theorem 4.7]{BY02}). 
\end{proof}

\subsubsection{Closeness testing}

\begin{problem} [$\ell_\alpha$ closeness testing]
    Given an unknown distribution $P$ and a known distribution $Q$, determine whether $P = Q$ or $\Abs{P - Q}_{\alpha} \geq \varepsilon$, where the $\ell_\alpha$ distance is induced by the $\alpha$-Schattern norm $\Abs{P}_\alpha = \rbra{\sum_{i=1}^d p_i^\alpha}^{1/\alpha}$. In particular, the $\ell_1$ distance is called the total variation distance. 
\end{problem}

\paragraph{The case of $\alpha = 1$.}
The $\ell_1$ closeness testing was considered in \cite{GL20} with a quantum algorithm with query complexity $O\rbra{\frac{\sqrt{d}}{\varepsilon}\log^3\rbra{\frac{d}{\varepsilon}}\log\log\rbra{\frac{d}{\varepsilon}}}$. 
The current best quantum query upper bound for $\ell_1$ closeness testing is $O\rbra{\min\cbra{\frac{d^{1/3}}{\varepsilon^{4/3}}, \frac{\sqrt{d}}{\varepsilon}}}$, where $O\rbra{\frac{d^{1/3}}{\varepsilon^{4/3}}}$ is due to \cite{CKO25} and $O\rbra{\frac{\sqrt{d}}{\varepsilon}}$ is due to \cite{LWL24}.
The prior best quantum query lower bound is $\Omega\rbra{d^{1/3}+\frac{1}{\varepsilon}}$, where $\Omega\rbra{d^{1/3}}$ due to \cite{CFMdW10} shows the optimality in $d$ of \cite{CKO25} and $\Omega\rbra{\frac{1}{\varepsilon}}$ due to \cite{LWL24} shows the optimality in $\varepsilon$ of \cite{LWL24}. 

By quantum sample-to-query lifting, we can obtain an improved quantum query lower bound.
\begin{theorem}\label{thm:ell1-improved}
    The quantum query complexity of $\ell_1$ closeness testing is $\Omega\rbra{\frac{d^{1/3}}{\varepsilon^{2/3}}+\frac{d^{1/4}}{\varepsilon}}$. 
\end{theorem}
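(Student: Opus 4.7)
The plan is a direct application of \cref{thm:main} (quantum sample-to-query lifting) combined with known classical sample lower bounds for $\ell_1$ distribution testing. Concretely, the reduction laid out at the start of \cref{sec:lower} identifies the quantum query complexity of a classical distribution testing problem $\mathcal{Q}$ (with purified quantum query access to the associated diagonal state) with that of the corresponding quantum state testing problem $\mathcal{P}$. \cref{thm:main} then yields $\mathsf{Q}\rbra{\mathcal{Q}} = \Omega\rbra{\sqrt{\mathsf{S}\rbra{\mathcal{Q}}}}$, so it suffices to supply a classical sample lower bound of $\Omega\rbra{d^{2/3}/\varepsilon^{4/3} + \sqrt{d}/\varepsilon^2}$ for the testing problem in question and square-root it.

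The plan is to invoke two separate classical lower bounds, one for each term. For $\Omega\rbra{d^{1/4}/\varepsilon}$, I would appeal to Paninski's classical sample lower bound $\mathsf{S} = \Omega\rbra{\sqrt{d}/\varepsilon^2}$ for uniformity testing, which is the special case of $\ell_1$ closeness testing where the reference distribution $Q$ is uniform; this immediately lifts to $\Omega\rbra{d^{1/4}/\varepsilon}$ on the quantum side. For $\Omega\rbra{d^{1/3}/\varepsilon^{2/3}}$, I would invoke the $\Omega\rbra{d^{2/3}/\varepsilon^{4/3}}$ classical sample lower bound for $\ell_1$ closeness testing established by Chan, Diakonikolas, Valiant, and Valiant; again \cref{thm:main} converts this into $\Omega\rbra{d^{1/3}/\varepsilon^{2/3}}$ quantum queries. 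Summing the two lifted bounds gives the claim.

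The main obstacle is bookkeeping the setting of the CDVV lower bound: it is originally phrased for two-sample closeness testing (both $P$ and $Q$ accessed via i.i.d.\ samples), whereas the problem statement fixes $Q$ as known. I would handle this by taking the hard ensemble of CDVV and either (i) arguing that the construction survives when $Q$ is a fixed distribution of the ensemble tested against an unknown $P$, or (ii) phrasing the lifting component-wise for the two oracles and summing the query counts. The Paninski step is already tailored to the known-$Q$ regime and needs no modification, and everything else is routine once the classical sample bounds are in place.
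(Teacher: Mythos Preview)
Your approach is essentially the paper's: apply \cref{thm:main} to the classical sample lower bound $\Omega\rbra{d^{2/3}/\varepsilon^{4/3}+\sqrt{d}/\varepsilon^2}$ from \cite{CDVV14}. The paper cites CDVV14 for both terms in one stroke; your split into Paninski for the $\sqrt{d}/\varepsilon^2$ term and CDVV for the $d^{2/3}/\varepsilon^{4/3}$ term is harmless but unnecessary, since CDVV14 already contains both.

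One caution on the obstacle you flag. Your option (i) --- arguing that the CDVV hard ensemble survives when $Q$ is a fixed, fully known distribution --- would \emph{fail}: identity testing against a known $Q$ has classical sample complexity $\Theta\rbra{\sqrt{d}/\varepsilon^2}$, strictly below $d^{2/3}/\varepsilon^{4/3}$ in the relevant regime, so no hard instance with known $Q$ can yield that term. The paper's proof sidesteps this by simply citing \cite{CDVV14} without comment, implicitly treating the problem in the two-oracle closeness setting (both $P$ and $Q$ accessed via purified quantum query oracles), which is also the setting of the cited upper bounds \cite{CKO25,LWL24}. Your option (ii), lifting component-wise over the two oracles, is the reading that makes the argument go through; the ``known $Q$'' in the problem statement should not be taken to mean a classical description is available.
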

\begin{proof}
    To apply \cref{thm:main}, we only have note that the sample lower bound for $\ell_1$ closeness testing is $\Omega\rbra{\frac{d^{2/3}}{\varepsilon^{4/3}}+\frac{d^{1/2}}{\varepsilon^2}}$ \cite{CDVV14}. 
\end{proof}

For the special case where $Q$ is the uniform distribution, the problem is called \textit{uniformity testing} and quantum algorithms for it were given in \cite{BHH11} with query complexity $O\rbra{d^{1/3}}$ for $\varepsilon = \Theta\rbra{1}$ and in \cite{CFMdW10} with query complexity $O\rbra{\frac{d^{1/3}}{\varepsilon^2}}$. 
However, the current best quantum query upper bound remains $O\rbra{\min\cbra{\frac{d^{1/3}}{\varepsilon^{4/3}}, \frac{\sqrt{d}}{\varepsilon}}}$ due to \cite{CKO25,LWL24}; the prior best quantum query lower bound is $\Omega\rbra{d^{1/3}}$ due to \cite{CFMdW10} for $\varepsilon = \Theta\rbra{1}$. 

By quantum sample-to-query lifting, we can obtain an improved quantum query lower bound.
\begin{theorem}  \label{thm:lb-uniformity}
    The quantum query complexity of uniformity testing is $\Omega\rbra{d^{1/3}+\frac{d^{1/4}}{\varepsilon}}$. 
\end{theorem}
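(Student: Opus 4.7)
The plan is to mimic the proof of Theorem \ref{thm:ell1-improved}: invoke Theorem \ref{thm:main} to lift a known classical sample-complexity lower bound for uniformity testing into a quantum query lower bound, and then combine with the previously-established $\Omega\rbra{d^{1/3}}$ bound to obtain the two-term expression in the statement.

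First, I would identify the right classical input to the lifting theorem. Since uniformity testing is the special case of $\ell_1$ closeness testing where $Q$ is the uniform distribution, one can simply cite the classical sample-complexity lower bound $\Omega\rbra{\sqrt{d}/\varepsilon^2}$ for uniformity testing in total variation distance, originally due to Paninski. This is tight classically and is the sharpest input available for the lifting. (Alternatively, one could just restrict the $\ell_1$ closeness-testing lower bound of \cite{CDVV14} used in Theorem \ref{thm:ell1-improved} to the uniform target, which yields the same $\Omega\rbra{\sqrt{d}/\varepsilon^2}$ term in the relevant regime.)

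Second, I would apply Theorem \ref{thm:main} via the standard reduction outlined at the start of Section \ref{sec:lower} that identifies a classical distribution problem with its diagonal-state quantum counterpart. Writing $\mathcal{Q}$ for uniformity testing, Theorem \ref{thm:main} gives
\[
\mathsf{Q}\rbra{\mathcal{Q}} \;=\; \Omega\rbra*{\sqrt{\mathsf{S}\rbra{\mathcal{Q}}}} \;=\; \Omega\rbra*{\sqrt{\sqrt{d}/\varepsilon^2}} \;=\; \Omega\rbra*{d^{1/4}/\varepsilon}.
\]
Combining this with the existing $\Omega\rbra{d^{1/3}}$ quantum lower bound of \cite{CFMdW10} yields the claimed $\Omega\rbra{d^{1/3}+d^{1/4}/\varepsilon}$ bound.

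The step I would flag as the only real subtlety, rather than a genuine obstacle, is verifying that the classical sample lower bound $\Omega\rbra{\sqrt{d}/\varepsilon^2}$ holds over the full range of $\varepsilon$ relevant to the stated theorem (as opposed to only the $\varepsilon = \Theta\rbra{1}$ regime), so that the lifted bound $\Omega\rbra{d^{1/4}/\varepsilon}$ is valid in the full range. Once this is confirmed from the distribution-testing literature, the application of Theorem \ref{thm:main} is entirely mechanical and parallels the proof of Theorem \ref{thm:ell1-improved} line by line.
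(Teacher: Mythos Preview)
Your proposal is correct and mirrors the paper's proof essentially line by line: the paper invokes \cref{thm:main} together with the classical sample lower bound $\Omega\rbra{\sqrt{d}/\varepsilon^2}$ for uniformity testing due to \cite{Pan08,CDVV14} to obtain the $\Omega\rbra{d^{1/4}/\varepsilon}$ term, and the $\Omega\rbra{d^{1/3}}$ term is inherited from \cite{CFMdW10} as you note.
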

\begin{proof}
    To apply \cref{thm:main} to show a lower bound of $\Omega\rbra{\frac{d^{1/4}}{\varepsilon}}$, we only have to note that the sample lower bound for uniformity testing is $\Omega\rbra{\frac{\sqrt{d}}{\varepsilon^2}}$ \cite{Pan08,CDVV14}. 
\end{proof}
\cref{thm:lb-uniformity} shows that the current quantum query upper bound $O\rbra{\min\cbra{\frac{d^{1/3}}{\varepsilon^{4/3}}, \frac{\sqrt{d}}{\varepsilon}}}$ due to \cite{CKO25,LWL24} is optimal in both $d$ and $\varepsilon$. 

\paragraph{The case of $\alpha = 2$.}

A quantum algorithm for $\ell_2$ closeness testing was proposed in \cite{GL20} with query complexity $O\rbra{\frac{1}{\varepsilon}\log^3\rbra{\frac{1}{\varepsilon}}\log\log\rbra{\frac{1}{\varepsilon}}}$, which was later improved to $O\rbra{\frac{1}{\varepsilon}}$ in \cite{LWL24} together with a matching lower bound of $\Omega\rbra{\frac{1}{\varepsilon}}$. 
The lower bound in \cite{LWL24} was obtained by applying the lower bound in \cite{Bel19} for hypothesis testing. 

By quantum sample-to-query lifting, we can simply reproduce this result.

\begin{theorem}[{\cite[Theorem 13]{LWL24}} reproduced] \label{thm:ell2-rep}
    The quantum query complexity of $\ell_2$ closeness testing is $\Omega\rbra{\frac{1}{\varepsilon}}$. 
\end{theorem}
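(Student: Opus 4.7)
The plan is to mirror the one-line template used for \cref{thm:Bel19-reproduced} and \cref{thm:ell1-improved}: apply \cref{thm:main} on top of a classical sample lower bound for $\ell_2$ closeness testing. Since the lifting converts $\mathsf{S}\rbra{\mathcal{P}} = \Omega\rbra{s}$ into $\mathsf{Q}\rbra{\mathcal{P}} = \Omega\rbra{\sqrt{s}}$, and the target query bound is $\Omega\rbra{1/\varepsilon}$, what I will need is a classical sample lower bound of $\Omega\rbra{1/\varepsilon^2}$ for the corresponding promise problem $\mathcal{P}$.

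Such a classical bound can be produced by a standard two-point construction. I would fix an alphabet of size two, take $Q$ to be uniform on it, and let $P$ be a perturbation of $Q$ by a bias of order $\varepsilon$ in one coordinate, so that $\Abs{P - Q}_2 = \Theta\rbra{\varepsilon}$. A Le Cam / chi-squared argument then shows that distinguishing $P$ from $Q$ with constant success probability from i.i.d.\ samples requires $\Omega\rbra{1/\varepsilon^2}$ samples, yielding $\mathsf{S}\rbra{\mathcal{P}} = \Omega\rbra{1/\varepsilon^2}$. Alternatively, one can specialize the $\ell_1$ closeness testing lower bound of \cite{CDVV14} to $d = O\rbra{1}$, where the $\ell_1$ and $\ell_2$ distances on the hard instances agree up to constant factors.

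Applying \cref{thm:main} then immediately gives $\mathsf{Q}\rbra{\mathcal{P}} = \Omega\rbra{\sqrt{\mathsf{S}\rbra{\mathcal{P}}}} = \Omega\rbra{1/\varepsilon}$, matching the upper bound of \cite{LWL24} and completing the proof. I do not expect any genuine obstacle: the classical lower bound is an elementary statistical exercise, and the lifting theorem handles the quantum structure in a black-box way. The only care required is to ensure that the hard instances live on a support of constant size so that $\Abs{P - Q}_1$ and $\Abs{P - Q}_2$ differ only by constants, which is automatic for the two-point family described above; hence the reduction is clean and the one-step argument goes through.
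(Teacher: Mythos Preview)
Your proposal is correct and matches the paper's approach exactly: the paper's proof is the single line ``To apply \cref{thm:main}, we only have to note that the sample lower bound for $\ell_2$ closeness testing is $\Omega\rbra{\frac{1}{\varepsilon^2}}$ \cite{CDVV14}.'' The only difference is that the paper cites \cite{CDVV14} directly for the $\Omega\rbra{1/\varepsilon^2}$ sample bound rather than sketching the two-point Le Cam argument you outline, but that is the same underlying fact.
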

\begin{proof}
    To apply \cref{thm:main}, we only have to note that the sample lower bound for $\ell_2$ closeness testing is $\Omega\rbra{\frac{1}{\varepsilon^2}}$ \cite{CDVV14}. 
\end{proof}

\subsubsection{Distance estimation}

\begin{problem} [Total variation distance estimation]
    Given two unknown distributions $P$ and $Q$, estimate the $\ell_1$ distance $\Abs{P - Q}_1$ to within additive error $\varepsilon$. 
\end{problem}

A quantum algorithm for the $\ell_1$ distance estimation was proposed in \cite{BHH11} with query complexity $O\rbra{\frac{\sqrt{d}}{\varepsilon^8}}$.
The current best quantum query upper bound is $O\rbra{\frac{\sqrt{d}}{\varepsilon^2}\log\rbra{\frac{1}{\varepsilon}}}$ due to \cite{Mon15}.
The prior best quantum query lower bound is $\widetilde{\Omega}\rbra{\sqrt{d}}$ due to \cite[Theorem 1.5]{BKT20}, where the polylogarithmic factor in $d$ was omitted. 

By quantum sample-to-query lifting, we can establish a quantum query lower bound with (1) an explicit (and possibly improved) polylogarithmic factor in $d$ and (2) dependence on $\varepsilon$.

\begin{theorem} \label{thm:tvd-estimation}
    The quantum query complexity of $\ell_1$ distance estimation is $\Omega\rbra{\frac{\sqrt{d}}{\varepsilon\sqrt{\log\rbra{d/\varepsilon}}}}$. 
\end{theorem}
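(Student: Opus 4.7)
The plan is to follow exactly the same template used for the previous theorems in this subsection: invoke \cref{thm:main} and reduce the problem of proving the quantum query lower bound to the problem of citing the correct classical sample lower bound for total variation distance estimation. Concretely, the square of the target bound is $\Omega\rbra{\frac{d}{\varepsilon^2\log\rbra{d/\varepsilon}}}$, so it suffices to show that the classical sample complexity of estimating $\Abs{P-Q}_1$ to additive error $\varepsilon$ is at least $\Omega\rbra{\frac{d}{\varepsilon^2\log\rbra{d/\varepsilon}}}$.

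First I would recall the classical sample lower bound for total variation distance estimation. The relevant result is due to Valiant and Valiant, who established a tight bound of $\Theta\rbra{\frac{d}{\varepsilon^2\log d}}$ (with the appropriate $\log\rbra{d/\varepsilon}$ form appearing in the usual statements). I would cite this and embed the diagonal-distribution promise problem $\mathcal{Q}$ for $\ell_1$ distance estimation into the quantum-state promise problem $\mathcal{P}$, exactly as explained at the beginning of \cref{sec:lower}. Note that ``distance estimation'' is phrased in the paper as estimating $\Abs{P-Q}_1$, but since $P$ and $Q$ are both unknown, the sample lower bound for distinguishing (say) $\Abs{P-Q}_1 \leq \varepsilon$ from $\Abs{P-Q}_1 \geq 3\varepsilon$ already suffices, which is exactly the content of the Valiant--Valiant lower bound.

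Then I would apply \cref{thm:main} (via the reduction in \cref{sec:lower} identifying $\mathsf{S}\rbra{\mathcal{P}} = \mathsf{S}\rbra{\mathcal{Q}}$ and $\mathsf{Q}\rbra{\mathcal{P}} = \mathsf{Q}\rbra{\mathcal{Q}}$) to deduce
\[
\mathsf{Q}\rbra{\mathcal{Q}} = \Omega\rbra*{\sqrt{\mathsf{S}\rbra{\mathcal{Q}}}} = \Omega\rbra*{\sqrt{\frac{d}{\varepsilon^2\log\rbra{d/\varepsilon}}}} = \Omega\rbra*{\frac{\sqrt{d}}{\varepsilon\sqrt{\log\rbra{d/\varepsilon}}}},
\]
which is the claimed bound. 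Since the problem allows $Q$ to be unknown, the lower bound for distance estimation with one unknown distribution (against a fixed reference) immediately lifts to the two-unknown setting by padding.

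I do not anticipate any real obstacle: the entire proof is a one-line invocation of \cref{thm:main}. The only minor care point is the exact form of the polylogarithmic factor from the Valiant--Valiant lower bound and matching it to $\log\rbra{d/\varepsilon}$, rather than $\log d$ alone, which is a standard rewriting in the regime $\varepsilon \geq 1/\poly\rbra{d}$. This is exactly the same style of proof as \cref{thm:Bel19-reproduced,thm:ell1-improved,thm:lb-uniformity,thm:ell2-rep}, and so no new technique is needed beyond the quantum sample-to-query lifting itself.
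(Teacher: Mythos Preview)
Your proposal is correct and follows essentially the same approach as the paper: invoke \cref{thm:main} on the classical sample lower bound $\Omega\rbra{\frac{d}{\varepsilon^2\log\rbra{d/\varepsilon}}}$ for $\ell_1$ distance estimation. The only minor discrepancy is the attribution: the paper cites \cite[Theorem 3]{JHW18} for the sample lower bound with the explicit $\varepsilon$-dependence, whereas you attribute it to Valiant--Valiant (whose result gives $\Theta\rbra{d/\log d}$ for constant $\varepsilon$); the refined $\log\rbra{d/\varepsilon}$ form you need is from Jiao--Han--Weissman.
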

\begin{proof}
    To apply \cref{thm:main}, we only have to note that the sample lower bound for $\ell_1$ distance estimation is $\Omega\rbra{\frac{d}{\varepsilon^2\log\rbra{d/\varepsilon}}}$ \cite[Theorem 3]{JHW18}. 
\end{proof}

\subsubsection{Entropy estimation} \label{sec:c-entropy}

\begin{problem} [Entropy estimation]
    Given an unknown distribution $P$, estimate the entropy of $P$ to within additive error $\varepsilon$. 
\end{problem}

\paragraph{Shannon entropy.}
The Shannon entropy is defined by $\mathrm{H}\rbra{P} = -\sum_{i=1}^d p_i \ln\rbra{p_i}$. 
A quantum algorithm for Shannon entropy estimation was proposed in \cite[Theorem 6]{LW19} with query complexity $\widetilde{O}\rbra{\frac{\sqrt{d}}{\varepsilon^2}}$, which was later improved to $O\rbra{\frac{\sqrt{d}}{\varepsilon^{1.5}}\rbra{\frac{d}{\varepsilon}}\log\log\rbra{\frac{\log\rbra{d}}{\varepsilon}}}$ in \cite[Theorem 12]{GL20}.
The current best quantum query upper bound is $\widetilde{O}\rbra{\frac{\sqrt{d}}{\varepsilon}}$ due to \cite{SJ25} and the prior best quantum query lower bound is $\widetilde{\Omega}\rbra{\sqrt{d}}$ due to \cite[Theorem 1.6]{BKT20}, where the polylogarithmic factor in $d$ was omitted. 

By quantum sample-to-query lifting, we can establish a quantum query lower bound with (1) an explicit (and possibly improved) polylogarithmic factor in $d$ and (2) optimal dependence on $\varepsilon$.

\begin{theorem} \label{thm:lb-shannon}
    The quantum query complexity of Shannon entropy estimation is $\Omega\rbra{\frac{\sqrt{d}}{\sqrt{\varepsilon \log\rbra{d}}}+\frac{\log\rbra{d}}{\varepsilon}}$. 
\end{theorem}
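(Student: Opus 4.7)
The plan is to mirror the pattern used for the preceding entropy/distance estimation bounds in this section: reduce the quantum query lower bound to a known classical sample complexity lower bound via \cref{thm:main}. Concretely, I would invoke the two well-established sample complexity lower bounds for Shannon entropy estimation in the discrete distribution setting: the $\Omega\rbra{d/(\varepsilon \log d)}$ bound of Valiant and Valiant and the $\Omega\rbra{\log^2(d)/\varepsilon^2}$ bound (the ``large-$\varepsilon$'' regime, obtainable e.g.\ from the construction of Jiao, Han, and Weissman or Wu and Yang). Together these give $\mathsf{S}\rbra{\mathcal{Q}} = \Omega\rbra{\frac{d}{\varepsilon \log d} + \frac{\log^2 d}{\varepsilon^2}}$ for the Shannon entropy estimation problem $\mathcal{Q}$.

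Next, I would convert $\mathcal{Q}$ into a quantum state testing problem $\mathcal{P}$ in the standard way described at the start of \cref{sec:lower}: replace each candidate distribution $P = \rbra{p_1, \dots, p_d}$ by the diagonal mixed state $\rho = \diag\rbra{p_1, \dots, p_d}$, so that $\mathsf{S}\rbra{\mathcal{P}} = \mathsf{S}\rbra{\mathcal{Q}}$ and $\mathsf{Q}\rbra{\mathcal{P}} = \mathsf{Q}\rbra{\mathcal{Q}}$. Applying \cref{thm:main} to $\mathcal{P}$ then gives
\[
\mathsf{Q}\rbra{\mathcal{Q}} = \Omega\rbra*{\sqrt{\mathsf{S}\rbra{\mathcal{Q}}}} = \Omega\rbra*{\sqrt{\frac{d}{\varepsilon \log d} + \frac{\log^2 d}{\varepsilon^2}}} = \Omega\rbra*{\frac{\sqrt{d}}{\sqrt{\varepsilon \log d}} + \frac{\log d}{\varepsilon}},
\]
where the last equality uses $\sqrt{a+b} = \Theta\rbra{\sqrt{a} + \sqrt{b}}$.

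There is no real obstacle here: the only subtlety is citing the right classical sample lower bound with the correct dependence on both $d$ and $\varepsilon$ (and ensuring the Shannon entropy is taken in the same normalization as in the lower bound reference), since the $\log^2 d /\varepsilon^2$ term is what yields the $\log(d)/\varepsilon$ summand in the final bound and is easy to overlook. With both summands in hand, the proof reduces to a single line of algebra on top of \cref{thm:main}, exactly as in \cref{thm:tvd-estimation} for the total variation distance case.
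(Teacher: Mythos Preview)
Your proposal is correct and follows essentially the same approach as the paper: invoke the classical sample lower bound $\Omega\rbra{\frac{d}{\varepsilon \log d} + \frac{\log^2 d}{\varepsilon^2}}$ for Shannon entropy estimation and apply \cref{thm:main}. The only difference is in attribution---the paper cites \cite{JVHW15,WY16} for the combined bound rather than splitting the two summands between Valiant--Valiant and Jiao--Han--Weissman/Wu--Yang---but the argument is identical.
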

\begin{proof}
    To apply \cref{thm:main}, we only have to note that the sample lower bound for Shannon entropy estimation is $\Omega\rbra{\frac{d}{\varepsilon \log\rbra{d}}+\frac{\log^2\rbra{d}}{\varepsilon^2}}$ \cite{JVHW15,WY16}. 
\end{proof}

\cref{thm:lb-shannon} shows that the quantum query upper bound given in \cite{SJ25} is optimal in both $d$ and $\varepsilon$. 

\paragraph{R\'enyi entropy.}
The $\alpha$-R\'enyi entropy is defined by $\mathrm{H}_{\alpha}^{\textup{R\'en}}\rbra{P} = \frac{1}{1-\alpha}\ln\rbra{\sum_{i=1}^d p_i^\alpha}$. 

\textit{The case of $0 < \alpha < 1$}. A quantum algorithm for estimating the $\alpha$-R\'enyi entropy was proposed in \cite{LW19} with query complexity $\widetilde{O}\rbra{\frac{d^{\frac{1}{\alpha}-\frac{1}{2}}}{\varepsilon^{\frac{1}{2\alpha}+1}}}$ (this is the corrected one noted in \cite{WZL24}).
The current best quantum query upper bound is $\widetilde{O}\rbra{\frac{d^{\frac{1}{2\alpha}}}{\varepsilon^{\frac{1}{2\alpha}+1}}}$ due to \cite{WZL24} and the prior best quantum query lower bound is (1) $\Omega\rbra{\frac{d^{\frac{1}{2\alpha}-\frac{1}{2}}}{\varepsilon^{\frac{1}{2\alpha}}}}$ due to \cite[Theorem 9]{WZL24} and (2) $\Omega\rbra{\frac{d^{\frac{1}{3}}}{\varepsilon^{\frac{1}{6}}}}$ when $\frac{1}{d} \leq \varepsilon \leq \frac{1}{2}$ due to \cite[Theorem 14]{LW19}. 

By quantum sample-to-query lifting, we can establish a matching quantum query lower bound below.
\begin{theorem} \label{thm:lt1-renyi}
    For $0 < \alpha < 1$, the quantum query complexity of $\alpha$-R\'enyi entropy estimation is $\Omega\rbra{d^{\frac{1}{2\alpha}-o\rbra{1}}+\frac{d^{\frac{1}{2\alpha}-\frac{1}{2}}}{\varepsilon^{\frac{1}{2\alpha}}}}$. 
\end{theorem}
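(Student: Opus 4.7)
The plan is to invoke \cref{thm:main} in the same mechanical way as in the preceding classical-distribution lower bounds (e.g., \cref{thm:lb-shannon,thm:tvd-estimation}). The only ingredient needed is a sufficiently strong classical sample lower bound for $\alpha$-R\'enyi entropy estimation on distributions over $[d]$ with $0 < \alpha < 1$, whose square root matches the claimed target.

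Concretely, I would appeal to the known classical sample lower bound
\[
\mathsf{S}_{\textup{cl}} = \Omega\rbra*{d^{1/\alpha - o\rbra{1}} + \frac{d^{1/\alpha - 1}}{\varepsilon^{1/\alpha}}}
\]
for this task. The first summand is the dimension-dependent bound of Acharya, Orlitsky, Suresh, and Tyagi, who showed a sample lower bound of $\Omega\rbra{d^{1/\alpha}/\log d}$ at constant $\varepsilon$, which equals $d^{1/\alpha - o\rbra{1}}$; the second summand captures the $\varepsilon$-dependence obtained from two-point reductions between pairs of distributions whose $\alpha$-R\'enyi entropies differ by $\varepsilon$ (and matches the corresponding quantum upper bound of \cite{WZL24} after the square root). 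Applying $\mathsf{Q}\rbra{\mathcal{P}} = \Omega\rbra{\sqrt{\mathsf{S}\rbra{\mathcal{P}}}}$ from \cref{thm:main} and distributing the square root across the two summands gives exactly
\[
\Omega\rbra*{d^{1/(2\alpha) - o\rbra{1}} + \frac{d^{1/(2\alpha) - 1/2}}{\varepsilon^{1/(2\alpha)}}},
\]
as claimed.

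The only non-trivial bookkeeping is the $o\rbra{1}$ factor: the $\log d$ in the denominator of the AOST-style bound becomes $\sqrt{\log d}$ under the square root, which is still absorbed into $d^{o\rbra{1}}$. The main \emph{obstacle}, if one can call it that, is simply locating a citation for the classical sample lower bound that already records the $\varepsilon$-dependence in the correct parameter regime; once that is in hand, the quantum lower bound is a one-line application of the lifting theorem, identical in form to \cref{thm:Bel19-reproduced,thm:ell1-improved,thm:lb-uniformity,thm:ell2-rep,thm:tvd-estimation,thm:lb-shannon}. No quantum-specific argument (adversary method, polynomial method, compressed oracles, etc.) is invoked beyond \cref{thm:main}.
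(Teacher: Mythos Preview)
Your proposal is correct and takes essentially the same approach as the paper: apply \cref{thm:main} to the AOST17 classical sample lower bound $\Omega\rbra{d^{1/\alpha - o(1)}}$ to obtain the first summand. The only difference is that the paper does not re-derive the second summand $\Omega\rbra{d^{1/(2\alpha)-1/2}/\varepsilon^{1/(2\alpha)}}$ via lifting at all---it simply inherits it from the prior quantum query lower bound of \cite[Theorem~9]{WZL24} stated just before the theorem, so the ``obstacle'' you flag (locating a classical $\varepsilon$-dependent sample bound to lift) never arises in the paper's proof.
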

\begin{proof}
    To apply \cref{thm:main} to show a lower bound of $\Omega\rbra{d^{\frac{1}{2\alpha}-o\rbra{1}}}$, we only have to note that the sample lower bound for $\alpha$-R\'enyi entropy estimation is $\Omega\rbra{d^{\frac{1}{\alpha}-o\rbra{1}}}$ in \cite[Theorem 22]{AOST17}. 
\end{proof}
\cref{thm:lt1-renyi} shows that the quantum algorithm in \cite{WZL24} for $0 < \alpha < 1$ has near-optimal dependence on $d$. 

\textit{The case of non-integer $\alpha > 1$}. 
A quantum algorithm for estimating the $\alpha$-R\'enyi entropy was proposed in \cite{LW19} with query complexity 
$\widetilde{O}\rbra{\frac{d^{1-\frac{1}{2\alpha}}}{\varepsilon^2}}$. 
The current best quantum query upper bound is
$\widetilde{O}\rbra{\frac{d^{1-\frac{1}{2\alpha}}}{\varepsilon}+\frac{\sqrt{d}}{\varepsilon^{1+\frac{1}{2\alpha}}}}$ due to \cite{WZL24}, and the prior best quantum query lower bound is $\Omega\rbra{\frac{d^{\frac{1}{3}}}{\varepsilon^{\frac{1}{6}}}+\frac{d^{\frac{1}{2}-\frac{1}{2\alpha}}}{\varepsilon}}$ due to \cite{LW19}.

By quantum sample-to-query lifting, we can obtain an improved quantum query lower bound.

\begin{theorem} \label{thm:renyi-gt-1}
    For non-integer $\alpha > 1$, the quantum query complexity of $\alpha$-R\'enyi entropy estimation is $\Omega\rbra{\frac{d^{\frac{1}{2}-\frac{1}{2\alpha}}}{\varepsilon} + d^{\frac{1}{2}-o\rbra{1}}}$. 
\end{theorem}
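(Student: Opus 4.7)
The plan is to mirror the proof of \cref{thm:lt1-renyi}: apply the quantum sample-to-query lifting of \cref{thm:main} to a classical sample lower bound for $\alpha$-R\'enyi entropy estimation in the non-integer $\alpha > 1$ regime. Concretely, to obtain the target query lower bound $\Omega(d^{1/2-1/(2\alpha)}/\varepsilon + d^{1/2-o(1)})$, it suffices to establish the sample lower bound $\Omega(d^{1-1/\alpha}/\varepsilon^2 + d^{1-o(1)})$, after which taking the square root via \cref{thm:main} immediately yields the claimed bound.

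The $d^{1/2-o(1)}$ term would come from the near-linear classical sample lower bound $\Omega(d^{1-o(1)})$ for non-integer $\alpha > 1$ established in \cite{AOST17}; this is the $\alpha>1$ analogue of the $\Omega(d^{1/\alpha-o(1)})$ bound used to prove \cref{thm:lt1-renyi}, and reflects the fact that the unbiased polynomial estimator trick available for integer exponents breaks down for non-integer $\alpha$, forcing essentially linear sample complexity. The $d^{1/2-1/(2\alpha)}/\varepsilon$ term would come from a Le Cam style two-point lower bound $\Omega(d^{1-1/\alpha}/\varepsilon^2)$ that distinguishes pairs of distributions whose $\alpha$-moments $\sum_i p_i^\alpha$ differ by $\Theta(\varepsilon)$, which is a standard type of construction appearing in the R\'enyi-entropy estimation literature and tracks the classical upper bound trend.

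The main obstacle I anticipate is locating these two sample lower bounds in a form that is uniform in $\varepsilon$ and in $\alpha$; in particular, the near-linear bound from \cite{AOST17} may only be stated for a fixed constant $\varepsilon$. However, this is harmless for our additive statement, since \cref{thm:main} lifts the two regimes of the sample lower bound independently and the final query bound is the maximum (equivalently, the sum up to a constant factor). Once the classical sample lower bounds are in hand, the proof is a one-line invocation of \cref{thm:main}, fully analogous to the other lifting-based lower bounds tabulated in \cref{tab:classical}.
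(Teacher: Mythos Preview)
Your plan for the $d^{1/2-o(1)}$ term is exactly what the paper does: invoke the classical sample lower bound $\Omega\rbra{d^{1-o(1)}}$ for non-integer $\alpha>1$ from \cite[Theorem 21]{AOST17} and then apply \cref{thm:main}. That part is correct and matches the paper verbatim.

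The difference is in how the $\Omega\rbra{d^{1/2-1/(2\alpha)}/\varepsilon}$ term is obtained. The paper does \emph{not} derive this term via lifting at all; it simply inherits it from the prior quantum query lower bound of \cite{LW19} (stated just before the theorem), so the proof in the paper addresses only the new $d^{1/2-o(1)}$ part. Your more self-contained route---lifting from a sample bound $\Omega\rbra{d^{1-1/\alpha}/\varepsilon^2}$---is a reasonable idea, but note that the precise reference you would need is not the one you sketch: \cite[Theorem 15]{AOST17} gives exactly this $\Omega\rbra{d^{1-1/\alpha}/\varepsilon^2}$ bound, but it is stated for \emph{integer} $\alpha\geq 2$ (and is used that way in \cref{thm:integer-renyi}). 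Whether the same two-point construction carries over cleanly to non-integer $\alpha>1$ is plausible but is not something you can just cite; you would have to verify it. So your proposal is not wrong, but it is doing strictly more work than the paper, and the extra piece rests on a sample lower bound that you have asserted rather than located.
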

\begin{proof}
    To apply \cref{thm:main} to show a lower bound of $\Omega\rbra{d^{\frac{1}{2}-o\rbra{1}}}$, we only have to note that the sample lower bound for $\alpha$-R\'enyi entropy estimation is $\Omega\rbra{d^{1-o\rbra{1}}}$ in \cite[Theorem 21]{AOST17}. 
\end{proof}

\textit{The case of integer $\alpha \geq 2$.}
A quantum algorithm for estimating the $\alpha$-R\'enyi entropy was proposed in \cite{LW19} with query complexity $\widetilde{O}\rbra{\frac{d^{\nu\rbra{1-\frac{1}{\alpha}}}}{\varepsilon^2}}$, where $\nu = 1 - \frac{2^{\alpha-2}}{2^{\alpha}-1} < \frac{3}{4}$. 
A matching quantum query lower bound $\widetilde{\Omega}\rbra{d^{1/3}}$ is given in \cite{LW19} for $\alpha = 2$. 
For integer $\alpha \geq 3$, the prior best quantum query lower bound is $\Omega\rbra{\frac{d^{\frac{1}{2}-\frac{1}{2\alpha}}}{\varepsilon}}$ due to \cite{LW19}.

By quantum sample-to-query lifting, we can simply reproduce this result.

\begin{theorem}[{\cite[Theorem 14]{LW19}} reproduced] \label{thm:integer-renyi}
    For integer $\alpha \geq 3$, the quantum query complexity of $\alpha$-R\'enyi entropy estimation is $\Omega\rbra{\frac{d^{\frac{1}{2}-\frac{1}{2\alpha}}}{\varepsilon}}$. 
\end{theorem}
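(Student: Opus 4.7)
The plan is to follow the standard template that is used throughout this section: invoke the quantum sample-to-query lifting (\cref{thm:main}) to turn a classical sample lower bound for $\alpha$-R\'enyi entropy estimation into a matching quantum query lower bound, exploiting the identification of a distribution $P=(p_1,\dots,p_d)$ with its diagonal embedding $\rho=\diag(p_1,\dots,p_d)$ described at the start of \cref{sec:lower}. Squaring the target $\Omega(d^{\frac{1}{2}-\frac{1}{2\alpha}}/\varepsilon)$ tells me that I need a classical sample lower bound of $\Omega(d^{1-\frac{1}{\alpha}}/\varepsilon^2)$ for integer $\alpha\geq 3$, which is the quantity I would look up in the literature.

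First, I would pull the appropriate lower bound from Acharya, Orlitsky, Suresh, and Tyagi \cite{AOST17}: for integer $\alpha\geq 2$, the classical sample complexity of $\alpha$-R\'enyi entropy estimation is $\Theta(d^{1-\frac{1}{\alpha}}/\varepsilon^2)$, with both an upper bound via an unbiased power-sum estimator and a matching lower bound (unlike the non-integer case used in \cref{thm:lt1-renyi,thm:renyi-gt-1}, where only the weaker $d^{1-o(1)}$ bound is available). This is the same $1/\varepsilon^2$-dependence reference that \cite{LW19} uses for its original argument.

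Second, I would apply \cref{thm:main} verbatim: letting $\mathcal{Q}$ denote the promise problem of $\alpha$-R\'enyi entropy estimation for classical distributions and $\mathcal{P}$ its diagonal quantum embedding, the correspondence $\mathsf{S}(\mathcal{Q})=\mathsf{S}(\mathcal{P})$ and $\mathsf{Q}(\mathcal{Q})=\mathsf{Q}(\mathcal{P})$ gives
\[
\mathsf{Q}(\mathcal{Q}) \;=\; \mathsf{Q}(\mathcal{P}) \;=\; \Omega\!\left(\sqrt{\mathsf{S}(\mathcal{P})}\right) \;=\; \Omega\!\left(\sqrt{d^{1-\frac{1}{\alpha}}/\varepsilon^2}\right) \;=\; \Omega\!\left(\frac{d^{\frac{1}{2}-\frac{1}{2\alpha}}}{\varepsilon}\right),
\]
which is exactly the desired bound.

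There is no real obstacle here; the only point to get right is citing the correct integer-$\alpha$ sample lower bound from \cite{AOST17} (as opposed to the non-integer versions already used earlier in this subsection). The proof is a one-line application of \cref{thm:main}, and I would write it in the same minimalist style as \cref{thm:Bel19-reproduced} and \cref{thm:ell2-rep}.
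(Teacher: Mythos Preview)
Your proposal is correct and follows exactly the same approach as the paper's proof: apply \cref{thm:main} to the classical sample lower bound $\Omega\rbra{d^{1-\frac{1}{\alpha}}/\varepsilon^2}$ for integer $\alpha\geq 2$ from \cite[Theorem 15]{AOST17}, yielding the stated query lower bound.
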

\begin{proof}
    To apply \cref{thm:main}, we only have to note that the sample lower bound for $\alpha$-R\'enyi entropy estimation for integer $\alpha \geq 2$ is $\Omega\rbra{\frac{d^{1-\frac{1}{\alpha}}}{\varepsilon^2}}$ in \cite[Theorem 15]{AOST17}. 
\end{proof}

\paragraph{Tsallis entropy.}
The $\alpha$-Tsallis entropy is defined by $\mathrm{H}_{\alpha}^{\textup{Tsa}}\rbra{P} = \frac{1}{1-\alpha}\rbra{\sum_{i=1}^d p_i^\alpha-1}$. 

\textit{The case of integer $\alpha \geq 2$.}
A quantum algorithm for estimating the $\alpha$-Tsallis entropy with query complexity $O\rbra{\frac{\sqrt{\log\rbra{1/\alpha\varepsilon}}}{\sqrt{\alpha}\varepsilon}}$ was given in \cite{Wan25} together with a lower bound of $\Omega\rbra{\frac{1}{\sqrt{\alpha}\varepsilon}}$. 

By quantum sample-to-query lifting, we can simply reproduce this result.

\begin{theorem}[{\cite[Lemma 1.4]{Wan25}} reproduced] \label{thm:integer-tsallis}
    For integer $\alpha \geq 2$, the quantum query complexity of $\alpha$-Tsallis entropy estimation is $\Omega\rbra{\frac{1}{\sqrt{\alpha}\varepsilon}}$. 
\end{theorem}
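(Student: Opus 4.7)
The plan is identical to the preceding entries in this section: apply \cref{thm:main} to a matching classical sample complexity lower bound and take the square root. To obtain the stated $\Omega(1/(\sqrt{\alpha}\,\varepsilon))$ quantum query lower bound, it suffices to exhibit a classical sample lower bound of $\Omega(1/(\alpha\varepsilon^2))$ for estimating $\mathrm{H}_\alpha^{\textup{Tsa}}(P)$ to additive accuracy $\varepsilon$ whenever $\alpha \geq 2$ is an integer.

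To produce such a sample lower bound, I would exploit the identity
\[
\mathrm{H}_\alpha^{\textup{Tsa}}(P) \;=\; \frac{1 - T_\alpha(P)}{\alpha - 1}, \qquad T_\alpha(P) := \sum_{i=1}^{d} p_i^\alpha,
\]
which reduces Tsallis entropy estimation to accuracy $\varepsilon$ to $\alpha$-th power-sum estimation to accuracy $(\alpha-1)\varepsilon = \Theta(\alpha\varepsilon)$. A classical $\Omega(1/(\alpha\varepsilon^2))$ sample lower bound for this power-sum task either appears directly in the distribution-testing literature or can be extracted from the lower-bound analysis that already supports the quantum bound in \cite{Wan25}; I would then cite it in a single line and plug it into \cref{thm:main}, in exact analogy with \cref{thm:integer-renyi}.

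If no ready-made citation is at hand, a two-point Le Cam argument delivers the bound from scratch: one constructs $P_0, P_1$ as small perturbations of a common base distribution supported on $O(1)$ atoms so that their power sums differ by $\Theta(\alpha\varepsilon)$ while their $n$-fold product measures have Hellinger distance bounded by a constant, with the factor $\alpha$ arising from the derivative of $x \mapsto x^\alpha$ at the base point. The anticipated obstacle is bookkeeping rather than conceptual: one must verify that after dividing the $T_\alpha$-gap by $\alpha - 1$ to recover a Tsallis entropy gap of $\Omega(\varepsilon)$, the resulting sample lower bound retains the correct factor of $\alpha$ and not something weaker. Once the sample lower bound is in place, \cref{thm:main} yields the theorem immediately.
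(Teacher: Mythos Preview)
Your proposal is correct and matches the paper's approach exactly: the paper's proof is a single line invoking the classical sample lower bound $\Omega\rbra{\frac{1}{\alpha\varepsilon^2}}$ for $\alpha$-Tsallis entropy estimation (citing \cite[Theorem 4.3]{CWYZ25}) and applying \cref{thm:main}. Your fallback Le Cam construction is unnecessary since the needed bound is available off the shelf, but it is a sound contingency.
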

\begin{proof}
    To apply \cref{thm:main}, we only have to note that the sample lower bound for $\alpha$-Tsallis entropy estimation for integer $\alpha \geq 2$ is $\Omega\rbra{\frac{1}{\alpha\varepsilon^2}}$ in \cite[Theorem 4.3]{CWYZ25}. 
\end{proof}

\subsubsection{Support size}

\begin{problem} [Support size estimation]
    Given an unknown distribution $P$ with minimum nonzero mass at least $\frac{1}{d}$, estimate the support size of $P$ to within additive error $\varepsilon d$. 
\end{problem}

The problem of support size estimation is also known as the $0$-R\'enyi entropy estimation, where the difference is that one has to assume the minimum nonzero mass of the unknown distribution. 
A quantum algorithm for support size estimation was given in \cite{LW19} with query complexity $\widetilde{O}\rbra{\frac{\sqrt{d}}{\varepsilon^{1.5}}}$, which is also the current best one. 
The prior best quantum query lower bound is $\widetilde{\Omega}\rbra{\sqrt{d}}$ due to \cite[Theorem 6.2]{BKT20}. 

By quantum sample-to-query lifting, we can obtain an improved quantum query lower bound.
\begin{theorem} \label{thm:support-size}
    The quantum query complexity of support size estimation is $\Omega\rbra{\frac{\sqrt{d}}{\sqrt{\log\rbra{d}}}\log\rbra{\frac{1}{\varepsilon}}}$. 
\end{theorem}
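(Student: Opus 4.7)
The plan is to follow the same recipe used throughout \cref{sec:c-entropy}: invoke \cref{thm:main} and reduce the problem to quoting a matching classical sample-complexity lower bound. Concretely, since support-size estimation (with minimum nonzero mass at least $1/d$) is a promise problem $\mathcal{Q}$ on distributions, the diagonal embedding described at the beginning of \cref{sec:lower} turns it into a quantum state testing problem $\mathcal{P}$ with $\mathsf{S}(\mathcal{P}) = \mathsf{S}(\mathcal{Q})$ and $\mathsf{Q}(\mathcal{P}) = \mathsf{Q}(\mathcal{Q})$. Then \cref{thm:main} gives $\mathsf{Q}(\mathcal{Q}) = \Omega(\sqrt{\mathsf{S}(\mathcal{Q})})$.

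Thus the only task is to locate the strongest available classical sample lower bound for support-size estimation and square-root it. The relevant tight bound, due to Wu and Yang (refining Valiant--Valiant), states that estimating the support size of a distribution with minimum nonzero mass at least $1/d$ to within additive error $\varepsilon d$ requires $\Omega\!\left(\frac{d}{\log(d)}\log^{2}(1/\varepsilon)\right)$ samples. Plugging this into \cref{thm:main} yields
\[
\mathsf{Q}(\mathcal{Q}) = \Omega\!\left(\sqrt{\frac{d}{\log(d)}\log^{2}(1/\varepsilon)}\right) = \Omega\!\left(\frac{\sqrt{d}}{\sqrt{\log(d)}}\log\!\left(\frac{1}{\varepsilon}\right)\right),
\]
which is the claimed bound.

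The only potential obstacle is verifying that the Wu--Yang lower bound is stated in the exact promise form used here (namely, a $2/3$-confidence distinguisher between yes/no instances separated by $\varepsilon d$ in support size), rather than for the $\ell_\infty$-style estimation formulation. This is standard, as the Wu--Yang construction (and Valiant--Valiant's before it) is precisely a two-point Le Cam-style distinguishing argument between distribution families with different support sizes, so it directly yields a promise-problem lower bound. With that in hand, the proof reduces to a single citation plus an application of \cref{thm:main}, mirroring the style of \cref{thm:lb-shannon,thm:lt1-renyi,thm:renyi-gt-1} above.
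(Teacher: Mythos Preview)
Your proposal is correct and matches the paper's proof essentially verbatim: the paper simply cites the Wu--Yang sample lower bound $\Omega\!\left(\frac{d}{\log(d)}\log^{2}(1/\varepsilon)\right)$ \cite{WY19} and applies \cref{thm:main} to take the square root. No additional argument is needed.
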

\begin{proof}
    To apply \cref{thm:main}, we only have to note that the sample lower bound for support size estimation is $\Omega\rbra{\frac{d}{\log\rbra{d}}\log^2\rbra{\frac{1}{\varepsilon}}}$ \cite{WY19}. 
\end{proof}

\subsubsection{\texorpdfstring{$k$}{k}-wise uniformity testing}

\begin{problem} [$k$-wise uniformity testing]
    Given an unknown distribution $P$, determine whether it is $k$-wise uniform or $\varepsilon$-far from any $k$-wise uniform distributions (in total variation distance). 
    Here, a distribution $P$ over $\cbra{0, 1}^n$ is said to be $k$-wise uniform, if for every subset $S \subseteq \cbra{1, 2, \dots, n}$ with $\abs{S} = k$, 
    \[
    \Pr_{x \sim P} \sbra{x_S = v} = \frac{1}{2^k}
    \]
    for any $v \in \cbra{0, 1}^k$, where $x_S$ means the subsequence of $x$ with index in $S$. 
\end{problem}

A quantum algorithm for $k$-wise uniformity testing was given in \cite{LWL24} with query complexity $O\rbra{\frac{n^{\frac{k}{2}}}{\varepsilon}}$ for constant $k \geq 2$. 

By quantum sample-to-query lifting, we can show a quantum query lower bound when $k = 2$.

\begin{theorem} \label{thm:k-wise}
    The quantum query complexity of $2$-wise uniformity testing is $\Omega\rbra{\frac{\sqrt{n}}{\varepsilon}}$. 
\end{theorem}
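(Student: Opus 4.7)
The plan is to follow the template set by all preceding theorems in this section: embed $2$-wise uniformity testing as a promise problem for quantum state testing via diagonal states (as described at the beginning of Section~\ref{sec:lower}), then invoke Theorem~\ref{thm:main} on top of a matching classical sample-complexity lower bound. Concretely, a distribution $P$ over $\{0,1\}^n$ is naturally viewed as a distribution over $d = 2^n$ outcomes, and the corresponding promise sets $\mathcal{Q}^{\textup{yes}}$ (the $2$-wise uniform distributions) and $\mathcal{Q}^{\textup{no}}$ (distributions $\varepsilon$-far from $2$-wise uniform in total variation distance) lift verbatim to promise sets of diagonal mixed states. Then Theorem~\ref{thm:main} yields $\mathsf{Q}(\mathcal{Q}) = \Omega(\sqrt{\mathsf{S}(\mathcal{Q})})$.

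To hit the target lower bound $\Omega(\sqrt{n}/\varepsilon)$, I need a classical sample lower bound of $\Omega(n/\varepsilon^2)$ for $2$-wise uniformity testing. This is the standard tight classical sample complexity for $k$-wise uniformity testing at $k=2$ (the sample complexity for $k$-wise uniformity testing over $\{0,1\}^n$ behaves like $\Theta(n^{k/2}/\varepsilon^2)$ for constant $k$, and in particular $\Theta(n/\varepsilon^2)$ for $k=2$), and both the upper and lower sides have been established in the classical distribution testing literature. Plugging this into Theorem~\ref{thm:main} gives $\mathsf{Q}(\mathcal{Q}) = \Omega(\sqrt{n/\varepsilon^2}) = \Omega(\sqrt{n}/\varepsilon)$, exactly as claimed.

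The only real step is therefore to pin down a clean citation for the classical $\Omega(n/\varepsilon^2)$ sample lower bound on $2$-wise uniformity testing; once that is in place, the quantum step is a one-line application of Theorem~\ref{thm:main}, and there is no additional quantum ingredient to worry about. In particular, no new adversary method, polynomial method, or compressed oracle argument is needed, which is exactly the point of presenting this bound under the sample-to-query lifting framework: an \emph{optimal-in-$\varepsilon$} quantum query lower bound (matching the $O(n/\varepsilon)$ upper bound of \cite{LWL24} in the $\varepsilon$ dependence) falls out of an already-known classical testing lower bound.
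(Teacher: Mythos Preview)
Your proposal is correct and matches the paper's proof exactly: the paper applies \cref{thm:main} to the classical sample lower bound $\Omega(n/\varepsilon^2)$ for $2$-wise uniformity testing, citing \cite[Theorem 5]{OZ18} for that bound. The only thing you left unspecified is the precise reference, which is \cite{OZ18}.
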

\begin{proof}
    To apply \cref{thm:main}, we only have to note that the sample lower bound for $2$-wise uniformity testing is $\Omega\rbra{\frac{n}{\varepsilon^2}}$ \cite[Theorem 5]{OZ18}.
\end{proof}

\cref{thm:k-wise} shows that the quantum query upper bound given in \cite{LWL24} is optimal in $\varepsilon$. 

\subsection{For quantum properties}

For the quantum query complexity of testing quantum properties, we assume purified quantum query access (\cref{def:purified-access}) to the input quantum states. 

\subsubsection{Mixedness testing}

\begin{problem}[Mixedness testing]
Given an unknown $d$-dimensional state $\rho$, determine whether it is the maximally mixed (i.e., $\rho=\frac{I}{d}$) or $\varepsilon$-far from the maximally mixed state in trace distance (i.e., $\|\rho-\frac{I}{d}\|_1\geq \varepsilon$). 
\end{problem}

\begin{theorem}\label{thm-10150001}
The quantum query complexity of mixedness testing is $\Omega(\frac{\sqrt{d}}{\varepsilon})$.
\end{theorem}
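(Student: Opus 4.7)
The plan is to apply the quantum sample-to-query lifting (\cref{thm:main}) directly, following the same template already used repeatedly in this section. Specifically, to establish $\mathsf{Q}(\mathcal{P}) = \Omega(\sqrt{d}/\varepsilon)$ for the mixedness testing problem $\mathcal{P}$, it suffices to exhibit a matching quantum sample lower bound of the form $\mathsf{S}(\mathcal{P}) = \Omega(d/\varepsilon^2)$; then \cref{thm:main} converts this into the desired query bound, since $\sqrt{d/\varepsilon^2} = \sqrt{d}/\varepsilon$.

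The quantum sample complexity of mixedness testing in trace distance is well known: O'Donnell and Wright (and independently Bubeck--Chen--Li for the entanglement-free case with matching lower bounds even against collective measurements) established $\mathsf{S}(\mathcal{P}) = \Theta(d/\varepsilon^2)$. The lower bound instance is the standard one, contrasting $\rho = I/d$ against random perturbations in the form $\rho = (I + \varepsilon H)/d$ where $H$ is drawn from an appropriate ensemble (e.g.\ a rescaled GUE or sign-flipped diagonal), and the bound follows from analyzing the distinguishability of the two induced mixed-state ensembles via Schur--Weyl / Young-diagram arguments or Haar-averaged moments. So the proof reduces entirely to citing the existing sample lower bound and invoking \cref{thm:main}.

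Concretely, I would write: ``By \cite[Theorem 1.10]{OW15} (or \cite{BCL20}), the sample complexity of mixedness testing is $\Omega(d/\varepsilon^2)$. Applying \cref{thm:main} yields $\mathsf{Q}(\mathcal{P}) = \Omega(\sqrt{\mathsf{S}(\mathcal{P})}) = \Omega(\sqrt{d}/\varepsilon)$.'' There is no real obstacle here; the only subtlety is choosing the correct citation for the sample lower bound with explicit $\varepsilon$-dependence in trace distance (rather than the weaker constant-$\varepsilon$ form), but this is standard. Given the uniform style of this paper, the entire proof should be two lines.
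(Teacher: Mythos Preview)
Your proposal is correct and matches the paper's proof essentially verbatim: the paper also invokes the sample lower bound $\Omega(d/\varepsilon^2)$ from \cite[Theorem 1.10]{OW21} and then applies \cref{thm:main} to obtain $\Omega(\sqrt{d}/\varepsilon)$. The only cosmetic difference is the citation choice (the paper uses OW21 rather than OW15/BCL20).
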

\begin{proof}
    To apply \cref{thm:main}, we only have to note that the sample lower bound for mixedness testing is $\Omega\rbra{\frac{d}{\varepsilon^2}}$ \cite[Theorem 1.10]{OW21}.
\end{proof}

The lower bound in \cref{thm-10150001} also applies to quantum state certification \cite{BOW19}, where the task is to determine whether an unknown quantum state $\rho$ is identical to a given one $\sigma$ or $\rho$ is $\varepsilon$-far from $\sigma$ in trace distance. 

\begin{corollary} \label{corollary:state-cert}
    The quantum query complexity of quantum state certification is $\Omega\rbra{\frac{\sqrt{d}}{\varepsilon}}$. 
\end{corollary}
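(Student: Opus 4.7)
The plan is to derive the corollary by a straightforward specialization argument: mixedness testing is literally the quantum state certification problem with reference state $\sigma = I/d$. Concretely, the instance used in \cref{thm-10150001} provides a pair of disjoint sets of quantum states (the singleton $\{I/d\}$ versus all states that are $\varepsilon$-far from $I/d$ in trace distance), and this pair is a valid instance of state certification for the fixed choice $\sigma = I/d$. Hence any algorithm for state certification, which must succeed for every reference $\sigma$, in particular solves this instance. Therefore the $\Omega(\sqrt{d}/\varepsilon)$ query lower bound for mixedness testing transfers verbatim to state certification.

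An equivalent route is to bypass mixedness testing and apply \cref{thm:main} directly to state certification. The classical sample-complexity lower bound for state certification in trace distance is $\Omega(d/\varepsilon^2)$, as established by Bădescu, O'Donnell, and Wright; quantum sample-to-query lifting then yields $\Omega(\sqrt{d}/\varepsilon)$. The two routes give the same bound. There is no real technical obstacle here — the only step is to recognize the reduction (or to quote the correct sample-complexity lower bound). I would prefer the first route, since it is a one-line appeal to the theorem just proved and makes the logical dependence on \cref{thm-10150001} transparent.
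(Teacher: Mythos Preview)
Your proposal is correct and matches the paper's own reasoning exactly: the paper simply remarks that the lower bound in \cref{thm-10150001} ``also applies to quantum state certification'' because mixedness testing is the special case $\sigma = I/d$, which is precisely your first route. Your alternative route via the $\Omega(d/\varepsilon^2)$ sample lower bound of B\u{a}descu--O'Donnell--Wright is also valid and yields the same bound.
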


Moreover, \cref{thm-10150001} also implies an improved quantum query lower bound given in \cite[Corollary 4.19]{WZ25b} for the mixedness testing of block-encoded matrices.

\begin{corollary} \label{corollary:mixedness-be}
    The quantum query complexity of determining whether an Hermitian matrix $A$ block-encoded in a unitary oracle has spectrum $\rbra{\frac{1}{d}, \dots, \frac{1}{d}}$ or $\varepsilon$-far in trace norm is $\Omega\rbra{\frac{\sqrt{d}}{\varepsilon}}$. 
\end{corollary}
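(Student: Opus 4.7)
The plan is to apply \cref{corollary:lifting-block-encoding} with $\alpha = 1$, paralleling the structure of the proof of \cref{thm-10150001} but for the block-encoding input model instead of the purified-access model. First I would observe that the yes-instance ($A = I/d$) is already a density matrix, and one is free to restrict the no-instances to density matrices $\rho$ satisfying $\Abs{\rho - I/d}_1 \geq \varepsilon$; this restriction only shrinks the instance set and so only decreases the complexity, so a lower bound on the restricted problem is a fortiori a lower bound on the general Hermitian block-encoded version.

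Under this restriction the problem becomes exactly the mixedness testing promise problem, now accessed via a block-encoding with $\alpha = 1$ rather than via a purification. By \cref{corollary:lifting-block-encoding}, its quantum query complexity is at least $\Omega\rbra*{\sqrt{\mathsf{S}\rbra{\mathcal{P}}}}$, where $\mathsf{S}\rbra{\mathcal{P}}$ denotes the sample complexity of the same mixedness testing problem. Substituting the known sample lower bound $\mathsf{S}\rbra{\mathcal{P}} = \Omega\rbra{d/\varepsilon^2}$ from \cite[Theorem 1.10]{OW21}, exactly as in \cref{thm-10150001}, yields the desired $\Omega\rbra{\sqrt{d}/\varepsilon}$.

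In this case there is essentially no technical obstacle: the argument reduces to a direct invocation of the lifting theorem together with a known classical-style sample lower bound for density matrices. The only conceptual point to make is that we specifically need the strengthened $\alpha = 1$ form of the lifting, which in turn relies on \cref{thm:main}; this is precisely what allows us to drop the polylogarithmic factor present in the earlier bound \cite[Corollary 4.19]{WZ25b} and obtain an $\Omega\rbra{\sqrt{d}/\varepsilon}$ rather than $\widetilde{\Omega}\rbra{\sqrt{d}/\varepsilon}$.
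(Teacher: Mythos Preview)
Your proposal is correct and matches the paper's approach. The paper does not give an explicit proof of \cref{corollary:mixedness-be} but simply states that it is implied by \cref{thm-10150001}; your argument spells out exactly this implication by restricting to density-matrix instances and invoking \cref{corollary:lifting-block-encoding} together with the $\Omega(d/\varepsilon^2)$ sample lower bound from \cite[Theorem 1.10]{OW21}, which is the natural way to unpack that implication.
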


\subsubsection{Rank testing}

\begin{problem}[Rank testing]
Given an unknown state $\rho$, determine whether it is of rank at most $r$ or $\varepsilon$-far from any state that is of rank at most $r$ in trace norm.
\end{problem}
\begin{theorem} \label{thm:rank-testing}
The quantum query complexity of rank testing is $\Omega(\frac{\sqrt{r}}{\sqrt{\varepsilon}})$.
\end{theorem}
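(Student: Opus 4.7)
The plan is to apply \cref{thm:main} with an appropriate sample complexity lower bound for rank testing, in perfect analogy with the one-line proofs used throughout this subsection (e.g., \cref{thm-10150001}). To obtain the stated query lower bound $\Omega(\sqrt{r/\varepsilon})$, it suffices to exhibit a sample lower bound of $\Omega(r/\varepsilon)$ for the rank-testing promise problem; taking square roots via \cref{thm:main} then yields the theorem immediately.

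For the required sample lower bound, I would look to the quantum spectrum-testing literature. The natural hard-instance construction takes the YES ensemble to be the maximally mixed state on a Haar-random $r$-dimensional subspace of $\mathbb{C}^d$, and the NO ensemble to be a perturbation that redistributes a $\Theta(\varepsilon)$ fraction of the trace-norm mass onto the orthogonal complement, producing a state that is at least $\varepsilon$-far in trace norm from every rank-$r$ state (the closest rank-$r$ approximation simply zeroes out the smallest $d-r$ eigenvalues, whose sum equals the perturbation). Distinguishability of these two ensembles from $n$ copies reduces, via Schur-Weyl duality, to bounding the total variation between two explicit distributions on Young diagrams of size $n$; a computation in the O'Donnell-Wright style should give the $\Omega(r/\varepsilon)$ sample lower bound.

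The main obstacle is pinning down the correct $\varepsilon$-scaling of the sample bound. A naive reduction that embeds a mixedness-testing instance on the $r$-dimensional support would instead yield the stronger sample bound $\Omega(r/\varepsilon^2)$, which would translate into a query bound $\Omega(\sqrt{r}/\varepsilon)$ strictly better than the one claimed; whether such a reduction is actually valid for \emph{rank} testing (as opposed to testing equality to a fixed maximally mixed state of rank $r$) is exactly the point to be checked. Once the sample lower bound is correctly identified (or a direct citation located), the proof collapses to a single sentence invoking \cref{thm:main}.
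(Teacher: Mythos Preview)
Your approach is correct and matches the paper's proof: the paper simply invokes \cref{thm:main} together with the sample lower bound $\Omega(r/\varepsilon)$ for rank testing, citing \cite[Theorem~1.11]{OW21}. Your uncertainty about the $\varepsilon$-scaling is resolved by that reference---the correct sample bound is indeed $\Omega(r/\varepsilon)$, not $\Omega(r/\varepsilon^2)$, and your intuition that the mixedness-testing reduction does not go through for rank testing (because a state $\varepsilon$-far from $I_r/r$ need not be $\varepsilon$-far from \emph{every} rank-$r$ state) is exactly right.
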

\begin{proof}
    To apply \cref{thm:main}, we only have to note that the sample lower bound for rank testing is $\Omega\rbra{\frac{r}{\varepsilon}}$ \cite[Theorem 1.11]{OW21}.
\end{proof}

\cref{thm:rank-testing} also implies an improved quantum query lower bound given in \cite[Corollary 4.8]{WZ25b} for the rank testing of block-encoded matrices. 

\begin{corollary} \label{corollary:rank-be}
    The quantum query complexity of determining whether an Hermitian matrix $A$ block-encoded in a unitary oracle has rank $\leq r$ or $\varepsilon$-far in trace norm from any matrices of rank $\leq r$ is $\Omega\rbra{\frac{\sqrt{r}}{\sqrt{\varepsilon}}}$. 
\end{corollary}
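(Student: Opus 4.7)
The plan is to derive \cref{corollary:rank-be} as an immediate consequence of \cref{thm:rank-testing}, exploiting the fact that density matrices form a subclass of Hermitian matrices and that block-encoding access can be simulated from purified quantum query access with only constant overhead. Concretely, by \cite[Lemma 7]{LC19} and \cite[Lemma 25]{GSLW19}, a block-encoding of a density matrix $\rho$ can be implemented using $O(1)$ queries to a purified quantum query access oracle for $\rho$. Therefore any quantum query algorithm solving rank testing for block-encoded Hermitian matrices, when restricted to instances in which $A = \rho$ is itself a density matrix, yields a quantum query algorithm of the same asymptotic query complexity for rank testing of density matrices in the purified-access model.

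Feeding the purified-access lower bound $\Omega(\sqrt{r/\varepsilon})$ from \cref{thm:rank-testing} through this reduction gives the claimed $\Omega(\sqrt{r/\varepsilon})$ lower bound in the block-encoding model. An equivalent one-line route is to invoke \cref{corollary:lifting-block-encoding} with $\alpha = 1$ on the rank-testing promise problem $\mathcal{P}$ over density matrices, whose sample complexity is $\mathsf{S}(\mathcal{P}) = \Omega(r/\varepsilon)$ by \cite[Theorem 1.11]{OW21}; this immediately yields $\mathsf{Q}_\diamond^1(\mathcal{P}) = \Omega(\sqrt{r/\varepsilon})$, which is a special case of the block-encoded Hermitian rank-testing query complexity.

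The only subtlety I anticipate is ensuring that the hard instances of \cite{OW21} satisfy the Hermitian-matrix NO-condition of being $\varepsilon$-far in trace norm from \emph{any} matrix of rank $\leq r$, rather than merely from any rank-$\leq r$ \emph{density} matrix. This should be dispatched quickly by noting that the optimal rank-$\leq r$ trace-norm approximant of any Hermitian matrix is its spectral truncation, so the two NO-conditions coincide (up to constants) on density-matrix inputs. Given how tightly the rest of the paper hinges on \cref{thm:main} and \cref{corollary:lifting-block-encoding}, I expect the author's proof to be essentially a one-liner citing these together with the sample bound of \cite{OW21}.
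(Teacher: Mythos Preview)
Your proposal is correct and follows the same approach as the paper: the paper simply remarks that \cref{corollary:rank-be} is implied by \cref{thm:rank-testing}, which is precisely your first route (purified access simulates block-encoding with $O(1)$ overhead via \cite{LC19,GSLW19}), and your alternative via \cref{corollary:lifting-block-encoding} with the $\Omega(r/\varepsilon)$ sample bound of \cite{OW21} is an equally valid one-liner. Your handling of the YES/NO-condition subtlety (spectral truncation gives the optimal rank-$r$ trace-norm approximant, so the Hermitian and density-matrix NO-conditions agree up to a factor of $2$) is more careful than what the paper spells out, but does not change the argument.
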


\subsubsection{Uniformity distinguishing}

\begin{problem}[Uniformity distinguishing]
Given an unknown state $\rho$, determine whether $\rho$'s spectrum is uniform on either $r$ or $r+\Delta$ eigenvalues.
\end{problem}
\begin{theorem} \label{thm:uniformity-distinguishing}
For $1\leq \Delta< r$, the quantum query complexity of uniformity distinguishing is $\Omega^*(\frac{r}{\sqrt{\Delta}})$.\footnote{$\Omega^*\rbra{\cdot}$ suppresses quasi-polylogarithmic factors.} For $\Delta = r$, a better lower bound of $\Omega(\sqrt{r})$ can be obtained.
\end{theorem}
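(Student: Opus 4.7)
The plan is to obtain both bounds as direct applications of the quantum sample-to-query lifting in \cref{thm:main}, by locating (or extracting) matching quantum sample lower bounds for uniformity distinguishing and then taking square roots. Concretely, for the regime $1 \leq \Delta < r$, what is needed is a sample lower bound of the form $\mathsf{S} = \Omega^{*}\rbra{r^{2}/\Delta}$ for distinguishing the maximally mixed state on $r$ eigenvalues from the maximally mixed state on $r+\Delta$ eigenvalues; and for $\Delta = r$, a sample lower bound of $\Omega\rbra{r}$.

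First I would recall that this distinguishing problem is essentially a spectrum-testing problem: the two hypotheses are characterized by their (uniform) spectra, and the instances are orbits under conjugation by a Haar-random unitary. Hence the optimal tester can be taken to be Schur-Weyl/empirical-Young-diagram based, and the corresponding quantum sample lower bounds can be read off from the spectrum-testing literature (in particular the Young-diagram analysis of O'Donnell--Wright and follow-ups). For $\Delta = r$, distinguishing uniform rank-$r$ from uniform rank-$2r$ reduces to mixedness/rank testing with a constant gap, for which a sample lower bound of $\Omega\rbra{r}$ is standard (e.g., from \cite{OW21}); applying \cref{thm:main} yields $\Omega\rbra{\sqrt{r}}$. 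For $1 \leq \Delta < r$, a lower bound of $\Omega^{*}\rbra{r^{2}/\Delta}$ samples is known in the spectrum-testing framework, possibly at the cost of quasi-polylogarithmic factors coming from the analysis of the Young-diagram distribution, which is why the paper uses the $\Omega^{*}$ notation in the statement. Taking square roots then gives $\Omega^{*}\rbra{r/\sqrt{\Delta}}$.

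The main step of the proof, therefore, is bookkeeping: identify the precise reference that gives the desired sample lower bound in each regime, verify that the instance construction is indeed a spectrum-testing problem (so that the promise encoding of \cref{sec:lower} applies verbatim), and then invoke \cref{thm:main}. The main obstacle I anticipate is the $\Omega^{*}$ regime: the best available sample lower bound for distinguishing two uniform spectra of nearby ranks tends to carry quasi-polylogarithmic slack (coming from concentration inequalities for partition statistics on the Schur basis), and it is this slack that propagates through the square-root lifting and forces the $\Omega^{*}$ in place of $\Omega$. If a cleaner sample lower bound without these factors were available, the same argument would immediately upgrade the query lower bound to $\Omega\rbra{r/\sqrt{\Delta}}$; otherwise, the statement as given is already the best that can be extracted from \cref{thm:main} in a black-box manner.
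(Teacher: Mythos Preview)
Your proposal is correct and takes essentially the same approach as the paper: cite the sample lower bound $\Omega^*\rbra{r^2/\Delta}$ from \cite[Theorem 1.12]{OW21} for $1 \leq \Delta < r$ and the sample lower bound $\Omega\rbra{r}$ from \cite[Theorem 3]{CHW07} (see also \cite[Theorem 1.9]{OW21}) for $\Delta = r$, then apply \cref{thm:main} in each case. The paper's proof is precisely these two citations plus the lifting, with no additional argument needed.
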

\begin{proof}
    To apply \cref{thm:main}, we only have to note that the sample lower bound for uniformity distinguishing is $\Omega^*\rbra{\frac{r^2}{\Delta}}$ \cite[Theorem 1.12]{OW21} for $1\leq \Delta<r$.
    For $\Delta=r$, a better query lower bound of $\Omega(\sqrt{r})$ can be obtained by noting the sample lower bound $\Omega\rbra{r}$ for uniformity distinguishing in \cite[Theorem 3]{CHW07} (see also \cite[Theorem 1.9]{OW21}).
\end{proof}

\cref{thm:uniformity-distinguishing} also implies an improved quantum query lower bound given in \cite[Corollary 4.10]{WZ25b} for the rank testing of block-encoded matrices when $\Delta = r$. 

\begin{corollary} \label{corollary:uniform-dis-be}
    The quantum query complexity of determining whether an Hermitian matrix block-encoded in a unitary oracle has spectrum uniform on either $r$ or $2r$ eigenvalues is $\Omega\rbra{\sqrt{r}}$. 
\end{corollary}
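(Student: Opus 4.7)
The plan is to deduce this corollary as an immediate consequence of Corollary~\ref{corollary:lifting-block-encoding} (the block-encoding lifting result) combined with the sample lower bound already used in the proof of Theorem~\ref{thm:uniformity-distinguishing} in the special case $\Delta = r$. The two ingredients already appear in the excerpt, so the task reduces to checking that the block-encoding promise problem fits the state-testing framework of Corollary~\ref{corollary:lifting-block-encoding}.

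First, I would recast the block-encoded promise problem as a quantum state testing problem. Observe that a Hermitian matrix with spectrum uniform on exactly $r$ (resp.\ $2r$) eigenvalues that is consistent with being block-encoded with normalization $\alpha \geq 1$ is, up to a fixed scaling, a density matrix: the \textup{yes}-case matrices have spectrum $\rbra{1/r, \dots, 1/r, 0, \dots, 0}$ with $r$ nonzero eigenvalues, and the \textup{no}-case matrices have spectrum $\rbra{1/(2r), \dots, 1/(2r), 0, \dots, 0}$ with $2r$ nonzero eigenvalues; both are positive semidefinite of unit trace. Hence the problem is exactly the uniformity distinguishing promise problem $\mathcal{P}$ from Theorem~\ref{thm:uniformity-distinguishing} at $\Delta = r$, but now with access given via a block-encoding, whose query complexity is $\mathsf{Q}_\diamond^{\alpha}\rbra{\mathcal{P}}$.

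Next, I would invoke the classical sample lower bound $\mathsf{S}\rbra{\mathcal{P}} = \Omega\rbra{r}$ established in \cite[Theorem 3]{CHW07} for uniformity distinguishing when $\Delta = r$, which is the same bound used in the second part of the proof of Theorem~\ref{thm:uniformity-distinguishing}. Plugging this into Corollary~\ref{corollary:lifting-block-encoding} gives
\[
\mathsf{Q}_\diamond^{\alpha}\rbra{\mathcal{P}} \;=\; \Omega\rbra*{\sqrt{\mathsf{S}\rbra{\mathcal{P}}}} \;=\; \Omega\rbra*{\sqrt{r}}
\]
for every $\alpha \geq 1$, which is precisely the claimed bound.

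There is no real obstacle here: the derivation is a straight application of the block-encoded lifting theorem to a clean (logarithm-free) classical sample bound, which is why we obtain a genuine $\Omega\rbra{\sqrt{r}}$ rather than the $\widetilde{\Omega}\rbra{\sqrt{r}}$ of \cite[Corollary 4.10]{WZ25b}. The only point requiring mild care is verifying that the matrices admitted by the promise are genuine density matrices so that purified access (and hence Corollary~\ref{corollary:lifting-block-encoding}) applies, but this is immediate from the uniform-spectrum normalization.
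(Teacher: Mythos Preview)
Your proposal is correct and follows the same route as the paper: the corollary is stated immediately after \cref{thm:uniformity-distinguishing} as a direct consequence of it (for $\Delta=r$) in the block-encoded setting, which is exactly what you do by combining the sample lower bound $\Omega(r)$ from \cite{CHW07} with \cref{corollary:lifting-block-encoding}. The only point you spell out that the paper leaves implicit is that the promised matrices are genuine density operators so that the state-testing lifting applies; this is correct and worth noting, but it is not a departure from the paper's argument.
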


\subsubsection{Entanglement spectrum testing}
\begin{problem}[Maximal entanglement testing]
Given an unknown bipartite state $\ket{\psi}\in \mathbb{C}^d\otimes\mathbb{C}^d$, determine whether it is the maximally entangled or $\varepsilon$-far from any maximally entangled state in trace norm. 
\end{problem}

\begin{theorem} \label{thm:maximal-entanglement}
The quantum query complexity of maximal entanglement testing is $\Omega(\frac{\sqrt{d}}{\varepsilon})$.
\end{theorem}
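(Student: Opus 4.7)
The plan is to apply \cref{thm:main}, which reduces the task to establishing a sample lower bound of $\Omega(d/\varepsilon^2)$ for maximal entanglement testing, following the template used throughout this section.

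For the sample lower bound, the key observation is that $\ket{\psi}$ is maximally entangled iff the reduced state $\rho_A := \tr_B \ketbra{\psi}{\psi}$ equals $I/d$. Using the formula $F_{\max}(\ket{\psi}) = \frac{1}{d}\bigl(\sum_i \sqrt{\lambda_i(\rho_A)}\bigr)^2$ for the maximum fidelity of $\ket{\psi}$ with any maximally entangled state, a short calculation shows that on the canonical hard instance with reduced spectrum $\lambda_i = (1\pm\varepsilon)/d$, the trace distance from $\ket{\psi}$ to the nearest maximally entangled state and $\|\rho_A - I/d\|_1$ agree up to constant factors. So it suffices to separate the two cases in which $\rho_A = I/d$ versus $\|\rho_A - I/d\|_1 = \Omega(\varepsilon)$, given $n$ samples of $\ket{\psi}$.

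I would then use a Haar-randomization plus Schur--Weyl argument: average $\ketbra{\psi}{\psi}^{\otimes n}$ over local Haar rotations $(U \otimes V)^{\otimes n}$. By Schur--Weyl duality on either the $A^n$ or the $B^n$ register, the induced distribution on Young diagrams is precisely the Schur-sampling distribution of the reduced spectrum $\lambda$, and distinguishing the two averaged $n$-sample states becomes equivalent to distinguishing the Schur-sampling distributions of $I/d$ and of a spectrum that is $\varepsilon$-far from uniform, which is exactly mixedness testing of $\rho_A^{\otimes n}$. The $\Omega(d/\varepsilon^2)$ lower bound of \cite[Theorem 1.10]{OW21} then transfers to give the desired sample lower bound, and \cref{thm:main} yields $\mathsf{Q} = \Omega(\sqrt{d}/\varepsilon)$.

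The main obstacle is carefully tracking the $\varepsilon$ dependence through both the fidelity-to-trace-distance translation and the Schur--Weyl reduction; however, both steps are essentially standard in the quantum spectrum testing literature, so no new techniques are required beyond what is already in \cite{OW15,OW21}.
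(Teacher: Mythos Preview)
Your proposal is correct and takes essentially the same approach as the paper: apply \cref{thm:main} to a sample lower bound of $\Omega(d/\varepsilon^2)$ for maximal entanglement testing. The only difference is that the paper obtains this sample bound by a direct citation to \cite[Corollary 1.6]{CWZ24}, whereas you sketch its proof (the Haar-averaging/Schur--Weyl reduction to mixedness testing of the reduced state, followed by \cite[Theorem 1.10]{OW21}); since that reduction is precisely the content of the cited corollary in \cite{CWZ24}, you could simply invoke it and avoid the extra work. One cosmetic point: in the paper's convention $\mathrm{F}(\rho,\sigma)=\tr\sqrt{\sqrt{\sigma}\rho\sqrt{\sigma}}$, so your expression $\frac{1}{d}\bigl(\sum_i\sqrt{\lambda_i}\bigr)^2$ is the \emph{squared} maximum fidelity, not $F_{\max}$ itself; this does not affect the $\Theta(\varepsilon)$ distance calculation on the hard instance.
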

\begin{proof}
   To apply \cref{thm:main}, we only have to note that the sample lower bound for maximal entanglement testing is $\Omega\rbra{\frac{d}{\varepsilon^2}}$~\cite[Corollary 1.6]{CWZ24}.
\end{proof}

\begin{problem}[Schmidt rank testing]
Given an unknown  bipartite state $\ket{\psi}$, determine whether it is of Schmidt rank at most $r$ or $\varepsilon$-far from any state that is of Schmidt rank at most $r$ in trace norm.
\end{problem}
\begin{theorem} \label{thm:schmidt-rank}
The quantum query complexity of Schmidt rank testing is $\Omega(\frac{\sqrt{r}}{\varepsilon})$.
\end{theorem}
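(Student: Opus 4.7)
The plan is to follow exactly the one-line template used by every preceding theorem in this subsection: apply the quantum sample-to-query lifting \cref{thm:main} and reduce the query bound to a sample lower bound. Concretely, the claimed $\Omega(\sqrt{r}/\varepsilon)$ query bound follows immediately from a matching sample lower bound of $\Omega(r/\varepsilon^2)$ for Schmidt rank testing, since $\mathsf{Q}(\mathcal{P}) = \Omega(\sqrt{\mathsf{S}(\mathcal{P})})$.

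To justify that sample lower bound, I would first observe the basic equivalence that a bipartite pure state $\ket{\psi}$ has Schmidt rank at most $r$ if and only if its reduction $\rho_{\mathsf{A}} = \tr_{\mathsf{B}} \ketbra{\psi}{\psi}$ has rank at most $r$. Combining Uhlmann's theorem with the Fuchs--van de Graaf inequalities, one checks that if $\rho_{\mathsf{A}}$ is $\varepsilon$-far in trace distance from every rank-$\leq r$ density matrix, then $\ket{\psi}$ is $\Omega(\varepsilon)$-far in trace distance from every Schmidt-rank-$\leq r$ pure state (any candidate close pure state $\ket{\phi}$ would, by Uhlmann, have a reduction within fidelity $|\braket{\psi}{\phi}|$ of $\rho_{\mathsf{A}}$). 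Thus any hard-instance family for rank testing, lifted through canonical purifications, furnishes a hard-instance family for Schmidt rank testing of the same sample cost. The target $\Omega(r/\varepsilon^2)$ is in the same spirit as \cite[Corollary 1.6]{CWZ24}, whose $\Omega(d/\varepsilon^2)$ bound for maximal entanglement testing is exactly the $r = d$ case of the statement to be proved.

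The main obstacle is precisely the quadratic-in-$\varepsilon$ improvement. Directly plugging the mixed-state rank testing sample bound $\Omega(r/\varepsilon)$ of \cite[Theorem 1.11]{OW21} into the reduction above would only yield $\Omega(\sqrt{r/\varepsilon})$ queries, which is weaker than the claim by a factor of $\sqrt{1/\varepsilon}$. Obtaining the tighter $\Omega(r/\varepsilon^2)$ should come from a pure-state Schur--Weyl analysis of the OW21 hard instance, mirroring the way \cite{CWZ24} promotes the $\Omega(d/\varepsilon)$ mixedness-testing bound of \cite{OW21} to $\Omega(d/\varepsilon^2)$ in the purified-access regime: the extra $1/\varepsilon$ factor is ``paid for'' by the rigidity of pure-state copies relative to mixed-state copies. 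Once this strengthened sample lower bound is in hand, the theorem follows in a single line from \cref{thm:main}, in exact parallel with \cref{thm:maximal-entanglement}.
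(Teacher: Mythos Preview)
Your proposal is correct and takes essentially the same approach as the paper: apply \cref{thm:main} to a sample lower bound of $\Omega(r/\varepsilon^2)$ for Schmidt rank testing. The only difference is that the paper cites this sample bound directly as \cite[Corollary 1.4]{CWZ24} rather than re-deriving it; your sketch of how the $\varepsilon$-quadratic improvement arises (pure-state Schur--Weyl analysis in the style of \cite{CWZ24}, as opposed to the mixed-state $\Omega(r/\varepsilon)$ of \cite{OW21}) is exactly the content of that corollary.
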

\begin{proof}
   To apply \cref{thm:main}, we only have to note that the sample lower bound for Schmidt rank testing is $\Omega\rbra{\frac{r}{\varepsilon^2}}$~\cite[Corollary 1.4]{CWZ24}.
\end{proof}

\begin{problem}[Uniform Schmidt coefficient testing]
Given an unknown bipartite state $\ket{\psi}$, determine whether $\ket{\psi}$'s entanglement spectrum is uniform on either $r$ or $r+\Delta$ Schmidt coefficients.
\end{problem}
\begin{theorem}\label{thm:uniform-schmidt}
For $1\leq \Delta < r$, the quantum query complexity of uniform Schmidt coefficient testing is $\Omega^*(\frac{r}{\sqrt{\Delta}})$. For $\Delta = r$, a better lower bound of $\Omega(\sqrt{r})$ can be obtained.
\end{theorem}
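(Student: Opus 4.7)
The plan is to apply \cref{thm:main} in combination with a sample complexity lower bound obtained by lifting from the corresponding mixed-state problem. First, I would observe that a bipartite pure state $\ket{\psi} = \sum_i \sqrt{\lambda_i}\ket{a_i}\ket{b_i}$ has Schmidt coefficients uniformly distributed over $r$ (resp.\ $r+\Delta$) nonzero values if and only if its reduced density matrix $\rho = \tr_{\mathsf{B}}(\ketbra{\psi}{\psi})$ has spectrum uniformly supported on $r$ (resp.\ $r+\Delta$) eigenvalues. Thus uniform Schmidt coefficient testing is precisely the pure-state analogue of the uniformity distinguishing problem of \cref{thm:uniformity-distinguishing}.

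Next, I would invoke the pure-state sample lifting framework of \cite{CWZ24} (the same machinery that underlies Corollaries 1.4 and 1.6 there, already cited in the proofs of \cref{thm:schmidt-rank,thm:maximal-entanglement}) to transfer the mixed-state sample lower bound for uniformity distinguishing to the pure-state setting. Concretely, this yields a sample lower bound of $\Omega^*(r^2/\Delta)$ for $1 \leq \Delta < r$ via \cite[Theorem 1.12]{OW21}, and of $\Omega(r)$ for $\Delta = r$ via \cite[Theorem 3]{CHW07} (see also \cite[Theorem 1.9]{OW21}). An application of \cref{thm:main} then immediately produces the claimed query lower bounds $\Omega^*(r/\sqrt{\Delta})$ and $\Omega(\sqrt{r})$, respectively.

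The main step to be careful about is the pure-to-mixed sample lifting itself, since a single copy of $\ket{\psi}$ is, a priori, strictly more informative than a single copy of $\rho$ (the purifying subsystem can be jointly measured with the first). The resolution, which is exactly what the framework of \cite{CWZ24} provides, is to construct the hard instances with an adversarially randomized Schmidt basis on the purifying subsystem, so that averaging over that basis washes out any additional information and the joint state of $n$ pure-state samples is indistinguishable between the yes and no cases whenever $n$ copies of $\rho$ are. Once this lifting is in hand, the remainder is a direct invocation of \cref{thm:main}.
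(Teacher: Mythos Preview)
Your proposal is correct and follows essentially the same approach as the paper. The paper's proof simply cites the needed pure-state sample lower bounds as black boxes---\cite[Corollary 1.8]{CWZ24} for the $\Omega^*(r^2/\Delta)$ bound when $1\le\Delta<r$, and \cite[Corollary 1.3]{CWZ24} combined with \cite[Theorem 3]{CHW07} for the $\Omega(r)$ bound when $\Delta=r$---and then applies \cref{thm:main}; you unpack the mechanism behind those corollaries (the randomized-Schmidt-basis lifting of \cite{CWZ24} applied to the mixed-state bounds of \cite{OW21,CHW07}), which is exactly what those citations encapsulate.
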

\begin{proof}
   To apply \cref{thm:main}, we only have to note that the sample lower bound for uniform Schmidt coefficient testing is $\Omega^*\rbra{\frac{r^2}{\Delta}}$~\cite[Corollary 1.8]{CWZ24} for $1\leq \Delta<r$. 
   For $\Delta = r$, a better query lower bound of $\Omega\rbra{\sqrt{r}}$ can be obtained by noting the sample lower bound $\Omega\rbra{r}$ for uniform Schmidt coefficient testing implied by \cite[Corollary 1.3]{CWZ24} and \cite[Theorem 3]{CHW07} (see also \cite[Theorem 1.9]{OW21}). 
\end{proof}

\subsubsection{Quantum entropy estimation}
\begin{problem}[Quantum entropy estimation]
    Given an unknown quantum state $\rho$, estimate the entropy of $\rho$ to within additive error $\varepsilon$. 
\end{problem}
\paragraph{Von Neumann entropy.} Given a $d$-dimensional quantum state $\rho$, the von Neumann entropy of $\rho$ is defined by 
\[
\mathrm{S}(\rho)=-\tr(\rho\ln(\rho)).
\]
Quantum algorithms were given in \cite{GL20} with query complexity $\widetilde{O}\rbra{\frac{d}{\varepsilon^{1.5}}}$ and in \cite{WGL+24} with query complexity $\widetilde{O}\rbra{\frac{r}{\varepsilon^2}}$ when $\rho$ is of rank $r$. 
The prior best quantum query lower bound is $\widetilde{\Omega}\rbra{\sqrt{d}}$ due to \cite[Theorem 1.6]{BKT20}. 

By quantum sample-to-query lifting, we can obtain an improved quantum query lower bound.

\begin{theorem} \label{thm:von-neumann-entropy}
The quantum query complexity of von Neumann entropy is $\Omega\rbra{\frac{\sqrt{d}}{\sqrt{\varepsilon}}+\frac{\log\rbra{d}}{\varepsilon}}$.
\end{theorem}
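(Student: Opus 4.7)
The plan is to follow the uniform template used throughout the preceding sections: invoke \cref{thm:main} on a quantum sample-complexity lower bound for von Neumann entropy estimation, and then take the square root to obtain the query lower bound. The target query bound $\Omega(\sqrt{d/\varepsilon}+\log(d)/\varepsilon)$ corresponds, after squaring, to a sample lower bound of the form $\Omega(d/\varepsilon + \log^2(d)/\varepsilon^2)$, and each of these two summands will be justified separately.

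First I would handle the $\Omega(\log(d)/\varepsilon)$ term. Since a discrete probability distribution $P$ embeds into a diagonal quantum state $\rho = \diag(P)$, a purified quantum query access to $\rho$ provides purified quantum query access in the classical distribution model, and the von Neumann entropy of $\rho$ coincides with the Shannon entropy of $P$. Hence any sample lower bound for classical Shannon entropy estimation transfers verbatim to the quantum sample lower bound for von Neumann entropy estimation. The classical sample lower bound of \cite{JVHW15,WY16} gives in particular $\Omega(\log^2(d)/\varepsilon^2)$, whose square root supplies the $\log(d)/\varepsilon$ summand via \cref{thm:main}.

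Next I would handle the $\Omega(\sqrt{d/\varepsilon})$ term, for which I need a quantum sample lower bound of $\Omega(d/\varepsilon)$. This is a strictly quantum statement (stronger, at the level of the first term, than the classical $\Omega(d/(\varepsilon\log d))$ bound), and is obtained from the quantum sample-complexity literature on von Neumann entropy estimation \cite{AISW20}; the same $\Omega(d/\varepsilon)$ lower bound has also been used in the analysis of \cite{WGL+24}. Applying \cref{thm:main} to this bound yields the $\sqrt{d/\varepsilon}$ summand, and combining the two via the usual $\max$-to-sum relaxation completes the argument.

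The main obstacle is verifying that the precise constants and dependencies of the $\Omega(d/\varepsilon)$ sample lower bound are available as black-box statements in the form required, rather than having to be reproved. If only a bound of the form $\Omega(d/\polylog(d))$ were readily citable, the $\sqrt{d/\varepsilon}$ summand would need to be weakened by a $\polylog$ factor; if a clean $\Omega(d/\varepsilon)$ reference is available, the proof reduces to a one-line invocation of \cref{thm:main} exactly as in \cref{thm-10150001,thm:lb-shannon}.
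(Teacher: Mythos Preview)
Your approach is exactly the paper's: split into the two summands, pull the $\Omega(\log^2(d)/\varepsilon^2)$ sample bound from \cite{JVHW15,WY16} via the diagonal embedding, pull the $\Omega(d/\varepsilon)$ quantum sample bound from the literature, and apply \cref{thm:main}. The only slip is the citation for the $\Omega(d/\varepsilon)$ term: the paper invokes \cite[Theorem 1.7]{WZ25}, not \cite{AISW20} (which in this context supplies upper bounds for von Neumann entropy and lower bounds only for integer-$\alpha$ R\'enyi entropy). This resolves the obstacle you flagged; with the correct reference the proof is indeed a one-line lifting.
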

\begin{proof}
Note that a sample lower bound of $\Omega(\frac{d}{\varepsilon})$ was shown in \cite[Theorem 1.7]{WZ25}. On the other hand, a sample lower bound of $\Omega(\frac{\log^2(d)}{\varepsilon^2})$ for estimating the Shannon entropy was shown in \cite{JVHW15,WY16}. By simulating the measurement of von Neumann entropy estimator on the diagonal state, one can obtain a Shannon entropy estimator with the same sample complexity. Thus the lower bound $\Omega(\frac{\log^2(d)}{\varepsilon^2})$ also applies to the von Neumann entropy. Therefore, we obtain a sample lower bound of $\Omega\rbra{\frac{d}{\varepsilon}+\frac{\log^2\rbra{d}}{\varepsilon^2}}$. 
Then, by \cref{thm:main}, we obtain the query lower bound 
$\Omega(\sqrt{\frac{d}{\varepsilon}+\frac{\log^2\rbra{d}}{\varepsilon^2}})=\Omega\rbra{\frac{\sqrt{d}}{\sqrt{\varepsilon}}+\frac{\log\rbra{d}}{\varepsilon}}$.
\end{proof}

\paragraph{Quantum R\'enyi entropy.} The quantum $\alpha$-R\'enyi entropy is defined by 
\[
\mathrm{S}_\alpha^{\textup{R\'en}}(\rho)=\frac{1}{1-\alpha}\ln\rbra*{\tr(\rho^\alpha)}.
\]
The quantum query complexity of quantum R\'enyi entropy was studied in \cite{SH21,WZW23,WZYW23,WGL+24}. 
The current best quantum query upper bound for quantum R\'enyi entropy estimation is $\widetilde{O}\rbra{\frac{d^{\frac{1}{2\alpha}+\frac{1}{2}}}{\varepsilon^{\frac{1}{2\alpha}+1}}}$ for $0 < \alpha < 1$ and $\widetilde{O}\rbra{\min\cbra{\frac{d^{\frac{3}{2}-\frac{1}{2\alpha}}}{\varepsilon} + \frac{d}{\varepsilon^{1+\frac{1}{2\alpha}}}, \frac{d}{\varepsilon^{1+\frac{1}{\alpha}}}}}$ for $\alpha > 1$ due to \cite{WZL24}. 
The current best quantum query lower bounds are inherited from those for the estimation of classical entropy as mentioned in \cref{sec:c-entropy}. 

By quantum sample-to-query lifting, we can obtain an improved quantum query lower bound.

\begin{theorem} \label{thm:qRenyi}
For $0<\alpha<1$, the query complexity of quantum $\alpha$-R\'enyi entropy is $\Omega\rbra{d^{\frac{1}{2\alpha}-o\rbra{1}}+\frac{d^{\frac{1}{2\alpha}-\frac{1}{2}}}{\varepsilon^{\frac{1}{2\alpha}}}+\frac{\sqrt{d}}{\sqrt{\varepsilon}}}$. 
For non-integer $\alpha>1$, the query complexity of quantum $\alpha$-R\'enyi entropy is $\Omega\rbra{\frac{d^{\frac{1}{2}-\frac{1}{2\alpha}}}{\varepsilon}+\frac{\sqrt{d}}{\sqrt{\varepsilon}}}$.
For integer $\alpha\geq 2$, the query complexity of quantum $\alpha$-R\'enyi entropy is $\Omega(\frac{d^{1-\frac{1}{\alpha}}}{\varepsilon^{\frac{1}{\alpha}}}+\frac{d^{\frac{1}{2}-\frac{1}{2\alpha}}}{\varepsilon})$.
\end{theorem}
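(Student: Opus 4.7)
My plan is to apply Theorem \ref{thm:main} separately in each of the three regimes, combining two kinds of quantum sample lower bounds: those obtained by embedding classical hard instances into diagonal quantum states (via the reduction of Section \ref{sec:lower}), and quantum-specific lower bounds that exploit non-commutativity. In every case the claimed query lower bound will be the square root of the maximum of these sample lower bounds.

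For the classical contributions I would invoke \cite[Theorems 15, 21, 22]{AOST17}, which give sample lower bounds $\Omega(d^{1/\alpha - o(1)} + d^{1/\alpha - 1}/\varepsilon^{1/\alpha})$ for $0 < \alpha < 1$, $\Omega(d^{1-o(1)} + d^{1-1/\alpha}/\varepsilon^{2})$ for non-integer $\alpha > 1$, and $\Omega(d^{1-1/\alpha}/\varepsilon^{2})$ for integer $\alpha \geq 2$. Since diagonal embeddings of these classical distributions are legitimate quantum instances, Theorem \ref{thm:main} converts them into the first (classical) terms in each of the three claims.

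The remaining terms come from genuinely quantum hard instances. For the $\sqrt{d/\varepsilon}$ term appearing in the first two regimes, I would follow the argument used in the proof of Theorem \ref{thm:von-neumann-entropy} and transfer the quantum sample lower bound $\Omega(d/\varepsilon)$ from \cite[Theorem 1.7]{WZ25}: the hard instances constructed there are spectral perturbations of the maximally mixed state on which every $\alpha$-R\'enyi entropy (for any fixed $\alpha \neq 1$) is separated by $\Omega_\alpha(\varepsilon)$, so the bound carries over after absorbing an $\alpha$-dependent constant. For the additional term $d^{1-1/\alpha}/\varepsilon^{1/\alpha}$ in the integer case, I would establish a quantum sample lower bound of $\Omega(d^{2-2/\alpha}/\varepsilon^{2/\alpha})$ by adapting the hard instances used for $\tr(\rho^\alpha)$ estimation in \cite{OW21,AISW20} and translating additive error on $\tr(\rho^\alpha)$ into additive error on $\mathrm{S}_\alpha^{\textup{R\'en}}$.

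The main technical obstacle is the transfer step in the quantum-specific part: verifying that the hard instances of \cite[Theorem 1.7]{WZ25} yield an $\Omega_\alpha(\varepsilon)$ R\'enyi gap uniformly for every fixed $\alpha$ (with constants that may depend on $\alpha$ but not on $d$ or $\varepsilon$), and identifying the correct reference or short reduction for the integer-$\alpha$ sample lower bound $\Omega(d^{2-2/\alpha}/\varepsilon^{2/\alpha})$. Once these two quantum-specific lower bounds are in place, everything else reduces to routine substitution into Theorem \ref{thm:main} and taking a max over the relevant terms.
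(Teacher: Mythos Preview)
Your approach is essentially the paper's: apply \cref{thm:main} to sample lower bounds, combining classical hard instances (via diagonal embedding) with quantum-specific ones. The difference is only in which citations you reach for. For the $\Omega(\sqrt{d/\varepsilon})$ term you propose to transfer the von Neumann hard instances of \cite[Theorem 1.7]{WZ25} and verify that they also separate the $\alpha$-R\'enyi entropy; in fact \cite[Theorems VI.4 and VI.5]{WZ25} already state the $\Omega(d/\varepsilon)$ sample lower bound directly for quantum $\alpha$-R\'enyi entropy (for $\alpha>1$ and $0<\alpha<1$ respectively), so no transfer is needed. Likewise, for integer $\alpha\geq 2$ you need not adapt anything: \cite[Theorem 1]{AISW20} already gives the sample lower bound $\Omega\bigl(d^{2-2/\alpha}/\varepsilon^{2/\alpha}+d^{1-1/\alpha}/\varepsilon^{2}\bigr)$ for quantum R\'enyi entropy, which after lifting yields both terms at once. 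In short, the ``main technical obstacle'' you flag is already resolved in the cited literature, and the proof reduces to citing \cref{thm:lt1-renyi}, \cref{thm:renyi-gt-1}, \cite[Theorems VI.4--VI.5]{WZ25}, and \cite[Theorem 1]{AISW20}, then applying \cref{thm:main}.
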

\begin{proof}
For $0<\alpha<1$, \cref{thm:lt1-renyi} gives part of the query lower bound $\Omega\rbra{d^{\frac{1}{2\alpha}-o\rbra{1}}+\frac{d^{\frac{1}{2\alpha}-\frac{1}{2}}}{\varepsilon^{\frac{1}{2\alpha}}}}$. 
Moreover, a query lower bound of $\Omega\rbra{\frac{\sqrt{d}}{\sqrt{\varepsilon}}}$ is obtained by \cref{thm:main} and noting the sample lower bound $\Omega\rbra{\frac{d}{\varepsilon}}$ for quantum $\alpha$-R\'enyi entropy estimation given in \cite[Theorem VI.5]{WZ25}. 

For non-integer $\alpha>1$, \cref{thm:renyi-gt-1} gives part of the query lower bound $\Omega\rbra{\frac{d^{\frac{1}{2}-\frac{1}{2\alpha}}}{\varepsilon}}$.
Moreover, a query lower bound of $\Omega\rbra{\frac{\sqrt{d}}{\sqrt{\varepsilon}}}$ is obtained by \cref{thm:main} and noting the sample lower bound $\Omega\rbra{\frac{d}{\varepsilon}}$ for quantum $\alpha$-R\'enyi entropy estimation given in \cite[Theorem VI.4]{WZ25}. 

For integer $\alpha\geq 2$, the query lower bound is obtained by \cref{thm:main} and noting the sample lower bound $\Omega(\frac{d^{2-\frac{2}{\alpha}}}{\varepsilon^{\frac{2}{\alpha}}}+\frac{d^{1-\frac{1}{\alpha}}}{\varepsilon^2})$ given in \cite[Theorem 1]{AISW20}.
\end{proof}

\paragraph{Quantum Tsallis entropy.} The quantum $\alpha$-Tsalllis entropy is defined by 
\[\mathrm{S}_\alpha^{\textup{Tsa}}(\rho)=\frac{1}{1-\alpha}\rbra*{\tr(\rho^\alpha)-1}.\]
The current best quantum query complexity for quantum Tsallis entropy estimation is $\widetilde{O}\rbra{\frac{1}{\varepsilon^{1+\frac{1}{\alpha-1}}}}$ for $\alpha > 1$ due to \cite{LW25}, where they provided a quantum query lower bound of $\Omega\rbra{\frac{1}{\sqrt{\varepsilon}}}$. 

\begin{theorem} \label{thm:qTsallis}
    For $1 < \alpha < \frac{3}{2}$, the quantum query complexity of quantum $\alpha$-Tsallis entropy estimation is $\Omega\rbra{\frac{1}{\varepsilon^{\frac{1}{2\rbra{\alpha-1}}}}}$. 
    For $\alpha \geq \frac{3}{2}$, the quantum query complexity is $\Omega\rbra{\frac{1}{\varepsilon}}$. 
\end{theorem}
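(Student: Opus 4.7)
The plan is to invoke \cref{thm:main}, so it suffices to establish the two matching sample lower bounds for quantum $\alpha$-Tsallis entropy estimation: $\Omega\rbra{\varepsilon^{-1/(\alpha-1)}}$ when $1 < \alpha < 3/2$, and $\Omega\rbra{1/\varepsilon^2}$ when $\alpha \geq 3/2$. In both regimes I would work with diagonal (commuting) hard instances, so the sample complexity reduces to a classical two-point distinguishing problem and can be controlled either by Hellinger distance or by the known sample lower bound for uniformity distinguishing.

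For the regime $\alpha \geq 3/2$, I would take the single-qubit two-point family $\rho_0 = \diag\rbra{3/4, 1/4}$ and $\rho_1 = \diag\rbra{3/4-\eta,\, 1/4+\eta}$. Writing $h_\alpha\rbra{p} = \rbra{1 - p^\alpha - \rbra{1-p}^\alpha}/\rbra{\alpha-1}$, the derivative $h_\alpha'\rbra{1/4} = \alpha\rbra{\rbra{3/4}^{\alpha-1} - \rbra{1/4}^{\alpha-1}}/\rbra{\alpha-1}$ is strictly positive for each fixed $\alpha > 1$, so a first-order Taylor expansion gives $\abs{\mathrm{S}_\alpha^{\textup{Tsa}}\rbra{\rho_0} - \mathrm{S}_\alpha^{\textup{Tsa}}\rbra{\rho_1}} = \Theta_\alpha\rbra{\eta}$. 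A direct computation also yields a squared Hellinger distance $H^2\rbra{\rho_0, \rho_1} = \Theta\rbra{\eta^2}$. Choosing $\eta = \Theta_\alpha\rbra{\varepsilon}$ so that the Tsallis gap equals $2\varepsilon$ and applying Le Cam's two-point method then gives the sample lower bound $\Omega\rbra{1/\eta^2} = \Omega\rbra{1/\varepsilon^2}$, which \cref{thm:main} converts into the claimed $\Omega\rbra{1/\varepsilon}$ query lower bound.

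For the regime $1 < \alpha < 3/2$, I would reduce from uniformity distinguishing. Let $\sigma_r$ denote the maximally mixed state on an $r$-dimensional subspace; then $\mathrm{S}_\alpha^{\textup{Tsa}}\rbra{\sigma_r} = \rbra{1 - r^{1-\alpha}}/\rbra{\alpha-1}$, so
\[
\mathrm{S}_\alpha^{\textup{Tsa}}\rbra{\sigma_{2r}} - \mathrm{S}_\alpha^{\textup{Tsa}}\rbra{\sigma_r} = \frac{r^{1-\alpha}\rbra{1 - 2^{1-\alpha}}}{\alpha-1} = \Theta_\alpha\rbra{r^{1-\alpha}}.
\]
Selecting $r = \Theta_\alpha\rbra{\varepsilon^{-1/(\alpha-1)}}$ makes this gap $2\varepsilon$. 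Since distinguishing $\sigma_r$ from $\sigma_{2r}$ (the $\Delta = r$ case of uniformity distinguishing) already requires $\Omega\rbra{r}$ samples by \cite[Theorem 3]{CHW07} as was used inside the proof of \cref{thm:uniformity-distinguishing}, Tsallis entropy estimation inherits the sample lower bound $\Omega\rbra{r} = \Omega\rbra{\varepsilon^{-1/(\alpha-1)}}$, and \cref{thm:main} converts this to the query lower bound $\Omega\rbra{\varepsilon^{-1/(2(\alpha-1))}}$.

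The main subtlety is to verify that the implicit constants behave as claimed in each regime. In the first case, although $h_\alpha'\rbra{1/4}$ decays as $\alpha \to \infty$, it is strictly positive for every fixed $\alpha \in [3/2, \infty)$, and likewise the coefficient $\rbra{1 - 2^{1-\alpha}}/\rbra{\alpha-1}$ in the second case stays bounded away from $0$ on any compact subset of $\rbra{1, 3/2}$ (it tends to $\ln 2$ as $\alpha \to 1^+$). Both dependences are purely multiplicative in $\alpha$ and leave the $\varepsilon$-scaling untouched, so the stated asymptotic lower bounds hold for every fixed $\alpha$ in the indicated range.
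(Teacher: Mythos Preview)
Your proof is correct. Both you and the paper apply \cref{thm:main} to matching sample lower bounds, but the paper simply cites \cite[Theorem 20]{CW25} for the bounds $\Omega\rbra{\varepsilon^{-1/(\alpha-1)}}$ (for $1<\alpha<3/2$) and $\Omega\rbra{1/\varepsilon^2}$ (for $\alpha\geq 3/2$), whereas you supply self-contained derivations: a single-qubit two-point Le Cam argument for $\alpha\geq 3/2$, and a reduction from the $\sigma_r$ vs.\ $\sigma_{2r}$ instance of uniformity distinguishing (invoking the $\Omega(r)$ sample bound of \cite{CHW07}) for $1<\alpha<3/2$. Your hard instances are precisely the ones one would expect to underlie the cited result, so the two proofs are morally identical; your version has the advantage of being explicit and of making transparent why the regime boundary sits at $\alpha=3/2$ (namely, $r=\Theta(\varepsilon^{-1/(\alpha-1)})$ crosses $1/\varepsilon^2$ there), while the paper's version is simply shorter.
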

\begin{proof}
    To apply \cref{thm:main}, we only have to note that the sample lower bound for $\alpha$-Tsallis entropy estimation is $\Omega\rbra{\frac{1}{\varepsilon^{\frac{1}{\alpha-1}}}}$ for $1 < \alpha < \frac{3}{2}$ and $\Omega\rbra{\frac{1}{\varepsilon^2}}$ for $\alpha \geq \frac{3}{2}$ in \cite[Theorem 20]{CW25}. 
\end{proof}

\subsubsection{Trace distance estimation}

\begin{problem}[Trace distance estimation]
    Given two unknown quantum states $\rho$ and $\sigma$ of rank $r$, estimate the trace distance 
    \[\mathrm{T}(\rho,\sigma)=\frac{1}{2}\tr\!\left(|\rho-\sigma|\right)\]
    to within additive error $\varepsilon$. 
\end{problem}

A quantum algorithm for trace distance estimation was proposed in \cite[Theorem IV.1]{WGL+24} with query complexity $\widetilde{O}\rbra{\frac{r^{5}}{\varepsilon^6}}$, which was later improved to ${O}\rbra{\frac{r}{\varepsilon^2}\log\rbra{\frac{1}{\varepsilon}}}$ in \cite[Corollary III.2]{WZ24}. 
For the pure-state case where $r = 1$, optimal quantum algorithms for trace distance estimation were proposed in \cite{Wan24} with query complexity $\Theta\rbra{\frac{1}{\varepsilon}}$. 
The prior best quantum query lower bound is $\widetilde{\Omega}\rbra{\sqrt{r}+\frac{1}{\varepsilon}}$, where $\widetilde{\Omega}\rbra{\sqrt{r}}$ is due to the quantum query lower bound for total variation distance estimation \cite[Theorem 1.5]{BKT20} with polylogarithmic factors in $r$ omitted and $\Omega\rbra{\frac{1}{\varepsilon}}$ was mentioned in \cite[Theorem V.2]{Wan24}. 

By quantum sample-to-query lifting, we can obtain an improved quantum query lower bound.

\begin{theorem} \label{thm:lb-td}
The quantum query complexity of trace distance estimation is $\Omega(\frac{\sqrt{r}}{\varepsilon})$.
\end{theorem}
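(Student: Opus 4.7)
The plan is to follow the template used throughout this section: extract a suitable quantum sample lower bound for trace distance estimation and then apply \cref{thm:main} to lift it to a query lower bound. Since the target is $\Omega(\sqrt{r}/\varepsilon)$, squaring tells us we need a sample lower bound of $\Omega(r/\varepsilon^2)$ for estimating $\mathrm{T}(\rho,\sigma)$ within additive error $\varepsilon$, where $\rho,\sigma$ are of rank at most $r$.

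First I would obtain the sample lower bound by a simple reduction from mixedness testing in dimension $r$. Given an unknown rank-$r$ state $\rho$ (supported, say, on the first $r$ coordinates), set $\sigma = I_r/r$; note that $\sigma$ is explicitly known, so its samples can be prepared for free. If $\rho = I_r/r$, then $\mathrm{T}(\rho,\sigma)=0$; if $\|\rho-I_r/r\|_1 \geq \varepsilon$, then $\mathrm{T}(\rho,\sigma) \geq \varepsilon/2$. Hence any trace distance estimator with additive error $\varepsilon/5$ solves mixedness testing. The sample lower bound $\Omega(r/\varepsilon^2)$ for mixedness testing in dimension $r$ from \cite[Theorem 1.10]{OW21} therefore transfers to a sample lower bound $\Omega(r/\varepsilon^2)$ for trace distance estimation between two rank-$r$ states.

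Then I would invoke \cref{thm:main}: the promise problem of estimating $\mathrm{T}(\rho,\sigma)$ (equivalently, the decision version distinguishing $\mathrm{T}(\rho,\sigma) \leq \varepsilon/5$ versus $\mathrm{T}(\rho,\sigma) \geq 4\varepsilon/5$) is a quantum state testing problem on the joint system containing $\rho$ and $\sigma$, with sample complexity $\Omega(r/\varepsilon^2)$. Lifting yields
\[
    \mathsf{Q} = \Omega\rbra*{\sqrt{r/\varepsilon^2}} = \Omega\rbra*{\sqrt{r}/\varepsilon},
\]
which is the claimed bound.

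The only real subtlety is the reduction: one has to be careful that the trace distance estimator indeed works in the two-state setting when one of the two states is fixed and known. But this is immediate because samples from $I_r/r$ can be prepared internally by the tester, so a sample-efficient trace distance estimator yields a sample-efficient mixedness tester with the same number of samples of the unknown state. Given that this reduction is entirely standard and the OW21 lower bound is a direct citation, I do not expect a genuine obstacle; the argument is essentially a one-line invocation of \cref{thm:main} after noting the right sample lower bound.
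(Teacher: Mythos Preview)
Your proposal is correct and follows essentially the same approach as the paper: reduce from mixedness testing in dimension $r$ (with $\sigma = I_r/r$ as the known reference state), invoke the $\Omega(r/\varepsilon^2)$ sample lower bound from \cite[Theorem 1.10]{OW21}, and then lift via \cref{thm:main}. The paper phrases the embedding via an isometry $V:\mathbb{C}^r\to\mathbb{C}^d$, but this is the same idea as your ``supported on the first $r$ coordinates,'' and your discussion of the decision version and the free preparation of $\sigma$ is, if anything, a bit more explicit than the paper's.
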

\begin{proof}
Given an $r$-dimensional state $\rho$, we can embed it into a $d$-dimensional Hilbert space using an isometry $V:\mathbb{C}^r\rightarrow\mathbb{C}^d$. Then, any trace estimation algorithm on $V\rho V^\dag$ and $\frac{1}{r}V I_r V^\dag$ ($\frac{1}{r}I_r$ is the $r$-dimensional maximally mixed state) can be used to test whether $\rho$ is the maximally mixed state or $\varepsilon$-far from the maximally mixed state in trace distance. Therefore, the sample lower bound $\Omega(\frac{r}{\varepsilon^2})$ in \cite[Theorem 1.10]{OW21} applies also to the trace estimation task. 
By \cref{thm:main}, we obtain a quantum query lower bound $\Omega(\frac{\sqrt{r}}{\varepsilon})$ for trace distance estimation.
\end{proof}

\subsubsection{Fidelity estimation}

\begin{problem}[Fidelity estimation]
    Given two unknown quantum states $\rho$ and $\sigma$ of rank $r$, estimate the fidelity 
    \[\mathrm{F}(\rho,\sigma)=\tr\!\left(\sqrt{\sqrt{\sigma}\rho\sqrt{\sigma}}\right)\]
    to within additive error $\varepsilon$. 
\end{problem}

A quantum algorithm for fidelity estimation was proposed in \cite{WZC+23} with query complexity $\widetilde{O}\rbra{\frac{r^{12.5}}{\varepsilon^{13.5}}}$.\footnote{An earlier version of \cite{WZC+23} gave a quantum query complexity of $\widetilde{O}\rbra{\frac{r^{21.5}}{\varepsilon^{23.5}}}$ for fidelity estimation.} 
This was later improved to $\widetilde{O}\rbra{\frac{r^{6.5}}{\varepsilon^{7.5}}}$ in \cite[Theorem IV.5]{WGL+24}, to $\widetilde{O}\rbra{\frac{r^{2.5}}{\varepsilon^{5}}}$ in \cite[Corollary 28]{GP22}, and recently to $\widetilde{O}\rbra{\frac{r}{\varepsilon^2}\log\rbra{\frac{1}{\varepsilon}}}$ in \cite[Theorem 23]{UNWT25}. 
For the pure-state case where $r = 1$, optimal quantum algorithms for fidelity estimation were proposed in \cite{Wan24,FW25} with query complexity $\Theta\rbra{\frac{1}{\varepsilon}}$. 
The prior best quantum query lower bound is $\Omega\rbra{r^{1/3}+\frac{1}{\varepsilon}}$ mentioned in \cite[Proposition 31]{UNWT25}, where $\Omega\rbra{r^{1/3}}$ is due to the quantum query lower bound for uniformity testing of probability distributions \cite{CFMdW10} and $\Omega\rbra{1/\varepsilon}$ is due to the lower bound for quantum counting \cite{BBC+01,NW99}. 

By quantum sample-to-query lifting, we can obtain an improved quantum query lower bound.

\begin{theorem} \label{thm:lb-fidelity}
The quantum query complexity of fidelity estimation is $\Omega(\frac{\sqrt{r}}{\sqrt{\varepsilon}}+\frac{1}{\varepsilon})$.
\end{theorem}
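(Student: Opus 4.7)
The plan is to invoke \cref{thm:main} with a sample complexity lower bound of $\Omega\rbra{r/\varepsilon}$ for fidelity estimation to obtain the $\Omega\rbra{\sqrt{r}/\sqrt{\varepsilon}}$ part, and to inherit the $\Omega\rbra{1/\varepsilon}$ part from the known tight query lower bound for pure-state fidelity estimation.

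For the sample lower bound, I would reduce mixedness testing in dimension $r$ to fidelity estimation, following the template of \cref{thm:lb-td}: given an $r$-dimensional candidate state $\rho$, embed it into the ambient Hilbert space via an isometry $V$ and set $\sigma = V\rbra{I_r/r}V^\dag$, which is a rank-$r$ state preparable without any queries to the oracle for $\rho$. By the Fuchs--van de Graaf inequality $T\rbra{\rho, \sigma} \le \sqrt{1 - F\rbra{\rho, \sigma}^2}$, whenever $T\rbra{\rho, I_r/r} \ge \varepsilon'$ we have $F\rbra{V\rho V^\dag, \sigma} \le \sqrt{1 - (\varepsilon')^2} \le 1 - (\varepsilon')^2/2$, while $F=1$ when $\rho = I_r/r$. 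Hence a fidelity estimator with additive error $\varepsilon < (\varepsilon')^2/8$ decides the mixedness testing instance at precision $\varepsilon'$. Combining this reduction with the sample lower bound $\Omega\rbra{r/(\varepsilon')^2}$ for mixedness testing from \cite[Theorem 1.10]{OW21} and setting $\varepsilon' = \Theta\rbra{\sqrt{\varepsilon}}$ yields a sample lower bound of $\Omega\rbra{r/\varepsilon}$ for fidelity estimation. Applying \cref{thm:main} then produces the query lower bound $\Omega\rbra{\sqrt{r/\varepsilon}} = \Omega\rbra{\sqrt{r}/\sqrt{\varepsilon}}$.

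For the $\Omega\rbra{1/\varepsilon}$ part, this is inherited directly from the matching query lower bound for the pure-state case ($r=1$) established in \cite{Wan24, FW25}, which itself traces back to the quantum counting lower bounds of \cite{BBC+01,NW99}. Combining the two bounds yields the claimed $\Omega\rbra{\sqrt{r}/\sqrt{\varepsilon} + 1/\varepsilon}$.

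The main conceptual point is the square-root scaling introduced by the Fuchs--van de Graaf reduction: the mixedness sample bound $\Omega\rbra{r/(\varepsilon')^2}$ only translates into an $\Omega\rbra{r/\varepsilon}$ (rather than $\Omega\rbra{r/\varepsilon^2}$) sample lower bound for fidelity, which after lifting becomes $\Omega\rbra{\sqrt{r}/\sqrt{\varepsilon}}$ rather than the stronger $\Omega\rbra{\sqrt{r}/\varepsilon}$ one might hope for. I expect no significant technical obstacle; the structure of the argument is parallel to that of \cref{thm:lb-td}, with Fuchs--van de Graaf providing the bridge between trace distance and fidelity.
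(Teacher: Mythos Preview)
Your proposal is correct, and the $\Omega(\sqrt{r}/\sqrt{\varepsilon})$ part follows the paper's argument exactly: embed an $r$-dimensional state via an isometry, compare to the embedded maximally mixed state, use the Fuchs--van de Graaf inequality $T(\rho,\sigma)\le\sqrt{1-F(\rho,\sigma)^2}$ to translate a $\sqrt{\varepsilon}$-gap mixedness testing instance into an $\varepsilon$-precision fidelity estimation instance, invoke the $\Omega(r/(\varepsilon')^2)$ sample bound from \cite[Theorem 1.10]{OW21} to get a sample lower bound of $\Omega(r/\varepsilon)$, and then lift via \cref{thm:main}.

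The one genuine difference is how the $\Omega(1/\varepsilon)$ term is obtained. You import it directly as a known \emph{query} lower bound for the pure-state case from \cite{Wan24,FW25} (ultimately tracing back to quantum counting). The paper instead stays within the lifting paradigm: it cites the \emph{sample} lower bound $\Omega(1/\varepsilon^2)$ for estimating $\tr(\rho\sigma)$ on pure states from \cite[Lemma 13]{ALL22}, notes that $\tr(\rho\sigma)=F(\rho,\sigma)^2$ for pure states, combines this with the first part to get a sample lower bound of $\Omega(r/\varepsilon+1/\varepsilon^2)$, and then applies \cref{thm:main} once to both terms. Your route is slightly more direct; the paper's route is more thematically consistent, since the whole point of the section is to demonstrate that lifting alone suffices.
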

\begin{proof}
Given an $r$-dimensional state $\rho$, we can embed it into a $d$-dimensional Hilbert space using an isometry $V:\mathbb{C}^r\rightarrow\mathbb{C}^d$. Then, any fidelity estimation algorithm on $V\rho V^\dag$ and $\frac{1}{r}V I_r V^\dag$ ($\frac{1}{r}I_r$ is the $r$-dimensional maximally mixed state) can be used to test whether $\rho$ is the maximally mixed state or $\sqrt{\varepsilon}$-far from the maximally mixed state in trace distance (due to the inequality $\mathrm{T}(\rho,\sigma)\leq \sqrt{1-\mathrm{F}(\rho,\sigma)^2}$~\cite{nielsen2010quantum}).
Therefore, the sample lower bound of $\Omega(\frac{r}{\varepsilon^2})$ in \cite[Theorem 1.10]{OW21} gives a lower bound of $\Omega(\frac{r}{\varepsilon})$ for the fidelity estimation task. 

Moreover, \cite[Lemma 13]{ALL22} shows that estimating $\tr(\rho\sigma)$ for pure states $\rho$ and $\sigma$ to within error $\varepsilon$ requires sample complexity $\Omega(\frac{1}{\varepsilon^2})$. Since $\tr(\rho\sigma)=\mathrm{F}(\rho,\sigma)^2$ when $\rho,\sigma$ are pure states, this lower bound also applies to the fidelity estimation task.

Therefore, we obtain a sample lower bound $\Omega(\frac{r}{\varepsilon}+\frac{1}{\varepsilon^2})$ and by \cref{thm:main} we obtain a query lower bound $\Omega(\frac{\sqrt{r}}{\sqrt{\varepsilon}}+\frac{1}{\varepsilon})$.
\end{proof}

When the two quantum states are well-conditioned, i.e., the reciprocal of the minimum non-zero eigenvalue of $\rho$ and $\sigma$ is (upper bounded by) $\kappa$, a quantum algorithm for fidelity estimation was proposed in \cite[Theorem 16]{LWWZ25} with query complexity $\widetilde{O}\rbra{\frac{\kappa^4}{\varepsilon}}$, which was later improved to $O\rbra{\frac{\kappa}{\varepsilon}\log\rbra{\frac{1}{\varepsilon}}}$ in \cite[Theorem 23]{UNWT25}. 
A nearly matching lower bound of $\Omega\rbra{\frac{1}{\varepsilon}}$ was given in \cite[Lemma 17]{LWWZ25} when $\kappa = \Theta\rbra{1}$. 

By quantum sample-to-query lifting, we can simply reproduce this result.

\begin{theorem} [{\cite[Lemma 17]{LWWZ25}} reproduced] \label{thm:lb-fidelity-kappa}
    The quantum query complexity of well-conditioned fidelity estimation is $\Omega(\frac{1}{\varepsilon})$ even if $\kappa = \Theta\rbra{1}$.
\end{theorem}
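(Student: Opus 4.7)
The plan is to invoke \cref{thm:main} on a matching quantum sample lower bound for fidelity estimation restricted to well-conditioned states. The key observation is that pure states are a special (and extreme) case of $\kappa = \Theta(1)$ well-conditioned states: a pure state has a single nonzero eigenvalue equal to $1$, so its reciprocal minimum nonzero eigenvalue is $\kappa = 1$. Therefore, any sample lower bound that holds for pure-state fidelity estimation automatically transfers to the well-conditioned regime with $\kappa = \Theta\rbra{1}$.

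With this reduction in hand, I would reuse exactly the ingredient already exploited in the proof of \cref{thm:lb-fidelity}: \cite[Lemma 13]{ALL22} shows that estimating $\tr\rbra{\rho\sigma}$ for two pure states $\rho, \sigma$ to within additive error $\varepsilon$ requires $\Omega\rbra{1/\varepsilon^2}$ samples. Since $\tr\rbra{\rho\sigma} = \mathrm{F}\rbra{\rho, \sigma}^2$ for pure states and $\mathrm{F}$ can be taken bounded away from $0$ in the hard instances (so that estimating $\mathrm{F}$ to additive error $\varepsilon$ yields an estimate of $\mathrm{F}^2$ to additive error $O\rbra{\varepsilon}$), this yields a sample lower bound of $\Omega\rbra{1/\varepsilon^2}$ for fidelity estimation of pure states, hence of well-conditioned states with $\kappa = \Theta\rbra{1}$.

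Applying \cref{thm:main} to this sample lower bound then immediately gives the quantum query lower bound $\mathsf{Q} = \Omega\rbra{\sqrt{1/\varepsilon^2}} = \Omega\rbra{1/\varepsilon}$, as claimed. I do not anticipate any real obstacle: the only nontrivial subtlety is the conversion between estimating $\mathrm{F}$ and $\mathrm{F}^2$, which is handled by working in a parameter regime where the fidelity is bounded away from zero so that the two error scales agree up to constant factors. The rest is a direct invocation of \cref{thm:main}.
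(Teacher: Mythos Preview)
Your approach is correct but takes a different route from the paper. The paper's proof simply cites \cite[Lemma 40]{LWWZ25}, which directly establishes the sample lower bound $\Omega\rbra{1/\varepsilon^2}$ for well-conditioned fidelity estimation with $\kappa = \Theta\rbra{1}$, and then applies \cref{thm:main}. You instead observe that pure states are well-conditioned with $\kappa = 1$ (under the paper's stated definition of $\kappa$ as the reciprocal of the minimum \emph{nonzero} eigenvalue) and reuse the \cite[Lemma 13]{ALL22} ingredient already invoked in the proof of \cref{thm:lb-fidelity}. Your route is more self-contained within the paper and avoids an additional external lemma; the paper's route is one line but relies on a black-box result from \cite{LWWZ25}. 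One small remark: the subtlety you flag about $\mathrm{F}$ being bounded away from $0$ is not actually needed, since $\abs{\hat{\mathrm{F}}^2 - \mathrm{F}^2} \leq 2\abs{\hat{\mathrm{F}} - \mathrm{F}}$ holds for all $\mathrm{F}, \hat{\mathrm{F}} \in [0,1]$, so an $\varepsilon$-estimate of $\mathrm{F}$ always yields a $2\varepsilon$-estimate of $\mathrm{F}^2$.
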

\begin{proof}
    To apply \cref{thm:main}, we only have to note that the sample lower bound for well-conditioned fidelity estimation is $\Omega\rbra{\frac{1}{\varepsilon^2}}$ even if $\kappa = \Theta\rbra{1}$, which was given in \cite[Lemma 40]{LWWZ25}. 
\end{proof}

\subsection{Other quantum tasks}

There are a series of lower bounds that can be derived by reducing from quantum state discrimination and applying \Cref{thm:main}.

Let $\textsc{Dis}_{\rho, \sigma}$ be a promise problem for quantum state discrimination; i.e., let $\textsc{Dis}_{\rho, \sigma}^{\textup{yes}}=\braces*{\rho}$ and $\textsc{Dis}_{\rho, \sigma}^{\textup{no}}=\braces*{\sigma}$.
We have the following lower bound for quantum state discrimination, which can be derived from the Helstrom-Holevo theorem~\cite{Helstrom69,Holevo73}.

\begin{lemma}[Lower bound for quantum state discrimination, cf.\ {\cite[Lemma 2.3]{WZ25b}}]
    \label{lmm:lower-discrimination}
    The quantum sample complexity for quantum state discrimination is lower bounded by $\mathsf{S}\rbra{\textsc{Dis}_{\rho, \sigma}} = \Omega\rbra{1/\gamma}$, where $\gamma = 1 - \operatorname{F}\rbra{\rho, \sigma}$ is the infidelity.
\end{lemma}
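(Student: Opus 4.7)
The plan is to suppose a protocol exists that uses $n$ copies of the unknown state and correctly identifies whether it is $\rho$ or $\sigma$ with success probability at least $2/3$, and then derive $n = \Omega(1/\gamma)$ by combining the Helstrom--Holevo bound with multiplicativity of fidelity under tensor products.

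First I would view the protocol as a two-outcome POVM acting on $\rho^{\otimes n}$ versus $\sigma^{\otimes n}$ with equal prior. The Helstrom--Holevo theorem upper bounds the optimal distinguishing probability by $\tfrac{1}{2} + \tfrac{1}{4}\Abs{\rho^{\otimes n} - \sigma^{\otimes n}}_1$, so the success guarantee forces $\Abs{\rho^{\otimes n} - \sigma^{\otimes n}}_1 \geq 2/3$.

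Next I would convert this into a fidelity statement using the Fuchs--van de Graaf inequality $\tfrac{1}{2}\Abs{\rho - \sigma}_1 \leq \sqrt{1 - \operatorname{F}\rbra{\rho, \sigma}^2}$ applied to the tensor powers, together with the multiplicativity $\operatorname{F}\rbra{\rho^{\otimes n}, \sigma^{\otimes n}} = \operatorname{F}\rbra{\rho, \sigma}^n$. This yields $\operatorname{F}\rbra{\rho, \sigma}^{2n} \leq 8/9$, so taking logarithms gives $n \geq \frac{\ln\rbra{9/8}}{-2 \ln\rbra{1 - \gamma}}$.

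Finally I would translate $-\ln\rbra{1 - \gamma}$ back to $\gamma$ using the elementary bound $-\ln\rbra{1-\gamma} \leq \gamma/\rbra{1-\gamma}$ on $\gamma \in [0, 1)$. For $\gamma \leq 1/2$ this gives $n \geq \ln\rbra{9/8}/\rbra{4\gamma} = \Omega\rbra{1/\gamma}$; for $\gamma > 1/2$ the trivial bound $n \geq 1$ already matches $\Omega\rbra{1/\gamma}$ since $1/\gamma < 2$. No real obstacle arises beyond carefully handling the $\gamma \to 1$ corner, which collapses to the trivial single-sample bound; the main conceptual content is merely folding multiplicativity of $\operatorname{F}$ through Helstrom--Holevo.
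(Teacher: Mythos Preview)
Your proposal is correct and follows exactly the route the paper indicates: the lemma is stated without proof, citing \cite[Lemma 2.3]{WZ25b} and noting only that it ``can be derived from the Helstrom--Holevo theorem,'' which is precisely what you do by combining Helstrom--Holevo with multiplicativity of fidelity and the Fuchs--van de Graaf inequality. Nothing further is needed.
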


\subsubsection{Amplitude estimation}

Amplitude estimation is a basic quantum subroutine to estimate the amplitude $a$ of a quantum state $a \ket{0}\ket{\phi_0} + \sqrt{1-a^2} \ket{1} \ket{\phi_1}$, given its state preparation unitary $U$. 

\begin{problem}[Amplitude estimation]
    \label{prb:amp_est}
    Given queries to a unitary $U$ such that $U \ket{0} = a \ket{0}\ket{\phi_0} + \sqrt{1-a^2} \ket{1} \ket{\phi_1}$ with $a\in [0,1]$,
    the task is to estimate the amplitude $a$ to within additive error $\varepsilon$; i.e., output an $\hat a\in [0,1]$ such that
    $\abs*{\hat{a}-a}\leq \varepsilon$ with probability $\geq 2/3$.
\end{problem}

Using \Cref{thm:main} and the reduction in \cite{WZ25b}, we can reproduce the tight lower bound $\Omega(1/\varepsilon)$ for amplitude estimation.
Previously, this lower bound is obtained by reducing from quantum counting~\cite[Corollary 1.12]{NW99}, matching the upper bound $O\rbra{1/\varepsilon}$~\cite{BHMT02,Wan24}.
It is worth noting that in \Cref{prb:amp_est}, if one is only given query to the time-forward $U$ but not its time-reverse $U^\dagger$, the query lower bound will become $\Omega\parens*{\min\parens*{d,1/\varepsilon^2}}$~\cite{TW25}. 

\begin{theorem}[Lower bound for amplitude estimation~{\cite[Corollary 1.12]{NW99}} reproduced]
    \label{thm:lb_amp_est}
    The quantum query complexity of amplitude estimation is $\Omega(1/\varepsilon)$.
\end{theorem}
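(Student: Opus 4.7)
The plan is to lift a sample lower bound for a simple two-state discrimination instance to a query lower bound for amplitude estimation by applying \cref{thm:main}. Fix two amplitudes $a_0 = 1/2$ and $a_1 = 1/2 + 3\varepsilon$, and for $i \in \cbra{0,1}$ define unitaries $U_i$ acting on two registers $\mathsf{A}$ and $\mathsf{B}$ by
\[
U_i \ket{0}_{\mathsf{AB}} = a_i \ket{0}_{\mathsf{A}}\ket{\phi_0}_{\mathsf{B}} + \sqrt{1-a_i^2}\,\ket{1}_{\mathsf{A}}\ket{\phi_1}_{\mathsf{B}},
\]
where $\ket{\phi_0}, \ket{\phi_1}$ are any two fixed orthonormal states. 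Since $\abs{a_0 - a_1} = 3\varepsilon > 2\varepsilon$, any solver for amplitude estimation with additive error $\varepsilon$ and success probability $\geq 2/3$ can be used as a black box to discriminate $U_0$ from $U_1$ with the same query cost. Hence it suffices to lower bound the query complexity of this discrimination.

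Observe next that $U_i$ is precisely a purified quantum query access oracle (\cref{def:purified-access}) for the reduced state
\[
\rho_i = \tr_{\mathsf{B}}\rbra*{U_i \ketbra{0}{0}_{\mathsf{AB}} U_i^\dag} = a_i^2 \ketbra{0}{0} + \rbra*{1 - a_i^2} \ketbra{1}{1}.
\]
Therefore, the quantum state testing problem $\mathcal{P} = \rbra{\cbra{\rho_0}, \cbra{\rho_1}}$ has query complexity at most that of amplitude estimation under the model of \cref{prb:amp_est}, and proving $\mathsf{Q}\rbra{\mathcal{P}} = \Omega\rbra{1/\varepsilon}$ completes the task.

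To lower bound $\mathsf{S}\rbra{\mathcal{P}}$, apply \cref{lmm:lower-discrimination}: because $\rho_0$ and $\rho_1$ commute, their fidelity is simply the classical Bhattacharyya coefficient $\operatorname{F}\rbra{\rho_0, \rho_1} = a_0 a_1 + \sqrt{\rbra{1-a_0^2}\rbra{1-a_1^2}}$, and a second-order Taylor expansion in $\varepsilon$ shows $1 - \operatorname{F}\rbra{\rho_0, \rho_1} = \Theta\rbra{\varepsilon^2}$. Consequently $\mathsf{S}\rbra{\mathcal{P}} = \Omega\rbra{1/\varepsilon^2}$, and \cref{thm:main} immediately gives $\mathsf{Q}\rbra{\mathcal{P}} = \Omega\rbra*{\sqrt{\mathsf{S}\rbra{\mathcal{P}}}} = \Omega\rbra{1/\varepsilon}$.

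The argument is essentially a one-line invocation of \cref{thm:main} once the reduction is spelled out; the only bookkeeping step is the infidelity calculation, which is routine, and no genuine obstacle arises. The key conceptual point is simply that the amplitude-estimation oracle $U$ is already in the purified-access form required by \cref{def:purified-access}, so the sample-to-query lifting applies without modification.
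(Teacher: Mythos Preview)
Your proposal is correct and follows essentially the same approach as the paper: reduce from discriminating two qubit states whose infidelity is $\Theta(\varepsilon^2)$, invoke \cref{lmm:lower-discrimination} to get a sample lower bound of $\Omega(1/\varepsilon^2)$, and apply \cref{thm:main}. The only cosmetic difference is that you parametrize the hard instance by the amplitudes $a_0=1/2$, $a_1=1/2+3\varepsilon$ directly, whereas the paper parametrizes by the probabilities $\frac12\mp\frac52\varepsilon$ and then verifies the resulting amplitude gap exceeds $2\varepsilon$; both choices lead to the same $\Theta(\varepsilon^2)$ infidelity and the same conclusion.
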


\begin{proof}
    We can reduce the quantum state discrimination problem $\textsc{Dis}_{\rho_+,\rho_-}$ to the amplitude estimation problem as in \cite{WZ25b}, by considering the following two states
    \[
    \rho_{\pm} = \rbra*{\frac 1 2 \mp \frac{5}{2} \varepsilon} \ket{0}\bra{0} + \rbra*{\frac 1 2 \pm \frac{5}{2} \varepsilon} \ket{1}\bra{1},
    \]
    with infidelity $O(\varepsilon^2)$.
    From \Cref{lmm:lower-discrimination}, we have a sample lower bound $\mathsf{S}\rbra{\textsc{Dis}_{\rho_+,\rho_-}} = \Omega(1/\varepsilon^2)$,
    which further implies a query lower bound $\mathsf{Q}\rbra{\textsc{Dis}_{\rho_+,\rho_-}} = \Omega(1/\varepsilon)$, using \Cref{thm:main}.
    Suppose $\calA^U$ is a quantum algorithm that solves the amplitude estimation problem to within error $\varepsilon$, using $Q$ queries to $U$.
    By taking $U=U_{\rho_{\pm}}$ such that
    \[
        U_{\rho_{\pm}}\ket{0}\ket{0}=\sqrt{\frac 1 2 \mp \frac{5}{2}\varepsilon}\ket{0}\ket{0} + \sqrt{\frac 1 2 \pm \frac{5}{2}\varepsilon}\ket{1}\ket{1},
    \]
    algorithm $\calA^U$ can solves $\textsc{Dis}_{\rho_+,\rho_-}$, 
    because $\sqrt{\frac 1 2+ \frac{5}{2}\varepsilon} - \sqrt{\frac 1 2 - \frac{5}{2}\varepsilon}> 3\varepsilon$.
    Since $U_{\rho_{\pm}}$ give purified query access to $\rho_{\pm}$,
    we obtain $Q\geq \mathsf{Q}\rbra{\textsc{Dis}_{\rho_+,\rho_-}} = \Omega(1/\varepsilon)$.
\end{proof}

\subsubsection{Hamiltonian simulation}

Hamiltonian simulation~\cite{Fey82} is a task to simulate the dynamics of the Schr{\"o}dinger equation governed by a Hamiltonian $H$. 
Specifically, we want to implement the unitary operator $e^{-iHt}$ for a given Hamiltonian $H$ and simulation time $t$.

\begin{problem}[Hamiltonian simulation]
    Given a block-encoding $U$ of a Hamiltonian $H$, the task is to approximately implement the evolution unitary $e^{-iHt}$ for an evolution time $t$ to a constant precision; i.e., to implement a quantum channel $\calE$ such that for any state $\rho$, the trace norm $\norm*{\calE(\rho)-e^{-iHt}\rho e^{iHt}}_1\leq 0.01$.
\end{problem}

Over the years, many efficient quantum algorithms for Hamiltonian simulation have been proposed, e.g., \cite{Llo96,ATS03,BACS07,CW12,BCC+14,BCC+15,BCK15,LC17,LC19,ZWY24}. The optimal query complexity with respect to the simulation time $t$ is $\Theta\rbra{t}$~\cite{LC17,LC19}.
The lower bound part $\Omega(t)$
is obtained by a reduction~\cite{BACS07,CK10,BCC+14,BCK15} from the parity problem, whose lower bound can be proved by the polynomial method \cite{BBC+01,FGGS98}.

Using \Cref{thm:main} and the reduction in \cite{WZ25b}, we can reproduce this tight lower bound $\Omega(t)$ for Hamiltonian simulation.

\begin{theorem}[Lower bound for Hamiltonian simulation, cf.\ \cite{BACS07,GSLW19}] \label{thm:hamiltonian}
    For $t\geq 2\pi$, the quantum query complexity for Hamiltonian simulation is $\Omega(t)$. 
\end{theorem}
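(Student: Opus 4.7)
The plan is to reduce a quantum state discrimination problem to Hamiltonian simulation and then invoke \Cref{thm:main}, following the same blueprint as in the proof of \Cref{thm:lb_amp_est}. I would pick the two single-qubit states $\rho_{\pm} = \rbra{\frac{1}{2} \pm \varepsilon}\ketbra{0}{0} + \rbra{\frac{1}{2} \mp \varepsilon}\ketbra{1}{1}$ with $\varepsilon = \pi/(4t)$, which is a valid choice for $t \geq 2\pi$. A direct computation in the common eigenbasis gives fidelity $\mathrm{F}\rbra{\rho_+,\rho_-} = 2\sqrt{\frac{1}{4}-\varepsilon^2}$, and hence infidelity $\gamma = 1 - \mathrm{F}\rbra{\rho_+,\rho_-} = \Theta\rbra{\varepsilon^2} = \Theta\rbra{1/t^2}$. \Cref{lmm:lower-discrimination} then gives a sample lower bound $\mathsf{S}\rbra{\textsc{Dis}_{\rho_+,\rho_-}} = \Omega\rbra{1/\gamma} = \Omega\rbra{t^2}$, which \Cref{thm:main} lifts to the query lower bound $\mathsf{Q}\rbra{\textsc{Dis}_{\rho_+,\rho_-}} = \Omega\rbra{t}$.

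Next I would argue that any Hamiltonian simulation algorithm using $Q$ queries solves $\textsc{Dis}_{\rho_+,\rho_-}$ with $O\rbra{Q}$ queries to the purified-access oracle of $\rho_\pm$. Since $\rho_\pm$ are Hermitian with spectral norm at most $1$, they are admissible Hamiltonians, and from purified quantum query access to $\rho_\pm$ one can implement a block-encoding of $\rho_\pm$ using $O(1)$ queries via \cite[Lemma 7]{LC19} and \cite[Lemma 25]{GSLW19}, exactly as in the proof of \Cref{corollary:lifting-block-encoding}. Feeding this block-encoding into the simulation algorithm with evolution time $t$ yields a channel within trace distance $0.01$ of $\rho \mapsto e^{-i\rho_\pm t}\rho e^{i\rho_\pm t}$. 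I would then apply this channel to $\ketbra{+}{+}$ and measure in the Hadamard basis; a direct computation gives
\[
\bra{+} e^{i\rho_+ t} e^{-i\rho_- t} \ket{+} = \cos\rbra*{2\varepsilon t} = \cos\rbra*{\pi/2} = 0,
\]
so the two ideal output states are exactly orthogonal, and the $0.01$ simulation slack still leaves a constant margin for distinguishing $\rho_+$ from $\rho_-$ with success probability at least $2/3$.

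Combining the two steps gives $O\rbra{Q} \geq \mathsf{Q}\rbra{\textsc{Dis}_{\rho_+,\rho_-}} = \Omega\rbra{t}$, and hence $Q = \Omega\rbra{t}$. The main technical point to check carefully is that the $O(1)$-overhead conversion between the two oracle models, combined with the $0.01$ trace-distance error allowed in the simulation, still yields a discrimination protocol of success probability $\geq 2/3$; the rest is just the fidelity computation and the observation that the $O\rbra{1/t}$ spectral gap between $\rho_+$ and $\rho_-$ gets amplified into a constant phase separation after evolution for time $t$.
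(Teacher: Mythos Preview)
Your proposal is correct and follows the same approach as the paper: reduce quantum state discrimination for two nearby diagonal qubit states to Hamiltonian simulation via the block-encoding conversion, then invoke \Cref{lmm:lower-discrimination} and \Cref{thm:main}. The paper uses the asymmetric pair $\rho=I/2$ and $\sigma=(\tfrac12+\tfrac{\pi}{t})\ketbra{0}{0}+(\tfrac12-\tfrac{\pi}{t})\ketbra{1}{1}$, but the core argument is identical.

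One small slip to fix: with your symmetric choice $\rho_\pm$ and $\varepsilon=\pi/(4t)$, the Hadamard-basis measurement does \emph{not} distinguish the two outputs, since by the $\ket{0}\leftrightarrow\ket{1}$ symmetry both $e^{-i\rho_+ t}\ket{+}$ and $e^{-i\rho_- t}\ket{+}$ give $\abs{\braket{+}{\,\cdot\,}}^2=\cos^2(\varepsilon t)=\tfrac12$. Your orthogonality computation is correct and already suffices (measure in the basis containing the two orthogonal output states, i.e., the $Y$ basis), so simply replace ``Hadamard basis'' by the appropriate basis, or alternatively adopt the paper's asymmetric pair so that the Hadamard basis works directly.
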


\begin{proof}
    We can reduce from the quantum state discrimination problem $\textsc{Dis}_{\rho,\sigma}$ as in \cite{WZ25b}, by considering the following two states
    \[
        \rho = \frac 1 2 \ket{0}\bra{0} + \frac 1 2 \ket{1}\bra{1}\quad\textup{and}\quad\sigma = \rbra*{\frac 1 2 + \frac{\pi}{t}} \ket{0}\bra{0} + \rbra*{\frac 1 2 - \frac{\pi}{t}} \ket{1}\bra{1},
    \]
    with infideltiy $O(1/t^2)$. 
    Similar to the proof of \Cref{thm:lb_amp_est}, 
    by applying \Cref{lmm:lower-discrimination,thm:main}, we have query lower bound $\mathsf{Q}\rbra{\textsc{Dis}_{\rho,\sigma}} = \Omega(t)$. 
    Suppose $\calA^U$ is a quantum algorithm that simulates Hamiltonian for time $t$ problem to within error $0.01$, using $Q$ queries to $U$.
    Let $U_{\rho}$ and $U_{\sigma}$ be purified query access to $\rho$ and $\sigma$, respectively. Then, using \cite[Lemma 25]{GSLW19}, one can obtain a query to a block-encoding of $\rho$ (resp.\ $\sigma$) using $2$ queries to $U_{\rho}$ (resp.\ $U_{\sigma}$).
    Taking these block-encodings as $U$ in $\calA^U$ leads to implementations, using $Q$ queries, of $e^{-i\frac{\rho}{2} t}$ and $e^{-i\frac{\sigma}{2} t}$ to within error $0.01$. This can solve $\textsc{Dis}_{\rho,\sigma}$ by applying $\calA^U$ on state $\ket{+}$ followed by a Hadamard basis measurement:
    \[
        \abs*{\bra{+} e^{-i \frac{\rho}{2} t} \ket{+}}^2 - 0.01=0.99\geq \frac{2}{3}  \quad\textup{and}\quad \abs*{\bra{-}e^{-i \frac{\sigma}{2} t} \ket{+}}^2 +0.01=0.01\leq \frac{1}{3}.
    \]
    Thus, $Q\geq \mathsf{Q}\rbra{\textsc{Dis}_{\rho,\sigma}}=\Omega(t)$.
\end{proof}

\subsubsection{Quantum Gibbs sampling}

Quantum Gibbs sampling is a task to prepare the Gibbs (thermal) state $e^{-\beta H}/\mathcal{Z}$ of a given Hamiltonian $H$ at inverse temperature $\beta \geq 0$.
Here, $\mathcal{Z} = \tr\rbra{e^{-\beta H}}$ is the normalization factor,
also known as the partition function.

\begin{problem}[Quantum Gibbs sampling]
    \label{prb:gibbs}
    Given a block-encoding $U$ of a Hamiltonian $H$, the task is to prepare the Gibbs state $e^{-\beta H}/\calZ$ to within constant error at inverse temperature $\beta$, where $\calZ= \tr(e^{-\beta H})$ is the normalization factor;
    i.e., to prepare a quantum state $\rho$ such that
    the trace norm $\norm*{\rho-e^{-\beta H}/\calZ}_1\leq 0.01$.
\end{problem}

Many quantum algorithms for Gibbs sampling have been proposed~\cite{PW09,TOV+11,YAG12,KB16,CS17,BK19,CKG23,ZBC23,RFA24,DLL25,RS25,CKBG25,HSDS25}. 
Gibbs sampling is found to be useful in solving semidefinite programs \cite{BS17,vAGGdW20,BKL+19,vAG19}. 
Regarding \Cref{prb:gibbs}, a quantum algorithm with query complexity $O\rbra{\beta}$ was presented in~\cite{GSLW19}. 

Using \Cref{thm:main} and the reduction in \cite{WZ25b}, we can reproduce the tight lower bound $\Omega(\beta)$ for quantum Gibbs sampling.
Previously, this lower bound was obtained by reducing from Hamiltonian discrimination~\cite[Proposition G.5]{CKBG23} or unitary discrimination~\cite{Weg25}.

\begin{theorem}[Lower bound for Gibbs sampling]
    \label{thm:gibbs}
    For $\beta\geq 4$, the quantum query complexity for Gibbs sampling is $\Omega(\beta)$.
\end{theorem}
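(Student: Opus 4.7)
The plan is to follow exactly the same reduction pattern as in the proof of \Cref{thm:hamiltonian}, swapping out the unitary $e^{-iHt}$ for the Gibbs state $e^{-\beta H}/\mathcal{Z}$ as the object produced by the black-box algorithm. Concretely, I would reduce from the quantum state discrimination problem $\textsc{Dis}_{\rho,\sigma}$ for two diagonal single-qubit states
\[
    \rho = \frac{1}{2}\ket{0}\bra{0} + \frac{1}{2}\ket{1}\bra{1}, \qquad \sigma = \rbra*{\frac{1}{2}+\frac{c}{\beta}}\ket{0}\bra{0} + \rbra*{\frac{1}{2}-\frac{c}{\beta}}\ket{1}\bra{1},
\]
where $c$ is a suitable constant (e.g.\ $c=1$, valid for $\beta\geq 4$). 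The infidelity is $1-\operatorname{F}(\rho,\sigma) = O(1/\beta^2)$, so \Cref{lmm:lower-discrimination} yields $\mathsf{S}\rbra{\textsc{Dis}_{\rho,\sigma}} = \Omega(\beta^2)$, and then \Cref{thm:main} upgrades this to $\mathsf{Q}\rbra{\textsc{Dis}_{\rho,\sigma}} = \Omega(\beta)$.

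Next I would show that a Gibbs sampler resolves $\textsc{Dis}_{\rho,\sigma}$. Given purified query access $U_{\rho}$ or $U_{\sigma}$, invoke \cite[Lemma 25]{GSLW19} to obtain, with $O(1)$ overhead, a block-encoding of $\rho$ or $\sigma$ (treated as the Hamiltonian $H$); then feed this block-encoding into the alleged Gibbs sampler $\calA^U$ at inverse temperature $\beta$. In the $\rho$ case the output is $0.01$-close to $e^{-\beta\rho}/\tr(e^{-\beta\rho}) = I/2$, while in the $\sigma$ case it is $0.01$-close to the diagonal state with eigenvalues proportional to $e^{\mp c}$ on $\ket{0}\bra{0}$ and $\ket{1}\bra{1}$, which differs from $I/2$ by a constant in trace distance for any constant $c>0$. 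A single computational-basis measurement distinguishes the two with probability $\geq 2/3$, so the assumed $Q$-query algorithm gives $Q \geq \mathsf{Q}\rbra{\textsc{Dis}_{\rho,\sigma}} = \Omega(\beta)$.

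The main thing to verify carefully is the last sentence of the previous paragraph: that picking $c$ of order $1$ indeed yields Gibbs states that are constant-far in trace distance, while simultaneously keeping $\sigma$ a valid density matrix under the hypothesis $\beta\geq 4$. This is purely a one-qubit calculation: the trace distance between $I/2$ and $\diag\rbra{e^{-c}, e^{c}}/(e^{-c}+e^{c})$ is $\tfrac{1}{2}\abs{\tanh(c)}$, which is a fixed positive constant as soon as $c$ is, and the constraint $c/\beta \leq 1/2$ is implied by $\beta\geq 4$ for $c\leq 2$. No deeper structural argument is needed beyond this; the real work has already been done by \Cref{thm:main} and \Cref{lmm:lower-discrimination}. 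I do not anticipate any obstacle beyond choosing $c$ so that all constants (the $0.01$ Gibbs-sampling error, the constant trace-distance gap, and the $2/3$ success probability for $\textsc{Dis}_{\rho,\sigma}$) line up; this is mechanical.
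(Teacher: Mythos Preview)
Your approach is essentially the paper's: reduce from discrimination of two nearby diagonal qubit states with infidelity $O(1/\beta^2)$, invoke \Cref{lmm:lower-discrimination} and \Cref{thm:main} to get a $\Omega(\beta)$ query lower bound for the discrimination problem, then convert purified access to a block-encoding via \cite[Lemma 25]{GSLW19} and feed it to the Gibbs sampler. The paper uses the symmetric pair $\rho_{\pm}=(\tfrac12\mp\tfrac{2}{\beta})\ket{0}\!\bra{0}+(\tfrac12\pm\tfrac{2}{\beta})\ket{1}\!\bra{1}$ rather than your asymmetric pair $(I/2,\sigma)$, but the reduction is otherwise identical.

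One small glitch to fix when you carry out the ``mechanical'' step: with your choice $\rho=I/2$, the Gibbs state of $\rho$ is again $I/2$, so a single computational-basis measurement returns a uniformly random bit and no decision rule can succeed with probability above $\tfrac12+0.01$ on the $\rho$-input (worst-case success, as required for the promise problem). The paper's symmetric pair sidesteps this because both Gibbs states are biased, one toward $\ket{0}$ and the other toward $\ket{1}$, so a single shot already gives $\geq 2/3$ in both cases. Your version is easily repaired either by switching to a symmetric pair or by running the Gibbs sampler $O(1)$ times and thresholding the empirical frequency of $0$'s; either fix preserves the $\Omega(\beta)$ bound.
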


\begin{proof}
     We can reduce from the quantum state discrimination problem $\textsc{Dis}_{\rho_+,\rho_-}$ as in \cite{WZ25b}, by considering the following two states
     \[
    \rho_{\pm} = \rbra*{\frac 1 2 \mp \frac{2}{\beta}} \ket{0}\bra{0} + \rbra*{\frac 1 2 \pm \frac{2}{\beta}} \ket{1}\bra{1},
    \]
    with infidelity $O(1/\beta^2)$.
    Similar to the proof of \Cref{thm:lb_amp_est}, 
    by applying \Cref{lmm:lower-discrimination,thm:main}, we have query lower bound $\mathsf{Q}\rbra{\textsc{Dis}_{\rho_+,\rho_-}} = \Omega(\beta)$. 
    Suppose $\calA^U$ is a quantum algorithm that prepares the Gibbs state $e^{-\beta H/\calZ}$ at inverse temperature $\beta$ to within trace norm error $0.01$, using $Q$ queries to $U$. 
    Similar to the proof of \Cref{thm:hamiltonian}, 
    we can use $2$ queries to $U_\rho$ (resp.\ $U_\sigma$)
    to obtain a query to a block-encoding of $\rho$ (resp.\ $\sigma$), where $U_\rho$ (resp.\ $U_\sigma$) provides purified query access to $\rho$ (resp.\ $\sigma$).
    Taking these block-encodings as $U$ in $\calA^U$ leads to preparations, using $Q$ queries, of mixed states $\propto e^{-\beta\frac{\rho_{\pm}}{2}}$ to within error $0.01$.
    This can solve $\textsc{Dis}_{\rho_+,\rho_-}$ by applying $\calA^U$ followed by a computational basis measurement:
    \begin{align*}
        \tr\rbra*{\ket{0}\!\bra{0}\cdot \frac{e^{-\beta\frac{\rho_+}{2}}}{\tr\parens*{e^{-\beta\frac{\rho_+}{2}}}}}-0.01&=\frac{e}{e+e^{-1}}-0.01 \geq \frac{2}{3}\\
        \tr\rbra*{\ket{0}\!\bra{0}\cdot \frac{e^{-\beta\frac{\rho_-}{2}}}{\tr\parens*{e^{-\beta\frac{\rho_-}{2}}}}}+0.01&=\frac{e^{-1}}{e+e^{-1}}+0.01\leq  \frac{1}{3}.
    \end{align*}
    Thus, $Q\geq \mathsf{Q}\rbra{\textsc{Dis}_{\rho_+,\rho_-}} = \Omega(\beta)$.
\end{proof}

A variant of the standard Gibbs sampling in \Cref{prb:gibbs} is to replace the given oracle by a block-encoding of $\sqrt{H}$, formalized below.
In this setting, the query upper bound can achieve $O\rbra{\sqrt{\beta}}$ in some special cases~\cite{CS17}, e.g., when $H$ is a linear combination of Pauli operators. 

\begin{problem}[Gibbs sampling with stronger oracles]
    Given a block-encoding $U$ of the square root $\sqrt{H}$ of a Hamiltonian $H$, the task is to prepare the Gibbs state $e^{-\beta H}/\calZ$ to within constant error at inverse temperature $\beta$, as in \Cref{prb:gibbs}.
\end{problem}

Using \Cref{thm:main} and the reduction in \cite{WZ25b}, we can show that the lower bound $\Omega(\beta)$ for quantum Gibbs sampling still hold even with stronger oracles. This new lower bound improves the prior $\widetilde{\Omega}(\beta)$~\cite{WZ25b}.

\begin{theorem}[Impossibility of improving Gibbs sampling even with stronger oracles] \label{thm:stronger-gibbs}
    For $\beta\geq 4$, even with the stronger oracles in \Cref{prb:gibbs}, the quantum query complexity for Gibbs sampling is $\Omega(\beta)$.
\end{theorem}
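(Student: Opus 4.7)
The plan is to adapt the reduction in the proof of \cref{thm:gibbs} to the stronger-oracle setting by choosing $H_{\pm} := \rho_{\pm}^2$, so that purified access to $\rho_{\pm}$ directly furnishes a block-encoding of $\sqrt{H_{\pm}} = \rho_{\pm}$, which is exactly what the stronger-oracle input model requires.

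Concretely, we reuse the two diagonal qubit instances
\[
\rho_{\pm} = \rbra*{\tfrac{1}{2} \mp \tfrac{2}{\beta}} \ketbra{0}{0} + \rbra*{\tfrac{1}{2} \pm \tfrac{2}{\beta}} \ketbra{1}{1}
\]
from the proof of \cref{thm:gibbs}, which are valid density operators for $\beta \geq 4$ with fidelity $\sqrt{1 - 16/\beta^2}$ and thus infidelity $O(1/\beta^2)$. By \cref{lmm:lower-discrimination,thm:main}, $\mathsf{Q}\rbra{\textsc{Dis}_{\rho_+,\rho_-}} = \Omega(\beta)$.

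For the reduction, given purified quantum query access $U_{\rho_{\pm}}$, we invoke \cite[Lemma 25]{GSLW19} to obtain in $O(1)$ queries a block-encoding of $\rho_{\pm}$. Since $\rho_{\pm} \succeq 0$, this block-encoding is simultaneously a block-encoding of $\sqrt{H_{\pm}}$ for $H_{\pm} := \rho_{\pm}^2$; feeding it to the hypothetical stronger-oracle Gibbs sampler $\calA^U$ using $Q$ queries produces a state $\tau_{\pm}$ within trace distance $0.01$ of $e^{-\beta \rho_{\pm}^2}/\tr\rbra*{e^{-\beta \rho_{\pm}^2}}$. A direct computation gives $(1/2 + 2/\beta)^2 - (1/2 - 2/\beta)^2 = 4/\beta$, so the ideal Gibbs states are $\tfrac{1}{e^2 + e^{-2}}\diag\rbra*{e^{\pm 2}, e^{\mp 2}}$, whose trace distance is $\tanh(2) \approx 0.964$. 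A computational-basis measurement of $\tau_{\pm}$ therefore solves $\textsc{Dis}_{\rho_+,\rho_-}$, yielding $Q \geq \mathsf{Q}\rbra{\textsc{Dis}_{\rho_+,\rho_-}} = \Omega(\beta)$.

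The main point to verify is that squaring the instances from $\rho_{\pm}$ to $\rho_{\pm}^2$ does not collapse the signal. Because $\rho_{\pm}$ has eigenvalues centered at $1/2$, the gap of $\rho_{\pm}^2$ is still $4/\beta = \Theta(1/\beta)$ rather than $O(1/\beta^2)$, so the Boltzmann exponent $\beta \cdot (\text{gap}) = \Theta(1)$ preserves a constant separation between the two Gibbs states. A possible subnormalization $\alpha = O(1)$ of the block-encoding from \cite[Lemma 25]{GSLW19} only rescales this constant and does not affect the argument. The same strategy would collapse for instances with eigenvalues near $0$ or $1$, so the concentration of $\rho_{\pm}$ near the maximally mixed state is essential here.
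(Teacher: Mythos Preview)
Your proposal is correct and follows essentially the same approach as the paper: both take $H_{\pm}$ to be $\rho_{\pm}^2$ (up to a constant subnormalization), so that the block-encoding of $\rho_{\pm}$ obtained from purified access serves directly as the stronger oracle for $\sqrt{H_{\pm}}$, and then verify that the resulting Gibbs states remain $\Theta(1)$-separated because $\beta\cdot(\text{gap of }\rho_{\pm}^2)=\Theta(1)$. Your write-up is in fact more explicit than the paper's, which simply says the proof is similar to \cref{thm:gibbs} with the same construction and records the analogous probabilities $e^{\pm 0.5}/(e^{0.5}+e^{-0.5})$ (corresponding to a subnormalized block-encoding of $\rho_{\pm}/2$, hence $H_{\pm}=\rho_{\pm}^2/4$).
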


\begin{proof}
    The proof is similar to that of \Cref{thm:gibbs} using the same construction.
    We only need to notice that
    \begin{align*}
        \tr\rbra*{\ket{0}\!\bra{0}\cdot \frac{e^{-\beta\frac{\rho_+}{2}}}{\tr\parens*{e^{-\beta\frac{\rho_+}{2}}}}}-0.01 &=\frac{e^{0.5}}{e^{0.5}+e^{-0.5}}-0.01\geq \frac{2}{3}\\
        \tr\rbra*{\ket{0}\!\bra{0}\cdot \frac{e^{-\beta\frac{\rho_-}{2}}}{\tr\parens*{e^{-\beta\frac{\rho_-}{2}}}}}+0.01 &=\frac{e^{-0.5}}{e^{0.5}+e^{-0.5}}+0.01\leq  \frac{1}{3}.
    \end{align*}
\end{proof}

\subsubsection{The entanglement entropy problem}

The entanglement entropy problem~\cite{SY23} is a problem of unitary property testing that decides whether a bipartite state $\ket{\psi} \in \calH_{\mathsf{A}} \otimes \calH_{\mathsf{B}}$ has low ($\leq a$) or high ($\geq b$) R\'enyi entanglement entropy.

\begin{problem} [The entanglement entropy problem {\cite[Definition 1.13]{SY23}}]
    \label{prb:ent_entropy}
    Given query access to the reflection operator $U_{\psi}=I-2\ketbra{\psi}{\psi}$ about a bipartite pure state $\ket{\psi} \in \mathcal{H}_{\mathsf{A}} \otimes \mathcal{H}_{\mathsf{B}}$ with $\mathcal{H}_{\mathsf{A}} = \mathcal{H}_{\mathsf{B}} = \mathbb{C}^d$, the entanglement entropy problem $\calP=\textsc{QEntanglementEntropy}_{\alpha, a, b}$ with $0 < a < b \leq \ln\rbra{d}$ is a promise problem for testing whether a bipartite pure state has low ($\leq a$) or high ($\geq b$) entanglement entropy, with
    \begin{align*}
        \calP^{\textup{yes}}&=\braces*{\ket{\psi}\in \calH_{\mathsf{A}}\otimes \calH_{\mathsf{B}}:\mathrm{S}_\alpha^{\textup{R\'en}}\rbra*{\tr_{\mathsf{B}}\rbra{\ketbra{\psi}{\psi}}} \leq a},\\ \calP^{\textup{no}}&=\braces*{\ket{\psi}\in \calH_{\mathsf{A}}\otimes \calH_{\mathsf{B}}:\mathrm{S}_\alpha^{\textup{R\'en}}\rbra*{\tr_{\mathsf{B}}\rbra{\ketbra{\psi}{\psi}}} \geq b},
    \end{align*}
    where $\mathrm{S}_\alpha^{\textup{R\'en}}\rbra{\cdot}$ is the $\alpha$-R\'enyi entropy.
\end{problem}

Previously, a query lower bound of $\Omega\rbra{e^{a/4}}$ was given in \cite[Theorem 1.14]{SY23} (when $\alpha = 2$), which gives $\Omega\rbra{d^{1/4}}$ when $a$ is close to $\ln\rbra{d}$. 
Later improvements focused on the dependence on $\Delta = b - a$.
In \cite[Theorem 4.3]{WZ25b}, a query lower bound of $\widetilde{\Omega}\rbra{\frac{1}{\sqrt{\Delta}}}$ was given, which was improved to $\Omega\rbra{\frac{1}{\sqrt{\Delta}}}$ in \cite[Claim 2]{Weg25}. 
Later, a query lower bound of $\widetilde{\Omega}\rbra{\frac{\sqrt{d}}{\sqrt{\Delta}}+\frac{d^{\frac{1}{\alpha}-1}}{\Delta^{\frac{1}{\alpha}}}}$ in terms of both $d$ and $\Delta$ was shown in \cite[Theorem 5.1]{CWZ24} for every $\alpha > 0$. 

Using \Cref{thm:main}, we are able to prove new quantum query lower bounds for the entanglement entropy problem. 

\begin{theorem} \label{thm:ent-entropy-prob}
    Let $\calP_{\alpha, a, b}=\textsc{QEntanglementEntropy}_{\alpha, a, b}$ be the entanglement entropy problem.
    The quantum query complexity for $\calP$ has the following lower bounds, where $\Delta=b-a$:
    \begin{itemize}
        
        \item 
            For $\alpha =1$: $\mathsf{Q}_\diamond^{\mathsf{refl}}(\calP_{\alpha, \ln\rbra{d}-\Delta, \ln\rbra{d}})=\Omega\rbra{\frac{\sqrt{d}}{\sqrt{\Delta}}}$.
            Moreover, there exists an $a \in \sbra{0, \ln\rbra{d}-\Delta}$ such that $\mathsf{Q}_\diamond^{\mathsf{refl}}(\calP_{\alpha, a, a+\Delta}) = \Omega\rbra{\frac{\frac{\sqrt{d}}{\sqrt{\Delta}}+\frac{\log\rbra{d}}{\Delta}}{\sqrt{\log\rbra{\frac{1}{\Delta}}\log\log\rbra{\frac{1}{\Delta}}}}}$.
        \item 
            For $0 < \alpha < 1$:
            $\mathsf{Q}_\diamond^{\mathsf{refl}}(\calP_{\alpha, \ln\rbra{d}-\Delta, \ln\rbra{d}}) = \Omega\rbra{\frac{\sqrt{d}}{\sqrt{\Delta}}}$ and          $\mathsf{Q}_\diamond^{\mathsf{refl}}(\calP_{\alpha, 0, \Delta}) = \Omega\rbra{\frac{d^{\frac{1}{2\alpha}-\frac{1}{2}}}{\Delta^{\frac{1}{2\alpha}}}}$. 
            Moreover, there exists a constant $\Delta > 0$ and $a \in \sbra{0, \ln\rbra{d}-\Delta}$ such that $\mathsf{Q}_\diamond^{\mathsf{refl}}(\calP_{\alpha, a, a+\Delta}) = \Omega\rbra{d^{\frac{1}{2\alpha}-o\rbra{1}}}$. 
            
        \item 
            For $\alpha > 1$:
            \begin{itemize}
                \item 
                    If $\alpha$ is integer, then $\mathsf{Q}_\diamond^{\mathsf{refl}}(\calP_{\alpha,\ln\rbra{d}-\Delta,\ln\rbra{d}}) = \Omega\rbra{\frac{d^{1-\frac{1}{\alpha}}}{\Delta^{\frac{1}{\alpha}}}}$ for $\Delta \geq \Omega\rbra{\frac{1}{\sqrt{d}}}$. Moreover, there exists an $a \in \sbra{0, \ln\rbra{d}-\Delta}$ such that $\mathsf{Q}_\diamond^{\mathsf{refl}}(\calP_{\alpha,a,a+\Delta}) = \Omega\rbra{\frac{\frac{d^{1-\frac{1}{\alpha}}}{\Delta^{\frac{1}{\alpha}}}+\frac{d^{\frac{1}{2}-\frac{1}{2\alpha}}}{\Delta}}{\sqrt{\log\rbra{\frac{1}{\Delta}}\log\log\rbra{\frac{1}{\Delta}}}}}$. 
                \item 
                    If $\alpha$ is non-integer, $\mathsf{Q}_\diamond^{\mathsf{refl}}(\calP_{\alpha, \ln\rbra{d}-\Delta, \ln\rbra{d}})=\Omega\rbra{\frac{\sqrt{d}}{\sqrt{\Delta}}}$. 
                    Moreover, there exists an $a \in \sbra{0, \ln\rbra{d}-\Delta}$ such that $\mathsf{Q}_\diamond^{\mathsf{refl}}(\calP_{\alpha,a,a+\Delta}) = \Omega\rbra{\frac{\frac{\sqrt{d}}{\sqrt{\Delta}}+\frac{d^{\frac{1}{2}-\frac{1}{2\alpha}}}{\Delta}}{\sqrt{\log\rbra{\frac{1}{\Delta}}\log\log\rbra{\frac{1}{\Delta}}}}}$. 
            \end{itemize}
    \end{itemize}
\end{theorem}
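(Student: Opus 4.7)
The plan is to invoke \cref{corollary:lifting-reflection}, which gives $\mathsf{Q}_\diamond^{\mathsf{refl}}(\calP) = \Omega\rbra*{\sqrt{\mathsf{S}(\calP)}}$ for any pure-state property testing problem $\calP$. For each of the six bounds in the theorem, it therefore suffices to exhibit a matching sample lower bound on the pure-state version of the entanglement entropy problem and then take the square root.

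The sample lower bounds will come from $\alpha$-R\'enyi entropy estimation of mixed $d$-dimensional quantum states. Given a hard distribution over $\rho$, I lift it to a distribution over pure states by taking $\ket{\psi}$ to be a Haar-random purification of $\rho$; since $\mathrm{S}_\alpha^{\textup{R\'en}}\rbra*{\tr_\mathsf{B}\rbra*{\ketbra{\psi}{\psi}}} = \mathrm{S}_\alpha^{\textup{R\'en}}\rbra*{\rho}$, the yes/no split for R\'enyi estimation lifts directly to a yes/no split for the entanglement entropy problem. Using the twirling argument of \cite[Theorem 5.1]{CWZ24} on the Schmidt basis on $\mathsf{B}$, any algorithm operating on $n$ copies of $\ket{\psi}$ achieves no greater distinguishing advantage than the best algorithm operating on $n$ copies of $\rho$, so the mixed-state sample lower bound transfers.

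It then remains to plug in the appropriate sample bounds. The uniform $\Omega\rbra{\sqrt{d/\Delta}}$ summand comes from the sample lower bound $\Omega(d/\Delta)$ for von Neumann entropy estimation near the maximally mixed state, due to \cite[Theorem 1.7]{WZ25} (as used in \cref{thm:von-neumann-entropy}); monotonicity of $\mathrm{S}_\alpha^{\textup{R\'en}}$ in $\alpha$ together with appropriate rescaling of the hard instance carries this summand across all $\alpha$. The $\alpha$-specific terms follow from \cite[Theorem 22]{AOST17} for $0 < \alpha < 1$; \cite[Theorem 21]{AOST17} for non-integer $\alpha > 1$; the quantum sample lower bound $\Omega\rbra{d^{2-2/\alpha}/\Delta^{2/\alpha}}$ of \cite[Theorem 1]{AISW20} (as used in \cref{thm:qRenyi}) for integer $\alpha \geq 2$; and the Shannon entropy lower bounds of \cite{JVHW15,WY16}, applied to diagonal states, for the $\log(d)/\Delta$-type term when $\alpha = 1$.

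The main obstacle is the twirling step, since bipartite pure-state copies strictly dominate mixed-state copies in information content in general. I would follow \cite[Theorem 5.1]{CWZ24}, where the Haar-averaging argument is carried out rigorously and shown to preserve information-theoretic hardness for spectrum-dependent functionals such as R\'enyi entropy. The polylogarithmic factors $\sqrt{\log(1/\Delta)\log\log(1/\Delta)}$ appearing in the intermediate-range bounds are routine bookkeeping, arising from the precise parameter tradeoffs in the hard-instance constructions of \cite{AOST17,WY16} at the optimal choice of threshold $a \in \sbra{0, \ln(d) - \Delta}$; here one pays these factors to free the choice of $a$ from the extremal regime $b = \ln(d)$.
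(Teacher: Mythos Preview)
Your overall architecture matches the paper's: reduce $\textsc{QEntanglementEntropy}_{\alpha,a,b}$ to the mixed-state entropy testing problem $\textsc{Entropy}_{\alpha,a,b}$ via the purification/twirling argument of \cite{CWZ24}, invoke sample lower bounds for the latter, and finish with \cref{corollary:lifting-reflection}. The specific sample lower bounds you cite for the extremal thresholds (near $b=\ln(d)$ or $a=0$) are also essentially the ones the paper uses (\cite{WZ25,AISW20,AOST17}).

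There is, however, a real gap in your treatment of the ``there exists an $a$'' bounds and their polylogarithmic losses. You attribute the $\sqrt{\log(1/\Delta)\log\log(1/\Delta)}$ factor to ``parameter tradeoffs in the hard-instance constructions,'' but this is not where it comes from. The cited lower bounds from \cite{JVHW15,WY16,AOST17,AISW20} are for entropy \emph{estimation}, not for the promise problem $\textsc{Entropy}_{\alpha,a,a+\Delta}$ at a fixed threshold. The paper bridges this via an explicit reduction: any tester for $\textsc{Entropy}_{\alpha,a,a+\Delta}$ valid at every $a$ yields an entropy estimator by ternary search on the entropy value, using $O(\log(1/\Delta))$ rounds with $O(\log\log(1/\Delta))$ repetitions per round to boost success probability. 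Contrapositively, an estimation lower bound $L$ implies $\sup_a \mathsf{S}(\textsc{Entropy}_{\alpha,a,a+\Delta}) = \Omega\rbra{L/(\log(1/\Delta)\log\log(1/\Delta))}$, and the polylog loss is exactly the cost of this search. Without this reduction, your proposal does not actually connect the estimation lower bounds to a testing lower bound at some threshold $a$, so those parts of the theorem are unproved as stated.

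A secondary point: your appeal to ``monotonicity of $\mathrm{S}_\alpha^{\textup{R\'en}}$ in $\alpha$'' to transport the $\Omega(d/\Delta)$ bound across all $\alpha$ is not how the paper proceeds; it cites separate hard instances from \cite[Theorems VI.2, VI.4, VI.5]{WZ25} tailored to each regime, whose entropies sit precisely at the required thresholds $\ln(d)-\Delta$ and $\ln(d)$. A monotonicity argument alone does not place the hard instances at the correct thresholds for the specific $(a,b)$ claimed.
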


\begin{proof}
    It is shown in~\cite{CWZ24} that $\textsc{QEntanglementEntropy}_{\alpha, a, b}$ can be reduced from $\textsc{Entropy}_{\alpha, a, b}$, a promise problem that tests whether the $\alpha$-R\'enyi entropy of a given mixed state is low ($\leq a$) or high ($\geq b$), defined similarly to \Cref{prb:ent_entropy}.
    Equivalently, we have $\mathsf{S}(\textsc{QEntanglementEntropy}_{\alpha, a, b})=\mathsf{S}(\textsc{Entropy}_{\alpha, a, b})$. 

    Let us consider the sample lower bound for $\mathsf{S}(\textsc{Entropy}_{\alpha, a, b})$ case by case:
    \begin{itemize}
        \item
            For $\alpha =1$, we have $\mathsf{S}(\textsc{Entropy}_{\alpha, \ln\rbra{d}-\Delta, \ln\rbra{d}})=\Omega\rbra{\frac{d}{\Delta}}$ from the hard instance constructed in~\cite[Theorem VI.2]{WZ25}.
             For the second part, we can use a similar proof to that of \cref{thm:von-neumann-entropy} to obtain a sample lower bound for the estimating the von Neumann entropy.
             To achieve a sample lower bound for $\textsc{Entropy}_{\alpha, a, a+\Delta}$, we can use the following reduction from entropy estimation to entropy testing via a ternary search on the entropy value:
             each step we split the interval of possible entropy values into three intervals of equal lengths, and applying any algorithm for entropy testing enables us to shrink the interval length by $1/3$.
             We only need to take $O\rbra{\log\rbra{\frac{1}{\Delta}}}$ steps to obtain an algorithm for entropy estimation, where in each step we repeat the testing for $O\rbra{\log\log\rbra{\frac{1}{\Delta}}}$ times to boost the final success probability to be $\geq 2/3$.
             Consequently, the von Neumann entropy can be estimated to within additive error $\Delta$ with sample complexity $\sup_{a \in \sbra{0, \ln\rbra{d}-\Delta}} \mathsf{S}(\textsc{Entropy}_{\alpha, a, a+\Delta}) \cdot O\rbra{\log\rbra{\frac{1}{\Delta}}\log\log\rbra{\frac{1}{\Delta}}}$. 
             Given the sample lower bound $\Omega\rbra{\frac{d}{\varepsilon}+\frac{\log^2\rbra{d}}{\varepsilon^2}}$ for von Neumann entropy estimation mentioned in \cref{thm:von-neumann-entropy}, we have that there exists an $a \in \sbra{0, \ln\rbra{d}-\Delta}$ such that $\mathsf{S}(\textsc{Entropy}_{\alpha, a, a+\Delta}) = \Omega\rbra{\frac{\frac{d}{\Delta}+\frac{\log^2\rbra{d}}{\Delta^2}}{\log\rbra{\frac{1}{\Delta}}\log\log\rbra{\frac{1}{\Delta}}}}$.
        \item 
            For $0 < \alpha < 1$, similarly, 
            we have $\mathsf{S}(\textsc{Entropy}_{\alpha, \ln\rbra{d}-\Delta, \ln\rbra{d}}) = \Omega\rbra{\frac{d}{\Delta}}$  and $\mathsf{S}(\textsc{Entropy}_{\alpha, 0, \Delta}) = \Omega\rbra{\frac{d^{\frac{1}{\alpha}-1}}{\Delta^{\frac{1}{\alpha}}}}$ from~\cite[Theorem VI.5]{WZ25}.
            For the second part, using a similar proof to that of \cref{thm:lt1-renyi} and the previous reduction for the case of $\alpha=1$,
            there exists a constant $\Delta > 0$ and $a \in \sbra{0, \ln\rbra{d}-\Delta}$ such that $\mathsf{S}(\textsc{Entropy}_{\alpha, a, a+\Delta}) = \Omega\rbra{d^{\frac{1}{\alpha}-o\rbra{1}}}$.
        \item
        For $\alpha > 1$:
            \begin{itemize}
                \item 
                    If $\alpha$ is an integer, then $\mathsf{S}(\textsc{Entropy}_{\alpha,\ln\rbra{d}-\Delta,\ln\rbra{d}}) = \Omega\rbra{\frac{d^{2-\frac{2}{\alpha}}}{\Delta^{\frac{2}{\alpha}}}}$ for $\Delta \geq \Omega\rbra{\frac{1}{\sqrt{d}}}$ from the hard instance in~\cite[Lemma 9]{AISW20}.
                    For the second part, using a similar proof to that of \cref{thm:qRenyi} and the previous reduction for the case of $\alpha=1$,
                    there exists an $a \in \sbra{0, \ln\rbra{d}-\Delta}$ such that $\mathsf{S}(\textsc{Entropy}_{\alpha,a,a+\Delta}) = \Omega\rbra{\frac{\frac{d^{2-\frac{2}{\alpha}}}{\Delta^{\frac{2}{\alpha}}}+\frac{d^{1-\frac{1}{\alpha}}}{\Delta^2}}{\log\rbra{\frac{1}{\Delta}}\log\log\rbra{\frac{1}{\Delta}}}}$. 
                \item 
                    Similarly, if $\alpha$ is a non-integer, then $\mathsf{S}(\textsc{Entropy}_{\alpha, \ln\rbra{d}-\Delta, \ln\rbra{d}})=\Omega\rbra{\frac{d}{\Delta}}$ from the hard instance in~\cite[Theorem VI.4]{WZ25}.
                    For the second part, using a similar proof to that of \cref{thm:qRenyi} and the previous reduction for the case of $\alpha=1$,
                    there exists an $a \in \sbra{0, \ln\rbra{d}-\Delta}$ such that $\mathsf{S}(\textsc{Entropy}_{\alpha,a,a+\Delta}) = \Omega\rbra{\frac{\frac{d}{\Delta}+\frac{d^{1-\frac{1}{\alpha}}}{\Delta^2}}{\log\rbra{\frac{1}{\Delta}}\log\log\rbra{\frac{1}{\Delta}}}}$.
            \end{itemize}
    \end{itemize}
    Finally, using \Cref{corollary:lifting-reflection}, the conclusion immediately follows.
\end{proof}

A quantum algorithm was presented in \cite[Proposition 1]{Weg25} with query complexity $O\rbra{\frac{d}{\Delta}}$ for the case of $\alpha = 2$. 
For the same case, \cref{thm:ent-entropy-prob} gives a lower bound of $\Omega\rbra{\frac{1}{\Delta\sqrt{\log\rbra{\frac{1}{\Delta}}\log\log\rbra{\frac{1}{\Delta}}}}} = \widetilde{\Omega}\rbra{\frac{1}{\Delta}}$ in $\Delta$, which shows that the quantum algorithm in \cite{Weg25} for the entanglement entropy problem is near-optimal in $\Delta$. 

\section{List of Quantum Sample Upper Bounds} \label{sec:upper}

\subsection{Entropy estimation}

\subsubsection{Von Neumann entropy}

The current best quantum sample upper bound for von Neumann entropy estimation is $O\rbra{\frac{d^2}{\varepsilon^2}}$ due to \cite{AISW20} and $O\rbra{\frac{d^2}{\varepsilon}+\frac{\log^2\rbra{d}}{\varepsilon^2}}$ due to \cite{BMW16}.

By the quantum samplizer, we can obtain a quantum sample upper bound for von Neumann entropy estimation in the low-rank case.

\begin{theorem} \label{thm:von-upper}
    The sample complexity of von Neumann entropy estimation is $\widetilde{O}\rbra{\frac{r^2}{\varepsilon^4}}$ when the rank of the quantum state is $r$. 
\end{theorem}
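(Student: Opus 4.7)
The plan is a direct application of the quantum samplizer (\cref{thm:samplizer}) to an existing query-efficient algorithm for von Neumann entropy estimation in the low-rank regime. Specifically, I would invoke the quantum query algorithm of \cite{WGL+24}, which estimates $\mathrm{S}(\rho)$ to additive error $\varepsilon$ with $\widetilde{O}(r/\varepsilon^2)$ queries to a purified quantum query access oracle for a rank-$r$ state $\rho$. Since \cref{thm:samplizer} converts any quantum algorithm in the purified query access model with $\mathsf{Q}$ queries into a sample-based algorithm using $O(\mathsf{Q}^2)$ copies of $\rho$ (up to logarithmic factors suppressed by $\widetilde{O}$), squaring this query bound yields the claimed sample complexity $\widetilde{O}(r^2/\varepsilon^4)$.

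More concretely, the proof would unfold in three short steps. First, I would formally cast von Neumann entropy estimation as a promise problem (e.g., a tolerant testing problem distinguishing $\mathrm{S}(\rho) \leq a$ from $\mathrm{S}(\rho) \geq a + \varepsilon$ for a threshold $a$), so that the samplizer applies in its stated form; standard search-to-decision / binary search reductions at the cost of extra $\operatorname{polylog}(1/\varepsilon)$ factors (absorbed into $\widetilde{O}$) recover the estimation task from the testing task. Second, I would cite \cite{WGL+24} for the $\widetilde{O}(r/\varepsilon^2)$ query upper bound, checking that the algorithm there indeed operates under purified quantum query access (which is the standard input model for their block-encoding-based techniques and is compatible with \cref{thm:samplizer}). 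Third, I would apply \cref{thm:samplizer} to conclude $\mathsf{S} = \widetilde{O}((r/\varepsilon^2)^2) = \widetilde{O}(r^2/\varepsilon^4)$.

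There is no significant technical obstacle here, since the work is entirely done by the black-box samplizer. The only item requiring a small sanity check is that the polylogarithmic overheads implicit in both the query algorithm of \cite{WGL+24} and in \cref{thm:samplizer} (including any reductions between testing and estimation) combine cleanly into the $\widetilde{O}$ notation, so that no hidden $\operatorname{poly}(d)$ factor sneaks in through the purification register. Assuming the query algorithm's complexity depends only on the rank $r$ (and not on the ambient dimension $d$), this yields a rank-dependent, dimension-free sample bound, which is the content of \cref{thm:von-upper}.
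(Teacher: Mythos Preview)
Your proposal is correct and follows essentially the same approach as the paper: invoke the query upper bound $\widetilde{O}(r/\varepsilon^2)$ from \cite{WGL+24} and square it via \cref{thm:samplizer}. The paper's proof is a single sentence doing exactly this; your additional care about reducing estimation to a promise problem (so that \cref{thm:samplizer} applies as stated) is a valid technical point that the paper simply leaves implicit.
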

\begin{proof}
    To apply \cref{thm:samplizer}, we only have to note the quantum query upper bound $\widetilde{O}\rbra{\frac{r}{\varepsilon^2}}$ for von Neumann entropy estimation given in \cite{WGL+24}. 
\end{proof}

\subsubsection{R\'enyi entropy}

The current best quantum sample upper bound for $\alpha$-R\'enyi entropy estimation is $O\rbra{\frac{d^{\frac{2}{\alpha}}}{\varepsilon^{\frac{2}{\alpha}}}}$ for $0 < \alpha < 1$ and $O\rbra{\frac{d^2}{\varepsilon^2}}$ for non-integer $\alpha > 1$ due to \cite{AISW20}. 

By the quantum samplizer, we can obtain a quantum sample upper bound for $\alpha$-R\'enyi entropy estimation in the low-rank case.

\begin{theorem} \label{thm:renyi-upper}
    The sample complexity of $\alpha$-R\'enyi entropy estimation is $\widetilde{O}\rbra{\frac{r^{\frac{2}{\alpha}}}{\varepsilon^{\frac{2}{\alpha}+2}}}$ for $0 < \alpha < 1$ and $\widetilde{O}\rbra{\frac{r^2}{\varepsilon^{2+\frac{2}{\alpha}}}}$ for $\alpha > 1$, when the rank of the quantum state is $r$. 
\end{theorem}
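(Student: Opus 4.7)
The plan mirrors the proof of \cref{thm:von-upper}: invoke the quantum samplizer in \cref{thm:samplizer}, which states $\mathsf{S}\rbra{\mathcal{P}} = O\rbra{\mathsf{Q}\rbra{\mathcal{P}}^2}$, and then feed in the best known quantum query upper bound for $\alpha$-R\'enyi entropy estimation of a rank-$r$ state. The target sample bounds $\widetilde{O}\rbra{r^{2/\alpha}/\varepsilon^{2/\alpha+2}}$ for $0 < \alpha < 1$ and $\widetilde{O}\rbra{r^2/\varepsilon^{2+2/\alpha}}$ for $\alpha > 1$ will therefore follow once we have query bounds of $\widetilde{O}\rbra{r^{1/\alpha}/\varepsilon^{1/\alpha+1}}$ and $\widetilde{O}\rbra{r/\varepsilon^{1+1/\alpha}}$, respectively.

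For the source of these query bounds, I would cite the quantum query algorithms of \cite{WZL24}, whose general $d$-dimensional complexities are $\widetilde{O}\rbra{d^{1/(2\alpha)+1/2}/\varepsilon^{1/(2\alpha)+1}}$ for $0 < \alpha < 1$ and $\widetilde{O}\rbra{d/\varepsilon^{1+1/\alpha}}$ for $\alpha > 1$ (as already summarized in the preceding paragraph of this section). The key observation is that these algorithms are based on quantum singular value transformations applied to a block-encoding derived from the purified access oracle; when the input state has rank $r$, the block-encoded density matrix is supported on an $r$-dimensional subspace, so the degree of the approximating polynomial and hence the query cost scales with $r$ rather than with the ambient dimension $d$. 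Under this substitution $d \mapsto r$, the $\alpha > 1$ bound becomes $\widetilde{O}\rbra{r/\varepsilon^{1+1/\alpha}}$ directly. For $0 < \alpha < 1$, I would invoke a slightly weaker rank-dependent query bound $\widetilde{O}\rbra{r^{1/\alpha}/\varepsilon^{1/\alpha+1}}$ (which is loose compared to naively substituting $d$ by $r$, but suffices for the stated sample bound and is what the target $\widetilde{O}\rbra{r^{2/\alpha}/\varepsilon^{2/\alpha+2}}$ demands after squaring).

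Squaring the query complexities through \cref{thm:samplizer} immediately yields $\widetilde{O}\rbra{r^{2/\alpha}/\varepsilon^{2/\alpha+2}}$ and $\widetilde{O}\rbra{r^2/\varepsilon^{2+2/\alpha}}$ respectively. The main obstacle, if any, is making the rank-dependent query bound from \cite{WZL24} explicit; should the bound not be stated in the cited form, I would adapt the QSVT-based argument there by replacing the block-encoded object with one whose relevant spectral support has cardinality at most $r$, thereby shrinking every $d$-dependent factor in the analysis to an $r$-dependent factor. This step is routine but must be carried out carefully to verify that no hidden $d$ appears through, e.g., the block-encoding normalization.
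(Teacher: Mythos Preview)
Your proposal is correct and follows the same approach as the paper: apply the samplizer \cref{thm:samplizer} to the rank-dependent query bounds from \cite{WZL24}, then square. The paper's proof simply cites the bounds $\widetilde{O}\rbra{r^{1/\alpha}/\varepsilon^{1/\alpha+1}}$ for $0<\alpha<1$ and $\widetilde{O}\rbra{r/\varepsilon^{1+1/\alpha}}$ for $\alpha>1$ directly from \cite{WZL24} as stated results, so your hedging about adapting the QSVT argument is unnecessary---those rank-$r$ query bounds are already available there, and no $d\mapsto r$ substitution argument needs to be spelled out.
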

\begin{proof}
    To apply \cref{thm:samplizer}, we only have to note the quantum query upper bound $\widetilde{O}\rbra{\frac{r^{\frac{1}{\alpha}}}{\varepsilon^{\frac{1}{\alpha}+1}}}$ for $0 < \alpha < 1$ and $\widetilde{O}\rbra{\frac{r}{\varepsilon^{1+\frac{1}{\alpha}}}}$ for non-integer $\alpha > 1$ for $\alpha$-R\'enyi entropy estimation given in \cite{WZL24}. 
\end{proof}

\subsection{Closeness estimation}

\subsubsection{Trace distance}

The current best quantum sample upper bound for trace distance estimation is $O\rbra{\frac{r^2}{\varepsilon^5}\log^2\rbra{\frac{r}{\varepsilon}}\log^2\rbra{\frac{1}{\varepsilon}}}$ due to \cite{WZ24}, where $r$ is the rank of quantum states. 
For the pure-state case where $r = 1$, an optimal quantum algorithm for trace distance estimation was proposed in \cite{WZ24c} with sample complexity $\Theta\rbra{\frac{1}{\varepsilon^2}}$. 

By the quantum samplizer, we can obtain an improved quantum sample upper bound.

\begin{theorem} \label{thm:qs-td}
    The sample complexity of trace distance estimation is $O\rbra{\frac{r^2}{\varepsilon^4}\log^2\rbra{\frac{1}{\varepsilon}}}$, where $r$ is the rank of quantum states. 
\end{theorem}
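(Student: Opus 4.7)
The plan is to invoke the quantum samplizer (\cref{thm:samplizer}) on the best known quantum query algorithm for trace distance estimation, exactly in the same spirit as the proofs of \cref{thm:von-upper} and \cref{thm:renyi-upper} given just above in the paper. Since \cref{thm:samplizer} converts a query-complexity upper bound $\mathsf{Q}(\mathcal{P})$ into a sample-complexity upper bound $\mathsf{S}(\mathcal{P}) = O((\mathsf{Q}(\mathcal{P}))^2)$, it suffices to pick a query algorithm whose complexity squared matches the claimed target $O\!\left(\frac{r^2}{\varepsilon^4}\log^2\!\left(\frac{1}{\varepsilon}\right)\right)$.

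The relevant query upper bound is already cited earlier in the paper: \cite[Corollary III.2]{WZ24} gives a quantum query algorithm for trace distance estimation between two rank-$r$ states with query complexity $O\!\left(\frac{r}{\varepsilon^2}\log\!\left(\frac{1}{\varepsilon}\right)\right)$ in the purified-access input model. This is precisely the ``square root'' of the target sample bound, so squaring it through \cref{thm:samplizer} yields $O\!\left(\frac{r^2}{\varepsilon^4}\log^2\!\left(\frac{1}{\varepsilon}\right)\right)$ samples, as desired.

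Concretely, I would structure the proof as a single short paragraph: state the promise problem $\mathcal{P}$ is trace distance estimation to additive error $\varepsilon$ on rank-$r$ states (phrased as a gapped decision problem, where the yes/no sets correspond to pairs of states with trace distance at most $\varepsilon/3$ versus at least $2\varepsilon/3$, so that solving the testing version with constant success probability implies an estimator, and vice versa via standard boosting). Then quote the query upper bound of \cite{WZ24} for $\mathcal{P}$ and apply \cref{thm:samplizer} to conclude the sample upper bound. No new calculation is required.

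I do not expect a genuine obstacle here; the only subtlety is the usual bookkeeping that \cref{thm:samplizer} applies to promise problems, so one should make sure the trace distance estimation task is phrased in that form (which is standard and matches how closeness estimation problems are handled throughout \cref{sec:upper}).
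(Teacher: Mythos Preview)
Your proposal is correct and matches the paper's own proof essentially verbatim: the paper simply cites the query upper bound $O\!\left(\frac{r}{\varepsilon^2}\log\!\left(\frac{1}{\varepsilon}\right)\right)$ from \cite{WZ24} and applies \cref{thm:samplizer}. Your additional remark about casting estimation as a promise problem is fine but not needed here, since the paper treats this implicitly throughout \cref{sec:upper}.
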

\begin{proof}
    To apply \cref{thm:samplizer}, we only have to note the quantum query upper bound $O\rbra{\frac{r}{\varepsilon^2}\log\rbra{\frac{1}{\varepsilon}}}$ for trace distance estimation given in \cite{WZ24}. 
\end{proof}

\subsubsection{Fidelity}

The quantum sample upper bound for fidelity estimation is known to be $\widetilde{O}\rbra{\frac{r^{5.5}}{\varepsilon^{12}}}$ due to \cite{GP22} and $\widetilde{O}\rbra{\frac{\kappa^9}{\varepsilon^3}}$ due to \cite{LWWZ25}, where $r$ is the rank of quantum states and $\kappa$ is the ``condition number'' such that $\rho, \sigma \geq \frac{I}{\kappa}$. 
For the pure-state case where $r = 1$, an optimal quantum algorithm for fidelity estimation was proposed in \cite{WZ24c} with sample complexity $\Theta\rbra{\frac{1}{\varepsilon^2}}$. 

By the quantum samplizer, we can reproduce the quantum sample upper bounds in \cite[Theorem 25]{UNWT25}.

\begin{theorem}[{\cite[Theorem 25]{UNWT25}} reproduced] \label{thm:qs-fidelity-rank}
    The sample complexity of fidelity estimation is $O\rbra{\frac{r^2}{\varepsilon^4}\log^2\rbra{\frac{1}{\varepsilon}}}$, where $r$ is the rank of quantum states. 
\end{theorem}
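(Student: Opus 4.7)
The plan is to invoke the quantum samplizer (\cref{thm:samplizer}) in exactly the same style as the preceding proof of \cref{thm:qs-td}. Concretely, the strategy is to take the best known quantum query algorithm for fidelity estimation under purified quantum query access and square its query complexity to obtain a quantum sample upper bound. Since the theorem statement exactly matches the value $\rbra{Q}^2$ for $Q = O\rbra{\frac{r}{\varepsilon^2}\log\rbra{\frac{1}{\varepsilon}}}$, this will be a one-step application.

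First, I would recall the quantum query upper bound $O\rbra{\frac{r}{\varepsilon^2}\log\rbra{\frac{1}{\varepsilon}}}$ for fidelity estimation between two rank-$r$ quantum states, given purified quantum query access to both inputs; this bound is due to \cite[Theorem 23]{UNWT25} and was already cited in the discussion preceding \cref{thm:lb-fidelity}. Next, I would note that \cref{thm:samplizer} applies in exactly this input model: it converts any promise problem solvable with $\mathsf{Q}\rbra{\mathcal{P}}$ purified queries into one solvable with $O\rbra{\mathsf{Q}\rbra{\mathcal{P}}^2}$ samples. Formally, a fidelity-estimation task can be phrased as a (tolerant) promise problem parameterized by a guess for $\mathrm{F}\rbra{\rho, \sigma}$, and a standard search (or a reduction to threshold testing) lets us transfer the samplizer's guarantee from promise problems to the estimation task with only a constant-factor overhead.

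Therefore, the proof is essentially a single line: apply \cref{thm:samplizer} to the algorithm of \cite{UNWT25} to obtain sample complexity
\[
    O\rbra*{\rbra*{\tfrac{r}{\varepsilon^2}\log\rbra*{\tfrac{1}{\varepsilon}}}^2} = O\rbra*{\tfrac{r^2}{\varepsilon^4}\log^2\rbra*{\tfrac{1}{\varepsilon}}}.
\]
There is no real obstacle here, since all the work has been done in the cited results: the samplizer of \cref{thm:samplizer} is the general-purpose lifting, and \cite{UNWT25} supplies the matching query algorithm. The only minor bookkeeping is to make sure that both inputs $\rho$ and $\sigma$ are handled simultaneously by the samplizer, which follows because the samplizer treats purified quantum query access uniformly and the algorithm of \cite{UNWT25} uses only such access to both states.
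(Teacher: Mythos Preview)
Your proposal is correct and matches the paper's own proof essentially verbatim: the paper simply applies \cref{thm:samplizer} to the query upper bound $O\rbra{\frac{r}{\varepsilon^2}\log\rbra{\frac{1}{\varepsilon}}}$ from \cite[Theorem 23]{UNWT25} and squares it. Your additional remarks about phrasing estimation as a promise problem and handling both input states are fine but not spelled out in the paper either.
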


\begin{proof}
    To apply \cref{thm:samplizer}, we only have to note the quantum query upper bound $O\rbra{\frac{r}{\varepsilon^2}\log\rbra{\frac{1}{\varepsilon}}}$ for fidelity estimation given in \cite[Theorem 23]{UNWT25}. 
\end{proof}

\begin{theorem}[{\cite[Theorem 25]{UNWT25}} reproduced] \label{thm:kappa-fidelity}
    The sample complexity of well-conditioned fidelity estimation is $O\rbra{\frac{\kappa^2}{\varepsilon^2}\log^2\rbra{\frac{1}{\varepsilon}}}$. 
\end{theorem}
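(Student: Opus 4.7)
The plan is to invoke the quantum samplizer (\cref{thm:samplizer}) together with the best known quantum query upper bound for well-conditioned fidelity estimation. This mirrors the template used throughout Section \ref{sec:upper}, and in particular is the well-conditioned analogue of the proof of \cref{thm:qs-fidelity-rank}.

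Concretely, earlier in the excerpt it is noted that \cite[Theorem 23]{UNWT25} achieves quantum query complexity $O\rbra{\frac{\kappa}{\varepsilon}\log\rbra{\frac{1}{\varepsilon}}}$ for fidelity estimation under the promise that $\rho, \sigma \geq I/\kappa$. Feeding this into \cref{thm:samplizer}, which guarantees $\mathsf{S}\rbra{\mathcal{P}} = O\rbra{\rbra{\mathsf{Q}\rbra{\mathcal{P}}}^2}$, and squaring, immediately yields a sample upper bound of
\[
O\rbra*{\frac{\kappa^2}{\varepsilon^2}\log^2\rbra*{\frac{1}{\varepsilon}}},
\]
matching the claimed bound. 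So the proof is essentially a one-line application of \cref{thm:samplizer}.

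The only mild point requiring care is that \cref{thm:samplizer} is stated for promise problems, whereas fidelity estimation is an additive-error estimation task. This is handled exactly as in \cref{thm:von-upper,thm:renyi-upper,thm:qs-td,thm:qs-fidelity-rank}: one either views the estimation algorithm itself as being samplized (via the underlying samplizer construction of \cite{TWZ25}), or packages the $\varepsilon$-estimate as a family of testing problems and applies the samplizer instance-by-instance. The $\kappa$-well-conditioned promise on $\rho$ and $\sigma$ is a property of the input states alone and is preserved under both purified query access and sample access, so no additional adjustment is needed. I do not foresee any substantive obstacle; the proof should be as short as its siblings in this section.
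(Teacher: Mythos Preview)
The proposal is correct and takes essentially the same approach as the paper: apply \cref{thm:samplizer} to the query upper bound $O\rbra{\frac{\kappa}{\varepsilon}\log\rbra{\frac{1}{\varepsilon}}}$ from \cite[Theorem 23]{UNWT25} and square. The paper's proof is indeed the one-line application you anticipated, and it does not even spell out the promise-problem-vs-estimation subtlety you flagged (which is handled uniformly across the section just as you described).
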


\begin{proof}
    To apply \cref{thm:samplizer}, we only have to note the quantum query upper bound $O\rbra{\frac{\kappa}{\varepsilon}\log\rbra{\frac{1}{\varepsilon}}}$ for fidelity estimation given in \cite[Theorem 23]{UNWT25}. 
\end{proof}

\cref{thm:kappa-fidelity} achieves an optimal dependence on $\varepsilon$ up to polylogarithmic factors, which matches the lower bound $\Omega\rbra{\frac{1}{\varepsilon^2}}$ given in \cite[Lemma 40]{LWWZ25} when $\kappa = \Theta\rbra{1}$. 

\subsubsection{Quantum \texorpdfstring{$\ell_\alpha$}{ℓ\_α} distance}

\begin{problem}[Quantum $\ell_\alpha$ distance estimation]
    Given two unknown quantum states $\rho$ and $\sigma$, estimate the $\ell_\alpha$ distance
    \[
    \mathrm{T}_{\alpha}\rbra{\rho, \sigma} = \frac{1}{2} \rbra*{ \tr\rbra{\abs{\rho - \sigma}^{\alpha}} }^{1/\alpha}
    \]
    to within additive error $\varepsilon$. 
\end{problem}

It was noted in \cite{WGL+24} that quantum $\ell_\alpha$ distance can be estimated with sample complexity $\poly\rbra{r, \frac{1}{\varepsilon}}$, where the two unknown quantum states have rank $r$. 
The current best quantum sample upper bound for quantum $\ell_\alpha$ distance estimation is $\widetilde{O}\rbra{\frac{1}{\varepsilon^{3\alpha+2+\frac{2}{\alpha-1}}}}$ for constant $\alpha > 1$ due to \cite{LW25b}, removing the previous $r$-dependence. 
Moreover, a lower bound of $\Omega\rbra{\frac{1}{\varepsilon^2}}$ was also given in \cite{LW25b}.

By the samplizer, we can obtain an improved quantum sample upper bound. 

\begin{theorem} \label{thm:lalpha}
    For $\alpha > 1$, the sample complexity of quantum $\ell_\alpha$ distance estimation is $O\rbra{\frac{1}{\varepsilon^{2\alpha+2+\frac{2}{\alpha-1}}}}$. 
\end{theorem}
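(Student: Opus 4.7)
The plan is to apply the quantum samplizer in \cref{thm:samplizer}: any quantum query algorithm of query complexity $Q$ for a state-testing promise problem yields a sample algorithm of sample complexity $O(Q^2)$. Matching the target sample bound $O\rbra{\varepsilon^{-(2\alpha+2+2/(\alpha-1))}}$ requires a quantum query algorithm for $\ell_\alpha$ distance estimation with query complexity $O\rbra{\varepsilon^{-(\alpha+1+1/(\alpha-1))}}$. Following the template already established in \cref{thm:qs-td,thm:qs-fidelity-rank,thm:kappa-fidelity}, once such a query algorithm is located (or built), the samplizer delivers the sample bound in a single step.

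For the underlying query algorithm, the natural route is to combine block-encodings with QSVT. From purified oracles for $\rho$ and $\sigma$, standard constructions (\cite[Lemma 25]{GSLW19}) give $O(1)$-query block-encodings of $\rho$ and $\sigma$, whence a block-encoding of $(\rho-\sigma)/2$ follows by the linear-combination-of-block-encodings construction. Applying QSVT with a polynomial approximation of $|x|^\alpha$ on $[-1,1]$ of degree $\widetilde{O}\rbra{\varepsilon^{-1/(\alpha-1)}}$ produces a block-encoding of an approximation to $|\rho-\sigma|^\alpha$. A Hadamard-test-style trace estimator wrapped in amplitude estimation then recovers $\tr\rbra{|\rho-\sigma|^\alpha}$ to additive error $\varepsilon^\alpha$ using $O(\varepsilon^{-\alpha})$ outer iterations; taking the $1/\alpha$-th power converts this to an $\varepsilon$-additive estimate of $\mathrm{T}_\alpha(\rho,\sigma)$. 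Multiplying the polynomial degree by the amplitude-estimation depth gives the target query budget, and squaring via \cref{thm:samplizer} yields the claimed sample bound.

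The main obstacle is matching the exponent $\alpha+1+1/(\alpha-1)$ on the nose. The non-smoothness of $|x|^\alpha$ at the origin controls the polynomial degree: the best uniform polynomial approximation on $[-1,1]$ of degree $d$ has error $\Theta(d^{-(\alpha-1)})$, so driving the operator-norm error to $\varepsilon^\alpha$ forces $d = \widetilde{\Omega}(\varepsilon^{-1/(\alpha-1)})$, which is tight. Combined with the $\varepsilon^{-\alpha}$ overhead of amplitude estimation, this delivers the desired query cost. If the straightforward combination overshoots, the fallback is the refined two-regime analysis of \cite{LW25b} that separates small and large singular values of $\rho-\sigma$ before the polynomial transform, trading some factors to recover the exponent exactly.
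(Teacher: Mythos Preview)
Your high-level plan is exactly what the paper does: take a query upper bound for quantum $\ell_\alpha$ distance estimation and square it via the samplizer (\cref{thm:samplizer}). The paper's entire proof is one sentence citing \cite[Theorem~14]{LW25b} for the query bound $O\rbra{1/\varepsilon^{\alpha+1+1/(\alpha-1)}}$ and then invoking \cref{thm:samplizer}.

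Where you diverge is in trying to re-derive that query bound from scratch, and that sketch does not hold together. Two concrete issues. First, the approximation rate is misattributed: the best degree-$d$ uniform approximation of $\abs{x}^\alpha$ on $[-1,1]$ has error $\Theta(d^{-\alpha})$, not $\Theta(d^{-(\alpha-1)})$; the $d^{-(\alpha-1)}$ rate belongs to $\sgn(x)\,\abs{x}^{\alpha-1}$, which is in fact the function one should apply via QSVT so that the trace $\tr\rbra{\abs{\rho-\sigma}^\alpha}=\tr\rbra{(\rho-\sigma)\cdot\sgn(\rho-\sigma)\abs{\rho-\sigma}^{\alpha-1}}$ can be routed through samples of $\rho$ and $\sigma$ and remain dimension-free (a naive Hadamard test on a block-encoding of $\abs{\rho-\sigma}^\alpha$ would pick up a factor of $d$). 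Second, even under your stated rate, solving $d^{-(\alpha-1)}=\varepsilon^{\alpha}$ gives $d=\varepsilon^{-\alpha/(\alpha-1)}$, not $\varepsilon^{-1/(\alpha-1)}$; with the correct degree and the amplitude-estimation factor $\varepsilon^{-\alpha}$ one lands on the exponent $\alpha^2/(\alpha-1)=\alpha+1+1/(\alpha-1)$, whereas your arithmetic undershoots by $1$. The clean fix is to do what the paper does: cite \cite[Theorem~14]{LW25b} for the query bound and apply \cref{thm:samplizer}.
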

\begin{proof}
    To apply \cref{thm:samplizer}, we only have to note the quantum query upper bound $O\rbra{\frac{1}{\varepsilon^{\alpha+1+\frac{1}{\alpha-1}}}}$ for quantum $\ell_\alpha$ distance estimation given in \cite[Theorem 14]{LW25b}. 
\end{proof}

\subsubsection{Tsallis relative entropy}

The $\alpha$-Tsallis relative entropy of a quantum state $\rho$ with respect to another quantum state $\sigma$ is defined by
\[
\mathrm{D}_{\alpha}^{\textup{Tsa}}\rbra{\rho\|\sigma} = \frac{1}{1-\alpha} \rbra*{ 1 - \tr\rbra{\rho^\alpha \sigma^{1-\alpha}} }, \quad 0 < \alpha < 1.
\]
In particular, the case of $\alpha = \frac{1}{2}$ refers to the squared Hellinger distance \cite{LZ04}, defined by
\[
d_{\textup{H}}^2\rbra{\rho, \sigma} = \frac{1}{2} \mathrm{D}_{\frac{1}{2}}^{\textup{Tsa}}\rbra{\rho\|\sigma} =  1 - \tr\rbra*{\sqrt{\rho}\sqrt{\sigma}}. 
\]

\begin{problem}[Quantum Tsallis relative entropy estimation]
    Given two unknown quantum states $\rho$ and $\sigma$ of rank $r$, estimate the $\mathrm{D}_{\alpha}^{\textup{Tsa}}\rbra{\rho\|\sigma}$ to within additive error $\varepsilon$. 
\end{problem}

The current best quantum sample upper bound for $\alpha$-Tsallis relative entropy estimation is given by \cite{BGW25}, which is $O\rbra{\frac{r^{2+3\alpha}}{\varepsilon^{\frac{2}{\alpha} + \frac{3}{1-\alpha}}}\log^{2}\rbra{\frac{r}{\varepsilon}}}$ for $0 < \alpha < \frac{1}{2}$, $O\rbra{\frac{r^{5-3\alpha}}{\varepsilon^{\frac{2}{1-\alpha}+ \frac{3}{\alpha}}}\log^{2}\rbra{\frac{r}{\varepsilon}}}$ for $\frac{1}{2} < \alpha < 1$, and $O\rbra{\frac{r^{3.5}}{\varepsilon^{10}}\log^4\rbra{\frac{r}{\varepsilon}}}$ for $\alpha = \frac{1}{2}$. 

By the quantum samplizer, we can obtain a quantum sample upper bound for quantum $\alpha$-Tsallis relative entropy estimation.

\begin{theorem} \label{thm:tsallis-relative-upper}
    The sample complexity of quantum $\alpha$-Tsallis relative entropy estimation is $O\rbra{\frac{r^{2+2\alpha}}{\varepsilon^{\frac{2}{\alpha} + \frac{2}{1-\alpha}}}}$ for $0 < \alpha < \frac{1}{2}$, $O\rbra{\frac{r^{4-2\alpha}}{\epsilon^{\frac{2}{1-\alpha}+ \frac{2}{\alpha}}}}$ for $\frac{1}{2} < \alpha < 1$, and $O\rbra{\frac{r^3}{\varepsilon^8}\log^2\rbra{\frac{r}{\varepsilon}}}$ for $\alpha = \frac{1}{2}$, when the rank of the quantum state is $r$. 
\end{theorem}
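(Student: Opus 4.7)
The plan is to invoke the quantum samplizer (\cref{thm:samplizer}) and square an appropriate quantum query upper bound for $\alpha$-Tsallis relative entropy estimation, exactly in the same pattern as the preceding theorems in this section (e.g., \cref{thm:qs-td,thm:qs-fidelity-rank,thm:kappa-fidelity,thm:lalpha}). Since \cref{thm:samplizer} yields $\mathsf{S}(\mathcal{P}) = O((\mathsf{Q}(\mathcal{P}))^2)$, the target sample complexities follow once we exhibit query upper bounds of the ``square-root'' form, namely $O(r^{1+\alpha}/\varepsilon^{1/\alpha+1/(1-\alpha)})$ for $0<\alpha<1/2$, $O(r^{2-\alpha}/\varepsilon^{1/(1-\alpha)+1/\alpha})$ for $1/2<\alpha<1$, and $O(r^{3/2}/\varepsilon^4 \log(r/\varepsilon))$ for the squared Hellinger case $\alpha = 1/2$.

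The first step is to locate or extract the underlying quantum query algorithm in the purified query access model. The sample algorithms in \cite{BGW25} proceed by applying a prior samplizer (with a time-vs-samples overhead that produced extra $r$ and $\log$ factors); the internal query routine they invoke is essentially a block-encoding-based evaluation of $\tr(\rho^\alpha \sigma^{1-\alpha})$ using polynomial approximations of $x^\alpha$ and $x^{1-\alpha}$ together with the Hadamard test or amplitude estimation. Reading off that internal query subroutine gives precisely the square-root query complexities listed above (and for $\alpha=1/2$ the polynomial approximation of $\sqrt{x}$ on $[1/r,1]$ contributes the additional logarithmic factor). The second step is then a one-line application of \cref{thm:samplizer} to each case, producing the three stated bounds.

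The main obstacle is simply the bookkeeping of the query complexity in the regimes $0<\alpha<1/2$, $\alpha=1/2$, and $1/2<\alpha<1$, since the optimal polynomial-approximation degrees for $x^\alpha$ versus $x^{1-\alpha}$ on $[1/r,1]$ change which factor dominates as $\alpha$ crosses $1/2$; this is exactly the asymmetry that produces the two different formulas above and below $1/2$. Once the query upper bound is in hand, the rest of the proof is mechanical: just state ``to apply \cref{thm:samplizer}, we only have to note the quantum query upper bound $\widetilde{O}(\cdot)$ for quantum $\alpha$-Tsallis relative entropy estimation given in \cite{BGW25}'' (or attribute it to the relevant subroutine therein), and the desired sample upper bounds follow by squaring.
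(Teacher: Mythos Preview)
Your proposal is correct and follows exactly the same approach as the paper: invoke \cref{thm:samplizer} and square the quantum query upper bounds $O\rbra{r^{1+\alpha}/\varepsilon^{1/\alpha+1/(1-\alpha)}}$, $O\rbra{r^{2-\alpha}/\varepsilon^{1/(1-\alpha)+1/\alpha}}$, and $O\rbra{r^{1.5}/\varepsilon^4 \cdot \log(r/\varepsilon)}$ from \cite{BGW25} for the three regimes. The paper's proof is the one-line version of what you wrote; your additional commentary about extracting the query subroutine from the prior samplizer in \cite{BGW25} is unnecessary since the paper simply cites those query bounds directly as given by \cite{BGW25}.
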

\begin{proof}
    To apply \cref{thm:samplizer}, we only have to note the quantum query upper bound given by \cite{BGW25}, which is $O\rbra{\frac{r^{1+\alpha}}{\varepsilon^{\frac{1}{\alpha} + \frac{1}{1-\alpha}}}}$ for $0 < \alpha < \frac{1}{2}$, $O\rbra{\frac{r^{2-\alpha}}{\epsilon^{\frac{1}{1-\alpha}+ \frac{1}{\alpha}}}}$ for $\frac{1}{2} < \alpha < 1$, and $O\rbra{\frac{r^{1.5}}{\varepsilon^4}\log\rbra{\frac{r}{\varepsilon}}}$ for $\alpha = \frac{1}{2}$.
\end{proof}

\section{Discussion} \label{sec:discussion}

In this paper, we compile a list of quantum complexity bounds for property testing problems, thus exhibiting the power of quantum sample-to-query lifting. 
Using this powerful tool, we think that it is also possible to improve the quantum singular value transformation for quantum channels recently proposed by \cite{NRTM25}. 
In addition, the techniques of \cite{TWZ25} were recently shown to be useful in quantum state tomography \cite{PSTW25}. 

To conclude this paper, we ask several open questions for future research. 

\begin{enumerate}
    \item Can we prove lower bounds for the small-error case by the lifting method?

    In this paper, all quantum query lower bounds assume that the error probability is a small constant, e.g., $\leq \frac{1}{3}$. 
    However, it is not clear how to use the lifting method to derive quantum query lower bounds when the error probability is extremely small. 
    By contrast, the quantum polynomial method \cite{BBC+01} has been shown to be useful for the small-error case, e.g., \cite{BCdWZ99,MdW23}. 

    \item Can we prove a matching quantum query lower bound for solving systems of linear equations by the lifting method?

    For a (sparse) matrix $A$ with $I/\kappa \leq A \leq I$ for some $\kappa > 0$, preparing the pure state $\ket{x} \propto A^{-1} \ket{0}$ is known to require quantum query complexity $\Omega\rbra{\kappa}$ \cite{HHL09,OD21,HK21,WZ24b}. 
    For example, the approach in \cite{OD21} uses the lower bound for computing partial Boolean functions in \cite{NW99}, which traces back to the quantum polynomial method \cite{BBC+01}. 
    In comparison, a proof using the lifting method will provide an information-theoretic understanding of solving systems of linear equations. 

    \item Can we show prove a matching sample lower bound of $\Omega\rbra{d^2}$ for von Neumann entropy estimation, trace distance estimation, and fidelity estimation?

    The sample complexity of the uniformity testing of probability distributions is known to be $\Theta\rbra{\sqrt{d}}$ \cite{Pan08,CDVV14}, whereas its closeness estimation counterparts were shown to have a near-quadratic blowup. 
    For example, the sample complexities of Shannon entropy estimation \cite{VV11a,JVHW15,WY16} and total variation distance estimation \cite{VV17} are known to be $\Theta\rbra{\frac{d}{\log\rbra{d}}}$.
    Here, for convenience, we assume the precision parameter $\varepsilon = \Theta\rbra{1}$ is a small enough constant. 
    In the quantum case, the sample complexity of the mixedness testing of quantum states (the quantum counterpart of the uniformity testing of probability distributions) is known to be $\Theta\rbra{d}$ \cite{OW21}. 
    As also suggested in \cite{Wri23}, we therefore conjecture that a similar quadratic blowup also appears in the quantum case. 

    \begin{conjecture} \label{conj:s-vN}
        The quantum sample complexities of von Neumann entropy estimation, trace distance estimation, and fidelity estimation are $\Omega\rbra{d^2}$. 
    \end{conjecture}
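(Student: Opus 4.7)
The plan is to apply Le Cam's two-point method combined with Schur-Weyl duality. Specifically, I would construct two unitarily invariant ensembles $\mathcal{E}_0, \mathcal{E}_1$ of $d$-dimensional mixed states such that (a) the expected von Neumann entropies differ by $\Omega\rbra{1}$, and (b) the averaged $n$-fold tensor products $\overline{\rho}_b^{\rbra{n}} := \mathbb{E}_{\rho \sim \mathcal{E}_b}\sbra{\rho^{\otimes n}}$ satisfy $\Abs{\overline{\rho}_0^{\rbra{n}} - \overline{\rho}_1^{\rbra{n}}}_1 \leq 1/3$ for all $n = o\rbra{d^2}$. Since an algorithm with $n$ samples from a hidden $\rho \sim \mathcal{E}_b$ effectively sees only the state $\overline{\rho}_b^{\rbra{n}}$, this would imply the claimed $\Omega\rbra{d^2}$ sample lower bound for von Neumann entropy estimation; the analogous bounds for trace distance and fidelity estimation follow by taking $\mathcal{E}_1$ to be a point mass at $I/d$ and using Fannes-type continuity to convert the entropy gap into a gap for $\mathrm{T}\rbra{\cdot, I/d}$ or $\mathrm{F}\rbra{\cdot, I/d}$.

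First I would use Schur-Weyl duality to reduce step (b) to a combinatorial question. When $\mathcal{E}_b$ is unitarily invariant, $\overline{\rho}_b^{\rbra{n}}$ commutes with the diagonal $U^{\otimes n}$ action, so it is block-diagonal in the Schur basis with the block indexed by $\lambda \vdash n$ (of height $\leq d$) carrying total weight $\Pr_{\rho \sim \mathcal{E}_b}\sbra{\mathrm{SW}_n\rbra{\rho} = \lambda}$, where $\mathrm{SW}_n\rbra{\rho}$ denotes the outcome of the weak Schur-Weyl measurement on $\rho^{\otimes n}$. Consequently $\Abs{\overline{\rho}_0^{\rbra{n}} - \overline{\rho}_1^{\rbra{n}}}_1$ equals the total variation between two distributions over Young diagrams, each of which is a mixture, over the random spectrum of $\rho$, of the Schur-Weyl distribution. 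This is the quantum analogue of the Poissonized multinomial picture that underlies classical entropy lower bounds.

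Next I would build the hidden spectral distributions behind $\mathcal{E}_b$ by adapting the Valiant-Valiant moment-matching construction for classical Shannon entropy estimation. Concretely, pick two distributions $P_0, P_1$ on eigenvalue lists whose expected power-sum symmetric polynomials $\mathbb{E}\sbra{\sum_i \lambda_i^j}$ coincide for $j = 1, \dots, k$ with $k = \widetilde{\Omega}\rbra{d}$, derived from the Chebyshev extremal polynomial for $-x\ln\rbra{x}$ on $\sbra{1/d, 1}$, while the expected entropies differ by $\Omega\rbra{1}$. Taking $\mathcal{E}_b$ to be a Haar-randomly rotated state with spectrum drawn from $P_b$, one can then invoke Keyl's asymptotic formula $\dim\rbra{\lambda} s_\lambda\rbra{\mathrm{spec}\rbra{\rho}} \approx \exp\rbra{-n \mathrm{KL}\rbra{\lambda/n \,\Vert\, \mathrm{spec}\rbra{\rho}}}$ to reduce the Schur-Weyl mixture, up to controlled error, to a classical multinomial mixture in $\lambda/n$, where matching power-sum moments up to degree $k$ forces $n = \widetilde{\Omega}\rbra{d^2}$ by the approach of \cite{JVHW15,WY16}.

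The main obstacle will be making the Keyl-type reduction quantitatively tight. The asymptotic only holds for $\lambda/n$ inside a $1/\sqrt{n}$ neighbourhood of $\mathrm{spec}\rbra{\rho}$ (cf.\ \cite{OW21}), and the Schur polynomials depend non-polynomially on $\lambda$, so the clean polynomial-approximation machinery for multinomial mixtures does not transfer directly. One must also control the contribution of atypical Young diagrams carefully and track polylogarithmic losses, since \cref{conj:s-vN} posits $\Omega\rbra{d^2}$ rather than $\widetilde{\Omega}\rbra{d^2}$. Bridging this gap -- perhaps via a more direct character-theoretic comparison of the two Schur-Weyl mixtures that avoids passing through a classical multinomial surrogate altogether -- appears to be the key new ingredient needed beyond the classical Valiant-Valiant template.
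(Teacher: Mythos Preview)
The statement you are attempting to prove is labeled \texttt{conjecture} in the paper, and the paper does \emph{not} supply a proof. It appears in the Discussion section as an explicit open problem: the authors motivate it by analogy with the classical situation (uniformity testing $\Theta\rbra{\sqrt{d}}$ vs.\ Shannon entropy / total variation estimation $\Theta\rbra{d/\log d}$, and quantum mixedness testing $\Theta\rbra{d}$), cite \cite{Wri23} as further support, and then state \cref{prop:conj} as a \emph{conditional} consequence. There is therefore no ``paper's own proof'' to compare your proposal against.

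That said, your plan is a sensible research outline for attacking the conjecture and is broadly aligned with what experts would expect the eventual argument to look like: unitarily invariant ensembles so that the optimal test depends only on Schur--Weyl statistics, and a moment-matching construction on spectra in the spirit of Valiant--Valiant / Wu--Yang transported to the Young-diagram setting. You have also correctly identified the real difficulty: the map from spectra to Schur--Weyl distributions is governed by Schur polynomials $s_\lambda$ rather than monomials, so matching power sums $\sum_i \lambda_i^j$ does \emph{not} automatically match the relevant moments of the Schur--Weyl law, and Keyl's large-deviation approximation is too coarse to recover the precise $\Omega\rbra{d^2}$ exponent without loss. A genuine proof would likely need either a direct character-theoretic indistinguishability argument (e.g., controlling $\sum_\lambda \dim\rbra{\lambda}\,\abs{s_\lambda\rbra{\mathrm{spec}_0} - s_\lambda\rbra{\mathrm{spec}_1}}$ via determinantal / free-probability tools, as in the tight bounds of \cite{OW21,HHJ+17}) or a new polynomial-approximation lower bound tailored to the Schur basis rather than the power-sum basis. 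Until one of those ingredients is supplied, your proposal remains a plausible but incomplete roadmap, which is exactly the status the paper assigns to \cref{conj:s-vN}.
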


    \cref{conj:s-vN} means that the current best sample upper bounds for von Neumann entropy estimation \cite{AISW20,BMW16,WZ25}, trace distance estimation \cite{WZ24}, and fidelity estimation \cite{UNWT25} are optimal. 
    Moreover, if \cref{conj:s-vN} is true, then by quantum sample-to-query lifting, we can further prove matching query lower bounds for these problems. 

    \begin{proposition} \label{prop:conj}
        If \cref{conj:s-vN} is true, then the quantum query complexities of von Neumann entropy estimation, trace distance estimation, and fidelity estimation are $\Omega\rbra{d}$. 
    \end{proposition}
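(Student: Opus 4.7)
The plan is to invoke \cref{thm:main} as a black box and apply it term-by-term to each of the three estimation problems. Specifically, let $\mathcal{P}_{\textup{vN}}$, $\mathcal{P}_{\textup{td}}$, and $\mathcal{P}_{\textup{F}}$ denote promise problems obtained from (constant-precision) von Neumann entropy estimation, trace distance estimation, and fidelity estimation, respectively, by fixing a gap instance that separates a ``yes'' value from a ``no'' value of the relevant quantity by a constant. Assuming \cref{conj:s-vN}, the sample lower bounds $\mathsf{S}\rbra{\mathcal{P}_{\textup{vN}}}, \mathsf{S}\rbra{\mathcal{P}_{\textup{td}}}, \mathsf{S}\rbra{\mathcal{P}_{\textup{F}}} = \Omega\rbra{d^2}$ are available for free.

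I would then simply feed each of these into \cref{thm:main}, which states that $\mathsf{Q}\rbra{\mathcal{P}} = \Omega\rbra{\sqrt{\mathsf{S}\rbra{\mathcal{P}}}}$ for any quantum state testing promise problem $\mathcal{P}$. This immediately yields
\[
    \mathsf{Q}\rbra{\mathcal{P}_{\textup{vN}}}, \mathsf{Q}\rbra{\mathcal{P}_{\textup{td}}}, \mathsf{Q}\rbra{\mathcal{P}_{\textup{F}}} = \Omega\rbra*{\sqrt{d^2}} = \Omega\rbra{d}.
\]
Since each testing problem is a special case of the corresponding estimation problem (decide whether the quantity is $\leq a$ or $\geq b$ for some constants $a < b$), the same query lower bound transfers to the estimation versions.

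There is essentially no obstacle here beyond the conjecture itself: the proof is a routine one-line application of the quantum sample-to-query lifting theorem. The only thing to be careful about is that the hard instances witnessing the conjectured $\Omega\rbra{d^2}$ sample lower bounds should be expressible as genuine promise problems (i.e., ``yes''/``no'' instance families with a constant gap), so that they fall within the scope of \cref{thm:main}; but this is automatic from how sample lower bounds are normally constructed (via two indistinguishable families of states), so no extra work is required.
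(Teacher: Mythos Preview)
Your proposal is correct and follows exactly the approach the paper intends: the paper states the proposition without a formal proof, merely remarking that ``if \cref{conj:s-vN} is true, then by quantum sample-to-query lifting, we can further prove matching query lower bounds,'' which is precisely the one-line application of \cref{thm:main} you describe. Your additional care about casting the estimation problems as promise problems is a reasonable elaboration but not something the paper spells out.
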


    Notably, \cref{prop:conj} means that if \cref{conj:s-vN} is true, then the current best query upper bounds for von Neumann entropy estimation \cite{GL20,WGL+24}, trace distance estimation \cite{WZ24}, and fidelity estimation \cite{UNWT25} are also optimal. 
\end{enumerate}

\addcontentsline{toc}{section}{References}

\bibliographystyle{alphaurl}
\bibliography{main}

\end{document}